\newcommand{\changed}[2]{#2}
\newcommand{\pn}{\mathsf{pn}}
\newcommand{\fv}{\mathsf{fv}}
\newcommand{\True}{\mathsf{true}}
\newcommand{\vx}{\overline}
\newcommand{\proc}[1]{\mathsf{#1}}
\newcommand{\val}[1]{\emph{#1}}
\newcommand{\Disjoint}[2]{#1\,\#\,#2}
\newcommand{\BCom}[4]{#1.#2 \to \mathsf{val}\ #3.#4 } 
\newcommand{\BRecv}[3]{#1 \rightsquigarrow #2.#3}
\newcommand{\BRecvSelect}[3]{#1 \rightsquigarrow #2[#3]}
\newcommand{\BSelect}[3]{#1 \to #2[#3]}
\newcommand{\BPure}[3]{\mathsf{val}\ #2.#1\ \mathsf{=}\ #3}
\newcommand{\BIf}[4]{\mathsf{if}\,{\At{#1}{#2}}\,\mathsf{then}\,{#3}\,\mathsf{else}\,{#4}}
\newcommand{\BCall}[2]{#1(#2)}
\newcommand{\BCalling}[4]{#1.\, #2(#3)\,\{\,#4\,\}}
\newcommand{\CCom}[5]{#5: \BCom{#1}{#2}{#3}{#4} } 
\newcommand{\CRecv}[4]{#4: \BRecv{#1}{#2}{#3}}
\newcommand{\CSelect}[4]{#4: \BSelect{#1}{#2}{#3}}
\newcommand{\CRecvSelect}[4]{#4: \BRecvSelect{#1}{#2}{#3}}
\newcommand{\CPure}[4]{#4: \BPure{#1}{#2}{#3}}
\newcommand{\CIf}[5]{#5: \BIf{#1}{#2}{#3}{#4}}
\newcommand{\CCall}[3]{#3: \BCall{#1}{#2}}
\newcommand{\CCalling}[5]{#5: \BCalling{#1}{#2}{#3}{#4}}
\newcommand{\At}[2]{#1 \MVAt #2}
\newcommand{\Tok}{\mathsf{t}}
\newcommand{\Line}{l}
\newcommand{\p}{\mathsf{p}}
\newcommand{\q}{\mathsf{q}}
\renewcommand{\r}{\mathsf{r}}
\newcommand{\NextToken}{\mathsf{nextToken}}
\newcommand{\TokenType}{\mathsf{Token}}
\newcommand{\CIfM}[3]{#3: \mathsf{if}\,{\At{#1}{#2}}}
\newcommand{\CThen}{\,\mathsf{then}\,}
\newcommand{\CElse}{\,\mathsf{else}\,}
\newcommand{\Block}[1]{\{\,#1\,\}}
\newcommand{\CCfg}[3]{\big\langle\, #1,\ #2,\ #3 \,\big\rangle}
\newcommand{\CStep}[1]{\xrightarrow{\,#1\,}}
\newcommand{\CEval}[3]{#1 \vdash #2 \Downarrow #3}
\newcommand{\PRecv}[3]{?_{#2,#3}\,#1}
\newcommand{\PSet}[2]{\mathsf{val}\ {#1}\ \mathsf{=}\ {#2}}
\newcommand{\PSend}[4]{{#1}\,!_{#2,#3}\,{#4}}
\newcommand{\PSelect}[4]{{#1}\oplus_{#2,#3}{#4}}
\newcommand{\PBranch}[3]{\binampersand\{(#1) \Rightarrow #2\}_{#3}}
\newcommand{\PIf}[3]{\mathsf{if}\,{#1}\,\mathsf{then}\,{#2}\,\mathsf{else}\,{#3}}
\newcommand{\Par}{\ |\ }
\newcommand{\PCall}[4]{#3,#4: #1(#2)}
\newcommand{\PIfM}[1]{\mathsf{if}\,{#1}}
\newcommand{\PThen}{\,\mathsf{then}\,}
\newcommand{\PElse}{\,\mathsf{else}\,}
\newcommand{\BBranch}{\binampersand}
\newcommand{\PCfg}[3]{\CCfg{#1}{#2}{#3}}
\newcommand{\PStep}[1]{\CStep{#1}}
\newcommand{\EPP}[2]{\llbracket #1 \rrbracket_{#2}}
\newcommand{\Merge}{\sqcup}
\newcommand{\Extends}{\sqsupseteq}
\newcommand{\wf}{\,\checkmark}
\newcommand{\dom}{\mathsf{dom}}
\newcommand{\keys}{\mathsf{keys}}
\newcommand{\key}{\mathsf{key}}
\newcommand{\stats}{\mathsf{stats}}
\newenvironment{myalign*}{\ifvmode\else\hfil\null\linebreak\fi
  \hspace*{-\leftmargin}\minipage\textwidth
  \setlength{\abovedisplayskip}{0pt}%
  \setlength{\abovedisplayshortskip}{\abovedisplayskip}%
  \start@align\@ne\st@rredtrue\m@ne}%
{\endalign\endminipage\linebreak}
\title{Ozone: Fully Out-of-Order Choreographies}
\author{Dan Plyukhin}{University of Southern Denmark, Denmark }{dplyukhin@imada.sdu.dk}{https://orcid.org/0009-0004-8712-7895}{}
\author{Marco Peressotti}{University of Southern Denmark, Denmark}{peressotti@imada.sdu.dk}{https://orcid.org/0000-0002-0243-0480}{}
\author{Fabrizio Montesi}{University of Southern Denmark, Denmark}{fmontesi@imada.sdu.dk}{https://orcid.org/0000-0003-4666-901X}{}
\authorrunning{D. Plyukhin, M. Peressotti, and F. Montesi} 
\keywords{Choreographic programming, Asynchrony, Concurrency.} 
\begin{document}

\maketitle

\begin{abstract}
    Choreographic programming is a paradigm for writing distributed applications. It allows programmers to write a single program, called a choreography, that can be compiled to generate correct implementations of each process in the application. Although choreographies provide good static guarantees, they can exhibit high latency when messages or processes are delayed. This is because processes in a choreography typically execute in a fixed, deterministic order, and cannot adapt to the order that messages arrive at runtime. In non-choreographic code, programmers can address this problem by allowing processes to execute out of order---for instance by using futures or reactive programming. However, in choreographic code, out-of-order process execution can lead to serious and subtle bugs, called \emph{communication integrity violations (CIVs)}.

    In this paper, we develop a model of choreographic programming for out-of-order processes that guarantees absence of CIVs and deadlocks. As an application of our approach, we also introduce an API for safe non-blocking communication via futures in the choreographic programming language Choral. The API allows processes to execute out of order, participate in multiple choreographies concurrently, and to handle unordered data messages. We provide an illustrative evaluation of our API, showing that out-of-order execution can reduce latency and increase throughput by overlapping communication with computation.
\end{abstract}

\section{Introduction}

Choreographic programming~\cite{montesi2013phd} is a paradigm that simplifies writing distributed applications. In contrast to a traditional development style, where one implements a separate program for each type of process in the system, choreographic programming allows a programmer to define the behaviors of all processes together in a single program called a \emph{choreography}~\cite{montesi2023}. Through \emph{endpoint projection (EPP)}, a choreography can be compiled to generate the programs implementing each process that would otherwise need to be written by hand. Aside from convenience, the advantage of this approach is that certain classes of bugs (such as deadlocks) are impossible \emph{by construction}~\cite{carbone2013}. Choreographic programming has been applied to popular languages such as Java~\cite{giallorenzo2023} and Haskell~\cite{shen2023a}, and has been used to implement real-world protocols such as IRC~\cite{lugovic2023}.

\begin{figure}
    \centering
    \begin{subfigure}{0.34\textwidth} 
        \small
\begin{myalign*}
&1: \BCom{\p_1}{produce()}{\q}{x_1};\\
&2: \BCom{\p_2}{produce()}{\q}{x_2};\\
&3: \BCom{\q}{compute(\q.x_1)}{\p_1}{y_1};\\
&4: \BCom{\q}{compute(\q.x_2)}{\p_2}{y_2}
\end{myalign*}

        \subcaption{Choreography}\label{fig:concurrent-producers-chor}
    \end{subfigure}
    \begin{subfigure}{0.3\textwidth}
        \includegraphics[width=0.60\textwidth]{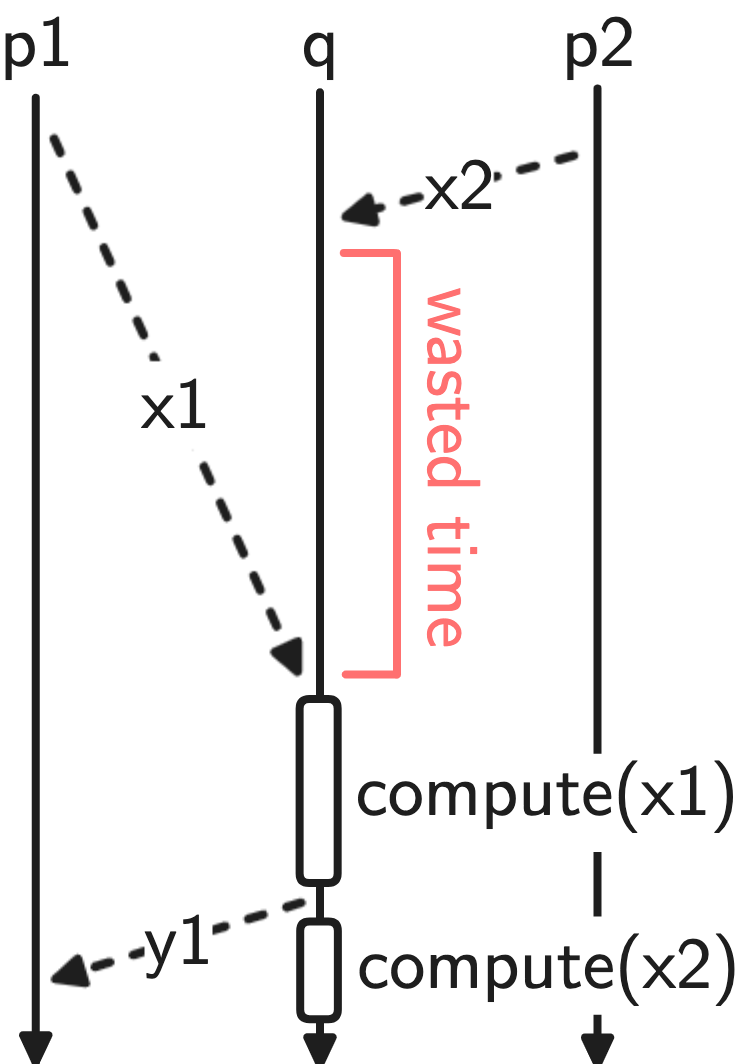}

        \subcaption{In-order execution}\label{fig:concurrent-producers-actual}
    \end{subfigure}
    \begin{subfigure}{0.3\textwidth}
        \includegraphics[width=0.60\textwidth]{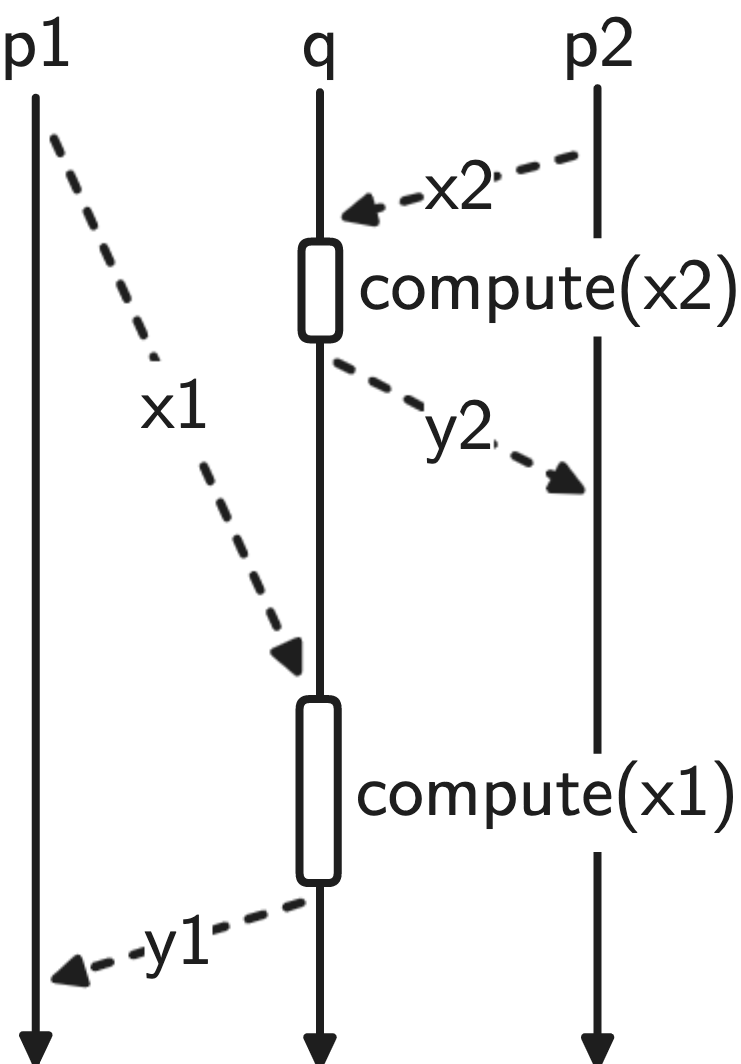}

        \subcaption{Out-of-order execution}\label{fig:concurrent-producers-ideal}
    \end{subfigure}

    \caption{A choreography where out-of-order execution can improve performance.}%
    \label{fig:concurrent-producers}
\end{figure}


Processes in choreographic programs typically execute in a fixed, sequential order. Consider \Cref{fig:concurrent-producers-chor}, which shows a simple choreography performed by processes $\p_1$, $\p_2$, and $\q$. The syntax $\BCom{\p}{e}{\q}{x}$ means ``$\p$ evaluates expression $e$ and sends the result to $\q$, which binds the result to a local variable $x$''. Under the usual semantics for choreographies, $\p_1.produce()$ and $\p_2.produce()$ can be evaluated in parallel because $\p_1$ and $\p_2$ are distinct processes. However, $\q$ must execute each step sequentially: first $\q$ waits until it receives $x_1$; then $\q$ waits until it receives $x_2$; and only then can $\q$ send $\p_1$ the result of processing $x_1$.

\Cref{fig:concurrent-producers-actual} depicts an execution of the choreography, showing the drawback of a fixed processing order: if $x_2$ arrives before $x_1$, $\q$ wastes time waiting for $x_1$ instead of processing $x_2$. Ideally, $\q$ would evaluate $compute(\q.x_1)$ and $compute(\q.x_2)$ according to the arrival order of $x_1$ and $x_2$, as shown in \Cref{fig:concurrent-producers-ideal}. Assuming these two expressions are safe to reorder, such an optimization would allow $\q$ to overlap computation with communication and reduce the average latency experienced by $\p_1$ and $\p_2$. We are therefore interested in studying choreographic programming models where processes may execute some statements out of order, or even concurrently. We call such processes \emph{out-of-order processes} and the corresponding choreographies \emph{(fully) out-of-order choreographies}.


Processes with out-of-order features have been considered in prior work. Process models such as the actor model~\cite{agha1990a} or the $\pi$-calculus with delayed receive~\cite{merro2004} are expressive enough to implement the behavior in \Cref{fig:concurrent-producers-ideal}, but these models lack the static guarantees of choreographic programming. More recently, Montesi gave a semantics for \emph{nondeterministic choreographies}~\cite{montesi2023}, i.e., choreographies with nondeterministic choice. Nondeterministic choreographies can implement the execution in \Cref{fig:concurrent-producers-ideal}, but they are unwieldy when it comes to expressing out-of-order process execution: they require explicitly writing all possible schedulings, lest getting a suboptimal program. For our example, we would get a choreography twice the size of the one in \Cref{fig:concurrent-producers-chor} (cf.~\Cref{sec:related-work}). Consequently, nondeterministic choreographies are both hard to write and brittle---a typical drawback when using syntactic operators to express interleavings. This raises the question:

\begin{quote}
    \emph{Can we develop a choreographic programming model for out-of-order processes that marries the simple syntax of \Cref{fig:concurrent-producers-chor} with the semantics of \Cref{fig:concurrent-producers-ideal}?}
\end{quote}
The simplicity of this problem is deceptive, since common-sense approaches can lead to pernicious compiler bugs. For instance, consider \Cref{fig:forwarding-data}: two microservices $\proc {cs}$ (a ``content service'') and $\proc {ks}$ (a ``key service'') send values $txt, key$ to a server $\proc s$ (lines 1 and 2). The server in turn forwards those values to a client $\proc c$ (lines 3 and 4). Notice that if $\proc s$ is an out-of-order process, then it can forward the results in any order, as shown in \Cref{fig:forwarding-data-1,fig:forwarding-data-2}. This causes a problem for $\proc c$: since both $txt$ and $key$ were sent by $\proc s$, and since both values have the same type ($\mathsf{String}$), $\proc c$ has no way to determine whether the first message contains $txt$ (as in \Cref{fig:forwarding-data-1}) or $key$ (as in \Cref{fig:forwarding-data-2}). This problem is easy for compiler writers to miss, leading to disastrous nondeterministic bugs where variables are bound to the wrong values. We call such bugs \emph{communication integrity violations (CIVs)}.

In this paper, we investigate CIVs and other complications that arise from mixing choreographies with out-of-order processes. Our investigation brings forward necessary elements that are missing from previous research on choreographic programming~\cite{montesi2023} and the neighbouring approach of multiparty session types (which use simpler choreographies without data or computation)~\cite{honda2016,DH12,APN17,VHEZ21}.  Although the problem in \Cref{fig:forwarding-data} can easily be solved by attaching static information (such as variable names) to each message, we show in \Cref{sec:motivation} that a general solution requires mixing static and dynamic information, replicated across multiple processes. We also find that formalizing fully out-of-order choreographies requires several features uncommon in standard choreographic programming models, such as scoped variables and an expanded notion of well-formedness.

We make the following key contributions:
\begin{enumerate}
    \item We present $O_3$, a formal model for asynchronous, fully out-of-order choreographies.\footnote{The name $O_{3}$ derives from our model being {\it O\/}ut {\it O\/}f {\it O\/}rder.} Our model prevents CIVs by attaching \emph{integrity keys} to messages. A nice consequence of our solution is that messages no longer need to be delivered in FIFO order. We prove that $O_3$ choreographies ensure deadlock-freedom (\Cref{thm:progress}) and communication integrity (\Cref{thm:integrity}).
    \item We present an EPP algorithm to project $O_3$ choreographies into out-of-order processes. We prove an operational correspondence theorem, which states that a choreography and its projection evolve in lock-step (\Cref{thm:epp}). The key to making this proof tractable is a new notion of well-formedness that formalizes a communication integrity invariant. The theorem implies that a correct compiler will not generate code with deadlocks or CIVs.
    \item We demonstrate the applicability of our approach by developing \emph{Ozone}, a non-blocking communication API for the choreographic programming language Choral~\cite{giallorenzo2023}.\footnote{The name \emph{Ozone} derives from $O_{3}$ being the chemical formula for ozone.} Choreographies implemented with Ozone can use futures~\cite{baker1977} to process messages concurrently (as in \Cref{fig:concurrent-producers-ideal}) without violating communication integrity.
    We evaluate Ozone with microbenchmarks and a model serving pipeline~\cite{wang21}.
    Our results confirm that out-of-order execution can dramatically reduce latency and increase throughput for choreographies, putting the performance of hand-written reactive processes within reach of choreographic programmers (we compare to actors written in the popular \emph{Akka} framework~\cite{akka}).
\end{enumerate}

The paper is structured as follows. \Cref{sec:motivation} explores CIVs and other issues in out-of-order choreography models. \Cref{sec:choreographies} presents our formal model $O_3$. \Cref{sec:processes} presents our model for out-of-order processes and our EPP. \Cref{sec:choral} presents our non-blocking API for Choral and our evaluation. We conclude with related work in \Cref{sec:related-work} and discussion in \Cref{sec:conclusion}.

\begin{figure}
    \begin{subfigure}{0.33\textwidth}  
        \small
\begin{myalign*}
&1: \BCom{\proc {cs}}{getText()}{\proc s}{txt};\\
&2: \BCom{\proc {ks}}{getKey()}{\proc s}{key};\\
&3: \BCom{\proc s}{txt}{\proc c}{txt};\\
&4: \BCom{\proc s}{key}{\proc c}{key};\\
&5: \proc{c}.display(txt);\\
&6: \proc{c}.decrypt(key)
\end{myalign*}
        \caption{Choreography}\label{fig:forwarding-data-chor}
    \end{subfigure}\hfill
    \begin{subfigure}{0.33\textwidth}
        \includegraphics[width=\textwidth]{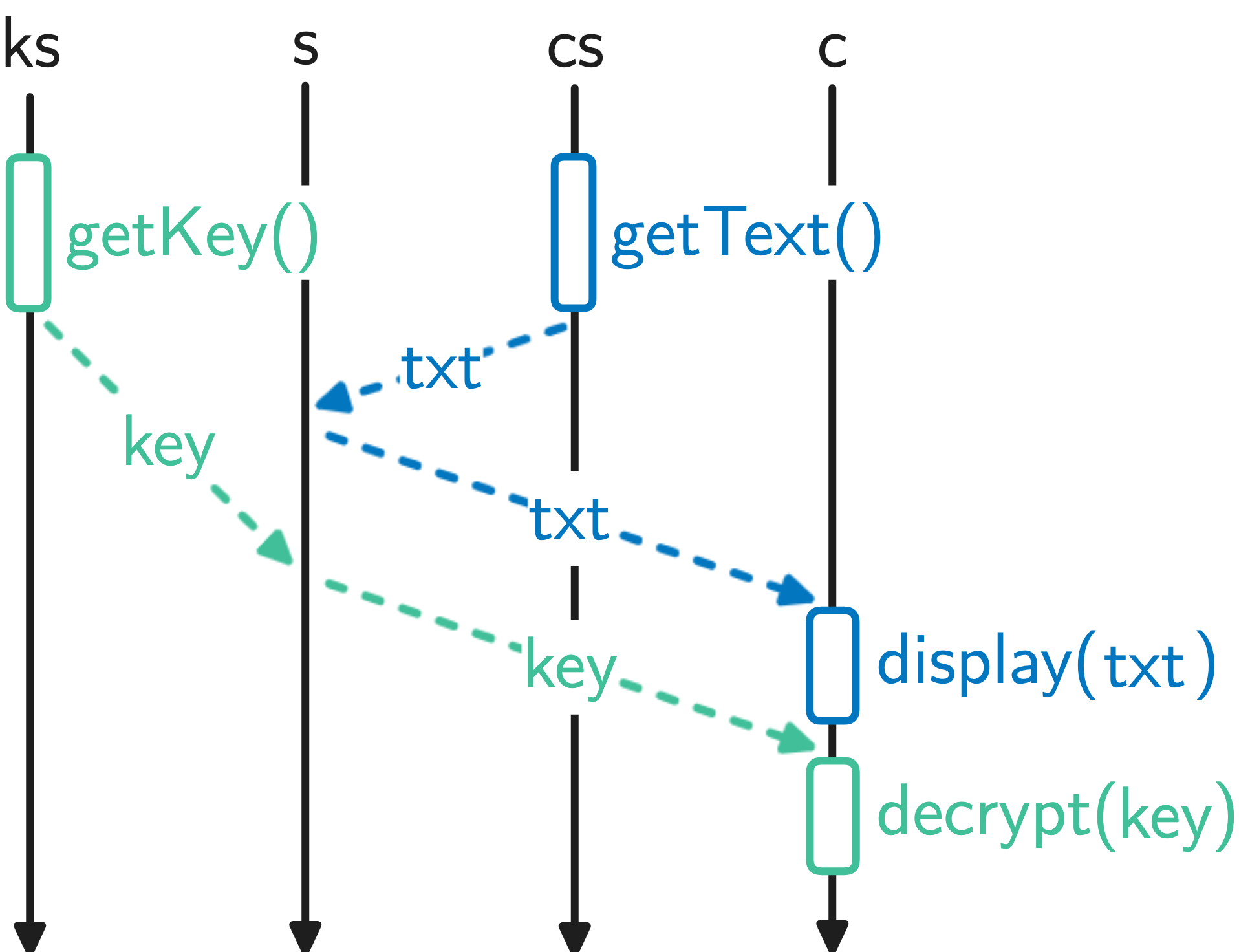}
        \caption{Safe execution}\label{fig:forwarding-data-1}
    \end{subfigure}\hfill
    \begin{subfigure}{0.33\textwidth}
        \includegraphics[width=\textwidth]{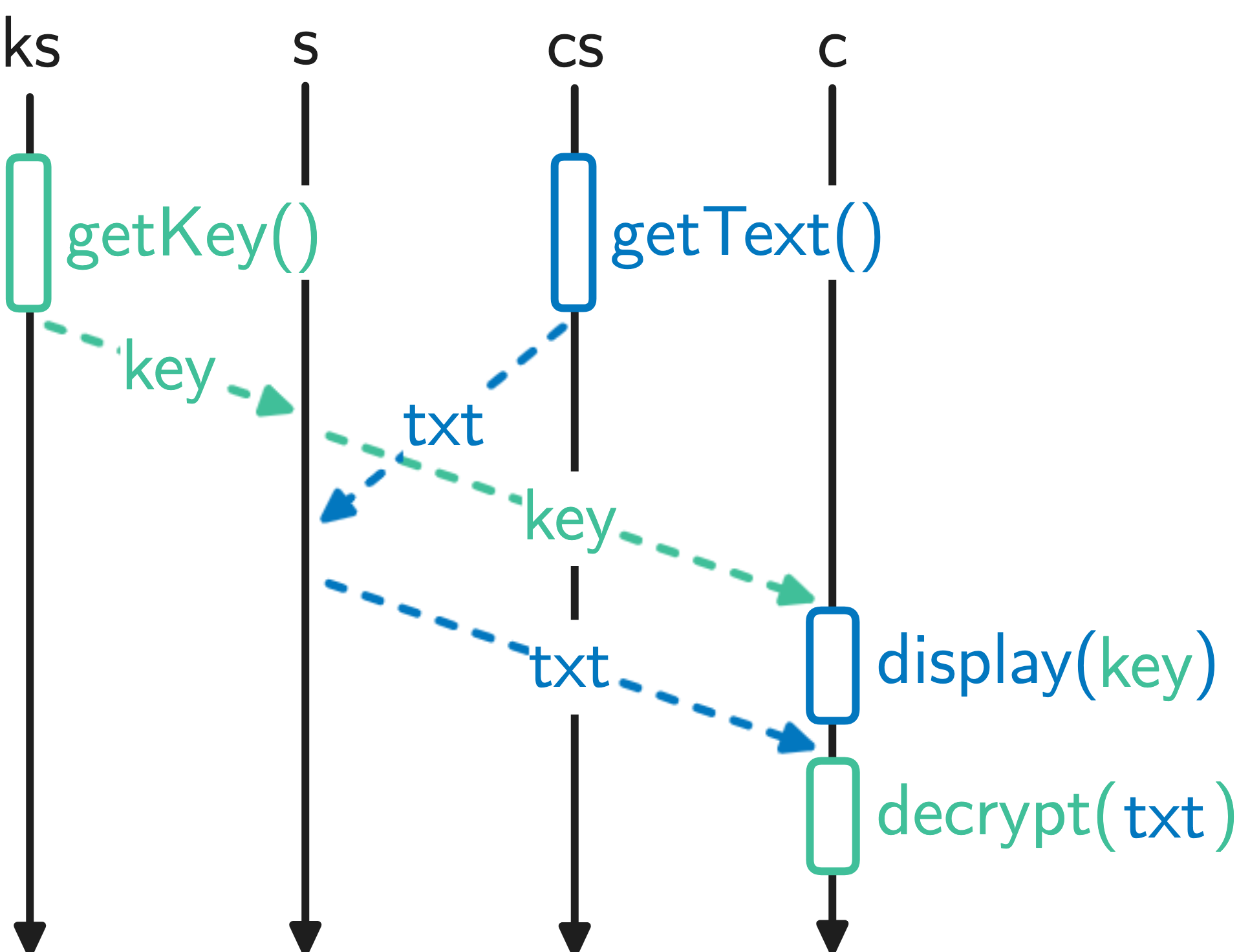}
        \caption{Buggy execution}\label{fig:forwarding-data-2}
    \end{subfigure}
    \caption{A choreography where na\"ive out-of-order execution is unsafe. Process $\proc c$ cannot distinguish whether the first message it receives represents \emph{key} or \emph{txt}.}
    \label{fig:forwarding-data}
\end{figure}

\section{Overview}\label{sec:motivation}

In this section we explore the challenges that must be solved to develop a fully out-of-order choreography model, along with our approach.

\subsection{Intraprocedural Integrity}\label{sec:intraprocedural-civs}

Informally, communication integrity is the property that messages communicated in a choreography are bound to the correct variables. To ensure this, processes might need extra information; in \Cref{fig:forwarding-data}, process $\proc c$ needs to know which value will arrive first: $txt$ or $key$.

A traditional solution would be for $\proc s$ to send a \emph{selection} to $\proc c$. Selections are communications of constant values, used in choreography languages when one process makes a control flow decision that other processes must follow. \Cref{fig:intraprocedural-disambiguation-selection} shows how $\proc s$ could send the selection $\mathsf{[KEY]}$ to inform $\proc c$ that $key$ will arrive before $txt$. Indeed, this is the approach used by nondeterministic choreographies~\cite{montesi2023}. However, selections impose overhead: any time nondeterminism could occur, the programmer would need to insert new selection messages. These extra messages would have both a cognitive cost for the programmer (as programs become littered with selections) and a runtime cost in the form of an extra message.

\begin{figure}
    \centering
    \begin{subfigure}{0.35\textwidth}
        \includegraphics[width=\textwidth]{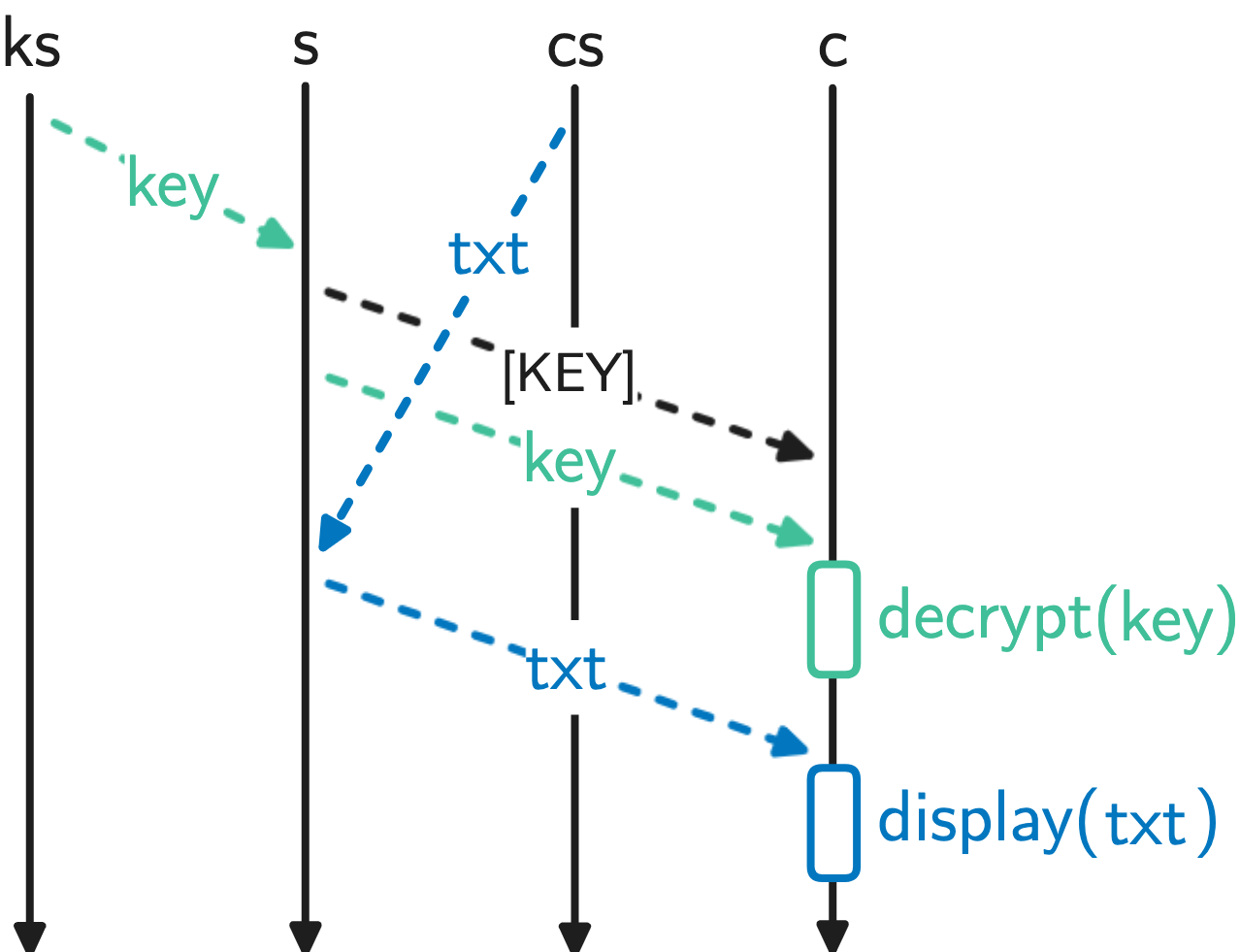}
        \caption{Using selections}\label{fig:intraprocedural-disambiguation-selection}
    \end{subfigure}
    \begin{subfigure}{0.35\textwidth}
        \includegraphics[width=\textwidth]{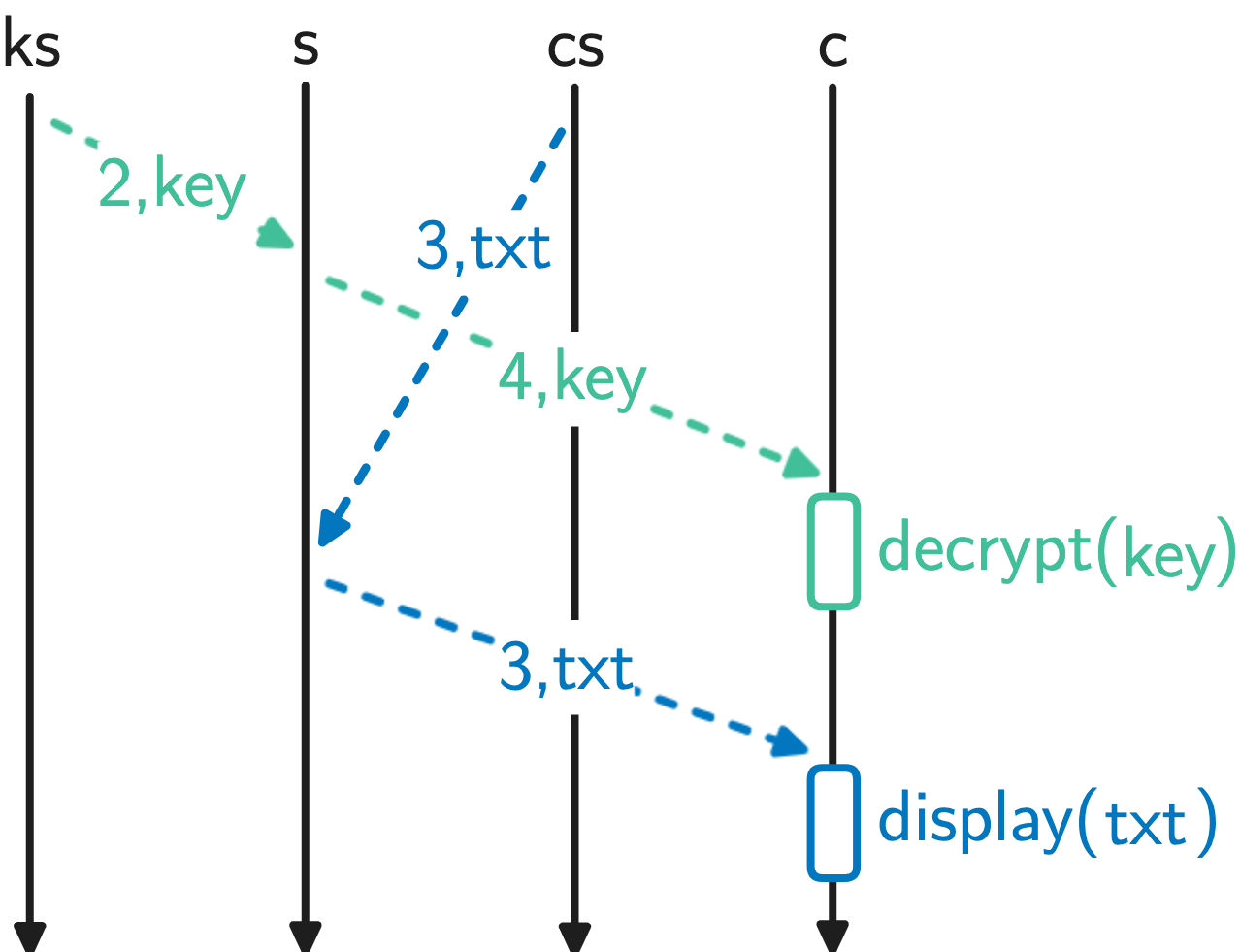}
        \caption{Using integrity keys}\label{fig:intraprocedural-disambiguation-keys}
    \end{subfigure}
    \caption{Two approaches to prevent CIVs: selections and integrity keys.}
    \label{fig:intraprocedural-disambiguation-options}
\end{figure}

\begin{figure}
    \centering
    \begin{subfigure}{0.22\textwidth}
        \includegraphics[width=0.9\textwidth]{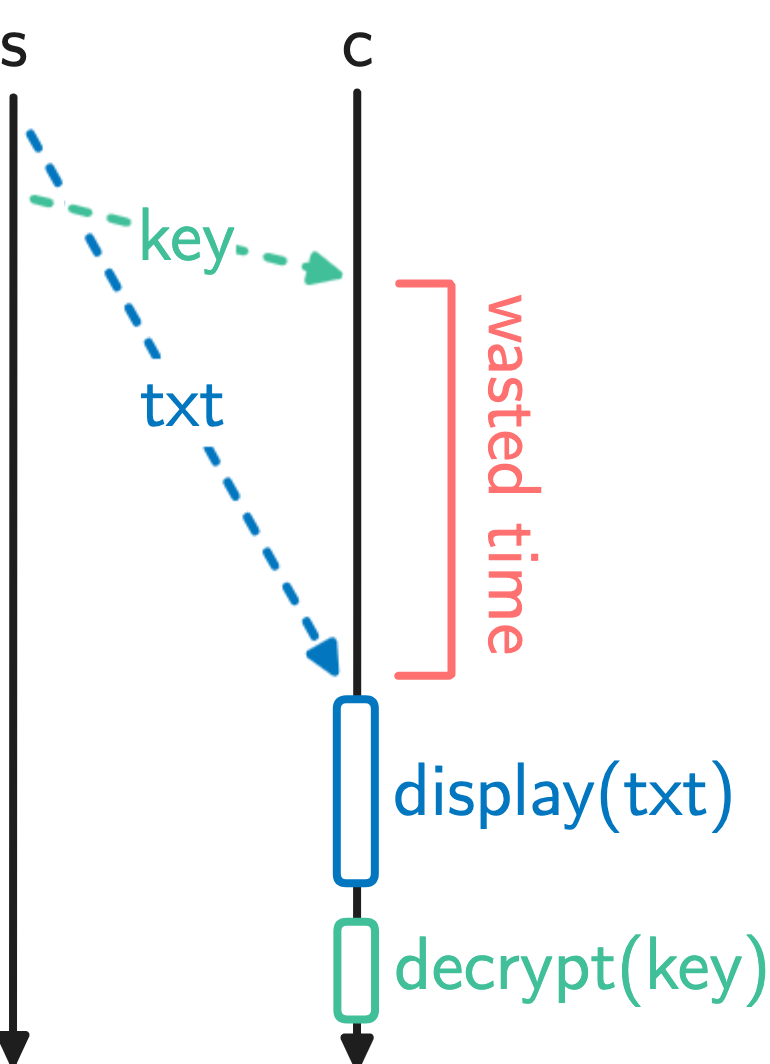}
        \caption{}\label{fig:head-of-line-blocking}
    \end{subfigure}\qquad
    \begin{subfigure}{0.22\textwidth}
        \includegraphics[width=0.9\textwidth]{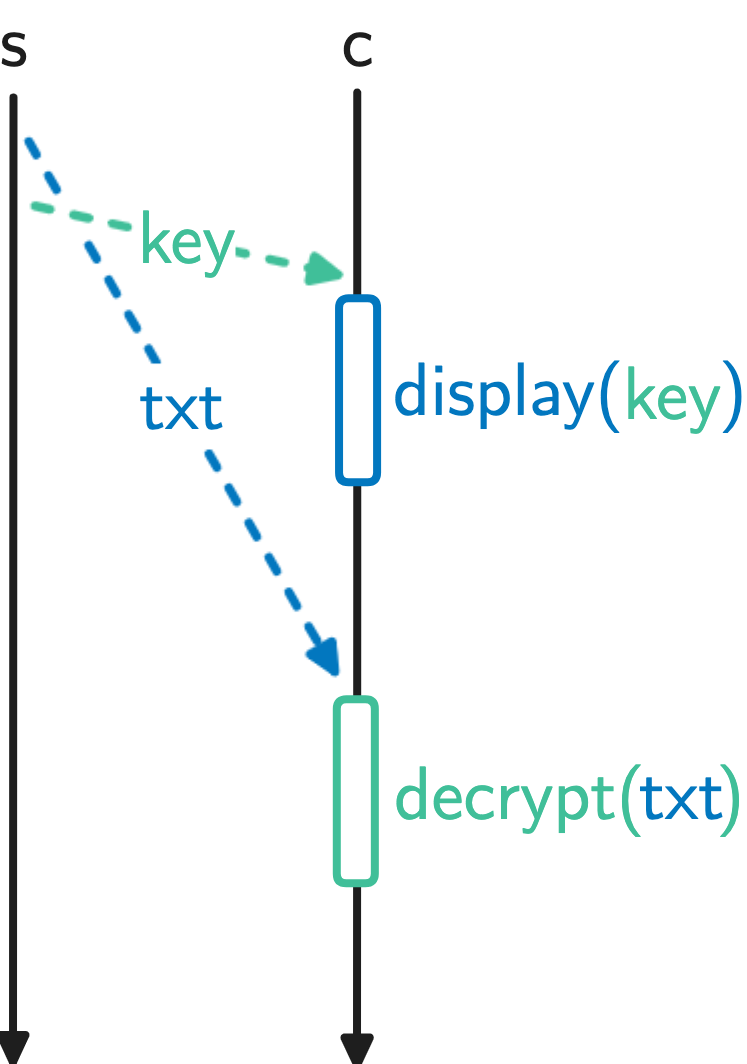}
        \caption{}\label{fig:unordered-civ}
    \end{subfigure}\qquad
    \begin{subfigure}{0.22\textwidth}
        \includegraphics[width=0.9\textwidth]{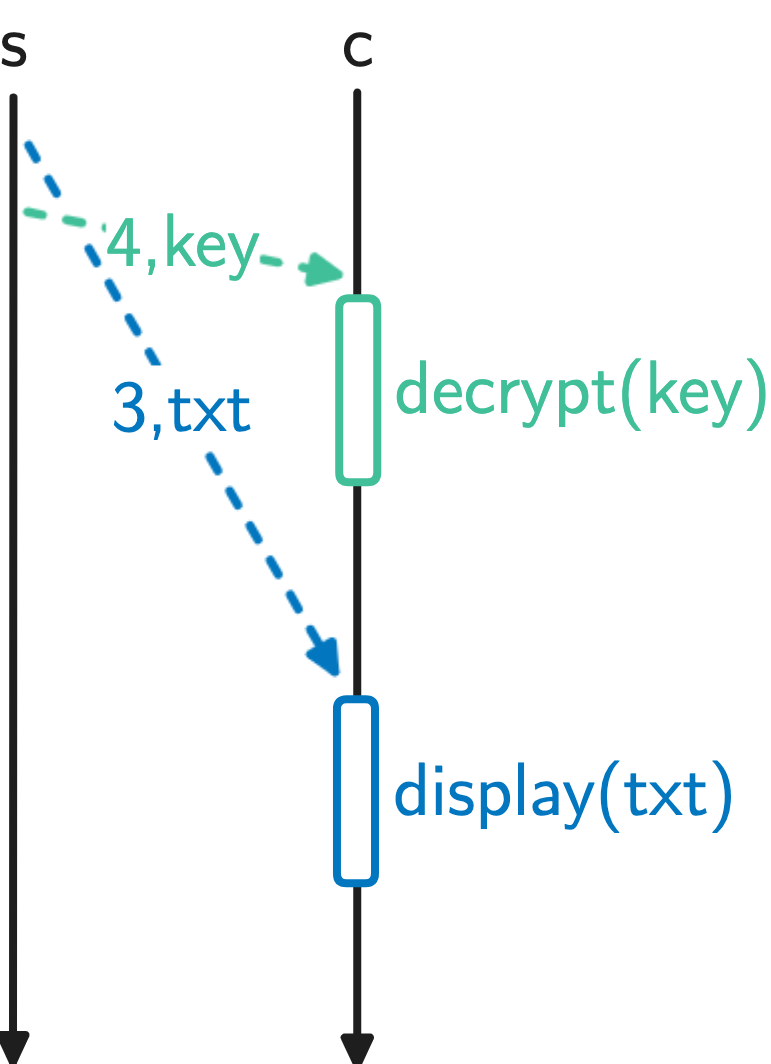}
        \caption{}\label{fig:unordered-keys}
    \end{subfigure}
    \caption{The challenges of non-FIFO delivery. Part (a) depicts head-of-line blocking when using a FIFO transport protocol: The message containing $k$ arrives first, but it cannot be processed until $t$ arrives. Part (b) depicts a CIV caused by using an unordered transport protocol without integrity keys. Part (c) depicts how the processes can use integrity keys to prevent CIVs. }\label{fig:unreliable-civs}
\end{figure}

Instead, we opt to pair each message with a disambiguating tag called an \emph{integrity key}. When $\proc c$ receives a message, it checks the integrity key to find the meaning of the message. \Cref{fig:intraprocedural-disambiguation-keys} uses \emph{line numbers} as integrity keys. For example, the $txt$ message is tagged with the number 3 because it was produced by the instruction on line 3 in \Cref{fig:forwarding-data}. Equivalently, one could use variable names (assuming that all variables have distinct names), message types (assuming that all messages have distinct types), or operators~\cite{carbone2012a}\changed{}{; essentially these are all ways to combine messages with selections}. However, as we will see in the next section, none of these solutions will suffice once we introduce procedures and recursion.

Integrity keys have another advantage over selections: they make it safe for the network to reorder messages. For instance, the selection in \Cref{fig:intraprocedural-disambiguation-selection} will only prevent CIVs if \emph{key} and \emph{txt} are delivered in the same order they were sent.
Thus previous theories and implementations of choreographic languages require a transport protocol that ensures reliable FIFO communication~\cite{giallorenzo2023,montesi2023}. These models are therefore susceptible to head-of-line blocking~\cite{scharf2006}, where one delayed message can prevent others from being processed (\Cref{fig:head-of-line-blocking}).
\Cref{fig:unordered-civ} shows why FIFO is necessary in these models: unordered messages can cause CIVs. Because our model combines unordered messages, integrity keys, and out-of-order processes, it circumvents the head-of-line blocking problem---as shown in \Cref{fig:unordered-keys}.

\subsection{Procedural Choreographies}\label{sec:procedural-choreos}

Choreographies can use procedures parameterised on processes for modularity and recursion~\cite{cruz-filipe2017,montesi2023}. \Cref{fig:proc-ambiguity} shows an example: a procedure $X$ with three \emph{roles} (i.e., process parameters) $\proc a,\proc b,\proc c$. The procedure $X$ is invoked twice---once with processes $\p,\q,\r_1$ (line 7) and again with $\p,\q,\r_2$ (line 8). In the body of $X$, role $\proc a$ produces a value and sends it to $\proc b$; then $\proc b$ transforms the value and sends it to $\proc c$; finally, $\proc c$ processes the value and sends it to $\proc a$. As usual in most programming languages, we will assume the variables $\proc a.w,\proc b.x,\proc c.y$, and $\proc a.z$ are locally scoped---this is in contrast to many choreography models, where variables at processes are all mutable fields accessible anywhere in the program.

In existing choreography models, a process can only participate in one choreographic procedure at a time. This is no longer the case with fully out-of-order choreographies. Consider \Cref{fig:proc-ambiguity}, where process $\p$ invokes procedure $X$ twice. The process may begin by invoking the first procedure call (line 7), computing $\p.w$ (line 2), and sending $\p.w$ to $\r_1$ (line 3). Then, instead of executing its next instruction---i.e.\ becoming blocked by waiting for a message on line 5---$\p$ can skip the instruction and proceed to invoke the second procedure call (line 8). Thus, we can have an execution like in \Cref{fig:proc-ambiguity-diagram-good}, in which $\p$ sends a message to $\r_1$ as part of the first procedure call and immediately sends a message to $\r_2$ as part of the second procedure call. This unusual semantics is exactly what we would expect in a choreography language with non-blocking receive---such as in Choral, when using the Ozone API to bind the result of a communication to a future (\Cref{sec:choral}).

\subsubsection{Interprocedural Integrity}\label{sec:interprocedural-civs}

Concurrent choreographic procedures add another dimension of complexity to the communication integrity problem. \Cref{fig:proc-ambiguity-diagram-bad,fig:proc-ambiguity-diagram-good} show why: depending on the order that $\r_1$ and $\r_2$'s messages arrive at $\q$, the messages from $\q$ may arrive at $\p$ in any order. (This occurs even if we assume reliable FIFO delivery!) Like in the previous section, $\p$ cannot distinguish which message pertains to which procedure invocation. But now static information is insufficient to ensure communication integrity: both messages from $\q$ pertain to the same variable in the same procedure, so the integrity keys fail to distinguish the different procedure calls. We call this the \emph{interprocedural CIV problem}.

The example above shows that integrity keys need dynamic information prevent CIVs. We can solve the problem by combining the line numbers used in \Cref{sec:intraprocedural-civs} with some \emph{session token} $t$ that uniquely identifies each procedure invocation. Applied to \Cref{fig:proc-ambiguity-diagram-bad,fig:proc-ambiguity-diagram-good}, $\p$ could inspect the session token to determine whether the messages pertain to the first procedure call (line 7) or the second (line 8). But this requires $\p$ and $\q$ to somehow achieve \emph{a priori} agreement about which tokens correspond to which procedure invocations.

One solution to the interprocedural CIV problem would be to select a ``leader'' process for each procedure call, and let the leader compute a session token for all the other roles to use. However, this would make the leader a bottleneck: until the other participants receive the token, senders would not be able to send messages, and recipients would not be able to discern which procedure invocation their incoming messages pertain to. We therefore propose a method for processes to compute session tokens independently, using only local data, such that they still agree on the same value of the token for each procedure invocation.

Observe that a procedure call is uniquely identified by its caller (i.e.~the procedure call that called it) and its line number $l$. Assuming the caller already has a unique token $t$, the callee's token can be computed as some injective function $\NextToken(l,t)$. This function would need to satisfy two properties:

\begin{itemize}
    \item \textbf{Determinism:} For any input pair $l,t$, $\NextToken(l,t)$ always produces the same value $t'$.
    \item \textbf{Injectivity:} Distinct input pairs $l,t$ produce distinct output tokens. 
\end{itemize}

\emph{Determinism} ensures that if two processes in the same procedure call (with token $t$) invoke the same procedure (on line $l$) then both processes will agree on the value of $\NextToken(l,t)$. \emph{Injectivity} ensures that if a process concurrently participates in two different procedure calls (with distinct tokens $t_1,t_2$) and invokes two procedures (on lines $l_1,l_2$---possibly $l_1 = l_2$) then the resulting session tokens will be distinct ($\NextToken(l_1,t_1) \ne \NextToken(l_2,t_2)$). In the next section, we realize these constraints by representing tokens as lists of line numbers and defining $\NextToken$ to be the list-prepend operator.

\begin{figure}
    \begin{subfigure}{0.37\textwidth}
        \small\centering
        \begin{myalign*}
            &1: X(\proc a,\proc b,\proc c) =\\
            &2: \quad \BPure{w}{\proc a}{produce()};\\
            &3: \quad \BCom{\proc a}{w}{\proc b}{x};\\
            &4: \quad \BCom{\proc b}{transform(x)}{\proc c}{y};\\
            &5: \quad \BCom{\proc c}{process(y)}{\proc a}{z};\\
            &6: \quad \proc a.store(w, z)\\
            &7: X(\p,\r_1,\q);\\
            &8: X(\p,\r_2,\q)
        \end{myalign*}
        \caption{}\label{fig:proc-ambiguity}
    \end{subfigure}
    \begin{subfigure}{0.3\textwidth}
        \includegraphics[width=0.8\textwidth]{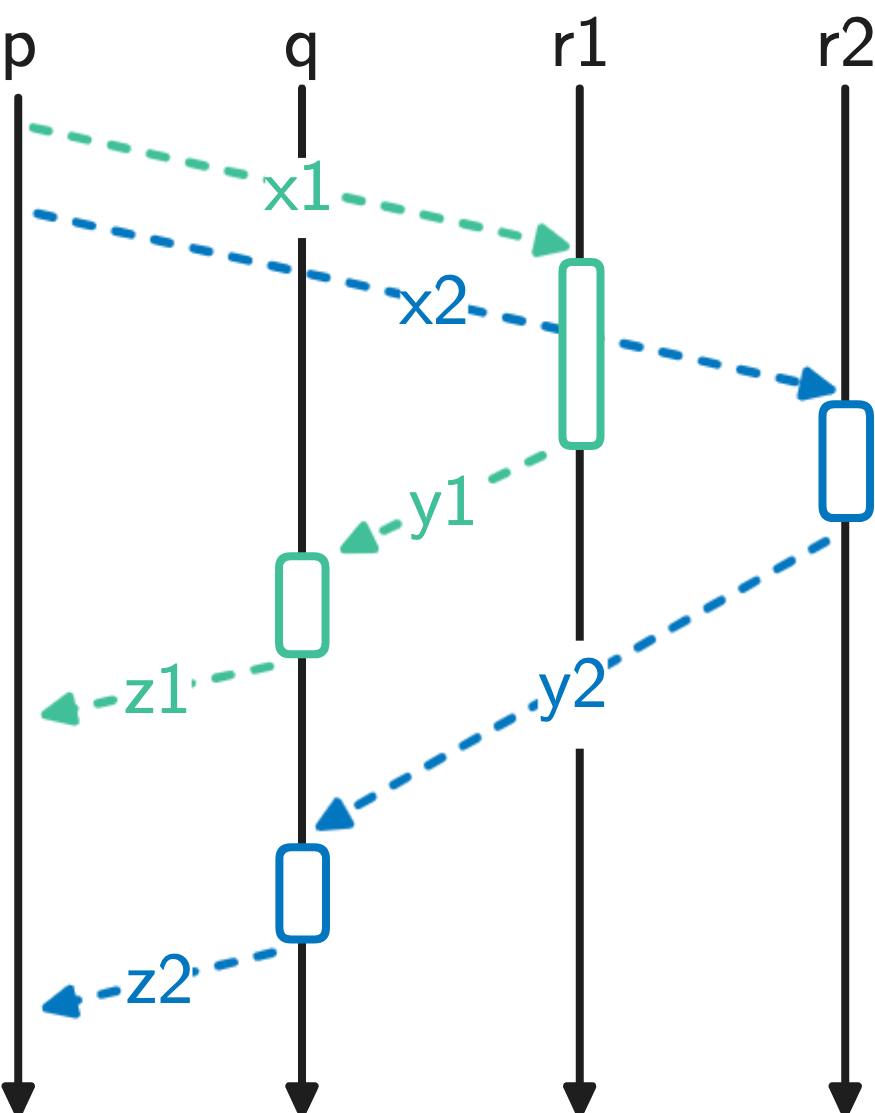}
        \caption{}\label{fig:proc-ambiguity-diagram-good}
    \end{subfigure}\hfill
    \begin{subfigure}{0.3\textwidth}
        \includegraphics[width=0.8\textwidth]{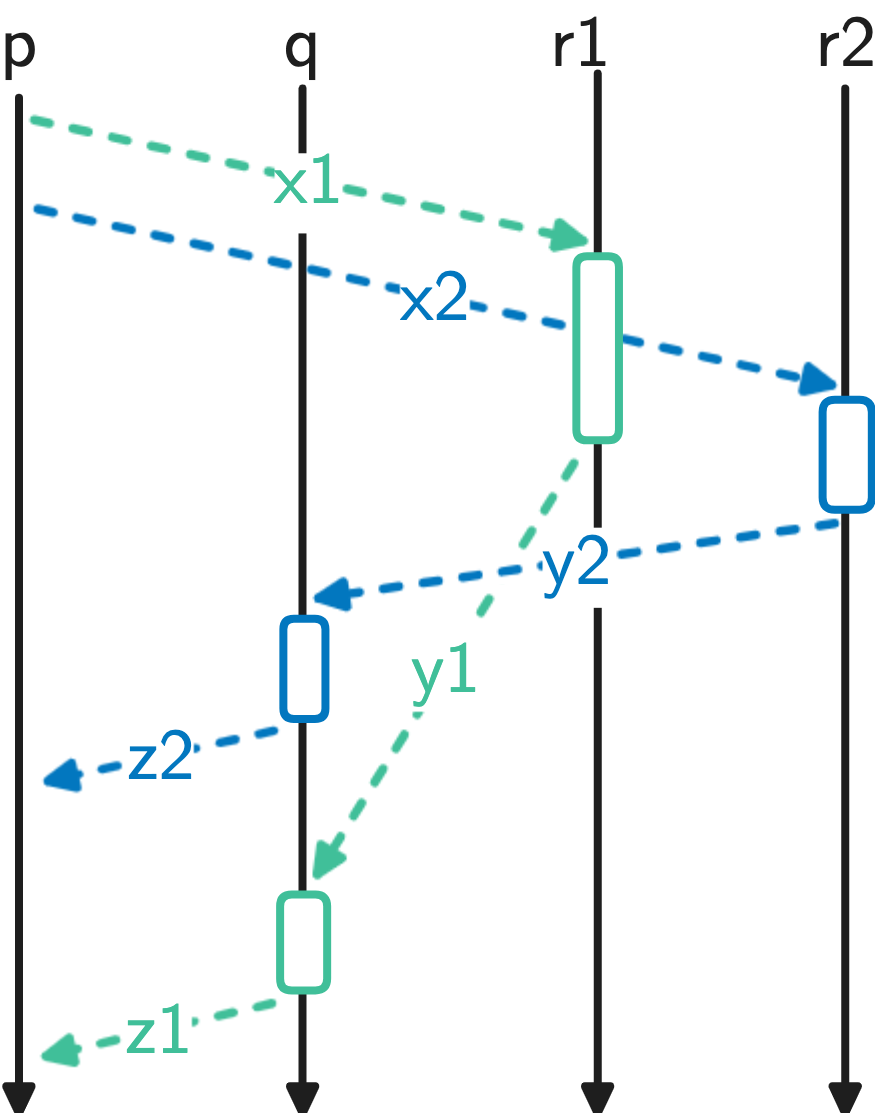}
        \caption{}\label{fig:proc-ambiguity-diagram-bad}
    \end{subfigure}
    \caption{A choreography and two possible executions. In both diagrams, the green lines correspond to $X(\p,\q,\r_1)$ and the blue lines correspond to $X(\p,\q,\r_2)$.}
    \label{fig:the-need-for-session-tokens}
\end{figure}

\section{Choreography Model}\label{sec:choreographies}

In this section we present $O_3$, a formal model for asynchronous, fully out-of-order choreographies. Statements can be executed in any order (up to data dependency) and messages can be delivered out of order. The section concludes with proofs of deadlock-freedom and communication integrity.

\subsection{Syntax}\label{sec:chor-syntax}

The syntax for choreographies in $O_3$ is defined by the grammar in \Cref{fig:chor-syntax}. Two example choreographies are shown in \Cref{fig:chor-examples}; we explain their semantics in \Cref{sec:chor-discussion}.

\newcommand{\alt}{\ |\ }
\begin{figure}
    \centering
\begin{align*}
\mathscr C &::= \{X_i(\vx \p, \vx{\p.x}) = C_i\}_{i \in \mathcal I} & \emph{(decls)} &&\\
C &::= I;\, C & \emph{(seq)} &\quad\alt \Block{C} & \emph{(block)}\\
&\alt 0& \emph{(end)} &&\\
I &::= l, t: \BCom{\p}{e}{\q}{x} & \emph{(comm)} &\quad\alt l, t: \BSelect{\p}{\q}{L} & \emph{(sel)}\\
&\alt l, t: \BPure{x}{\p}{e} & \emph{(expr)} &\quad\alt l, t: \BIf{e}{\p}{C_1}{C_2} & \emph{(cond)}\\
&\alt l, t: \BCall{X}{\vx \p,\vx a} & \emph{(call)} &\quad\alt l, t: \BRecv{\p}{\q}{x} & \emph{(comm$^\dag$)}\\
&\alt l, t: \BRecvSelect{\p}{\q}{L} & \emph{(sel$^\dag$)} &\quad\alt l, t: \BCalling{\vx \p}{X}{\vx \q,\vx a}{C} & \emph{(call$^\dag$)}\\
t &::= \Tok & \emph{(placeholder)} &\quad\alt \tau & \emph{(token$^\dag$)}\\
e &::= f(\vx e)& \emph{(app)} &\quad\alt a& \emph{(atom)}\\
a &::= \At{v}{\p}&\emph{(val)} &\quad\alt \p.x& \emph{(var)}
\end{align*}
    \caption{Syntax for choreographies in $O_3$. Terms marked with $\dag$ only appear at runtime.}%
    \label{fig:chor-syntax}
\end{figure}

A choreography $C$ is executed in the context of a collection of \emph{procedures} $\mathscr C$. Each procedure $X_i(\vx \p, \vx{\p.x}) = C_i$ is parameterized by a list of \emph{roles} $\vx \p = \p_1, \dots, \p_n$ and role-local \emph{parameters} $\vx{\p.x} = \p_{j_1}.x_1, \dots, \p_{j_m}.x_m$ where every parameter $\p_{j_k}.x_k$ is located at one of the roles in $\vx \p$. We assume that procedures do not contain runtime terms (such as $\CRecv{\p}{\q}{x}{l,t}$).

A choreography $C$ consists of a sequence of \emph{instructions} $I$, followed by the end symbol 0 which is often omitted. Each instruction is prefixed with a \emph{line number} $l$ and a \emph{token} $t$. We call this pair an \emph{integrity key}. If $C_i$ is the body of a procedure in $\mathscr C$, then the token $t$ on every instruction in $C_i$ must be a \emph{token placeholder} $\Tok$. When the procedure is invoked, all token placeholders $\Tok$ in $C_i$ will be replaced with a fresh \emph{token value} $\tau$.

We assume that line numbers in $\mathscr C$ are similar to line numbers in a real computer program: Each instruction $I$ in $\mathscr C$ has a distinct line number $l$. When a procedure $X_i(\vx \p, \vx{\p.x}) = C_i$ is invoked, the line numbers in $C_i$ will remain unchanged. This will allow us to access the static location of an instruction at runtime in order to compute the integrity key.

There are five kinds of instructions. A communication $\BCom{\p}{e}{\q}{x};\,C$ instructs process $\p$ to evaluate expression $e$ and send it to process $\q$, which will bind the result to $\q.x$ in the continuation $C$. A selection $\BSelect{\p}{\q}{L};\,C$ conveys knowledge of choice~\cite{montesi2023}: it instructs $\p$ to send a value literal $L$ to $\q$, informing $\q$ that a decision (represented by $L$) has been made. A local computation $\BPure{x}{\p}{e};\,C$ instructs $\p$ to evaluate $e$ and bind the result to $\p.x$ in $C$. A conditional $\BIf{e}{\p}{C_1}{C_2};\,C$ instructs $\p$ to evaluate $e$ and for the processes to proceed with $C_1$ or $C_2$ according to the result. A procedure call $X_i(\vx \p, \vx a);\,C$ instructs processes $\vx \p$ to invoke procedure $X_i(\vx \q, \vx {\q.y}) = C_i$ defined in $\mathscr C$, with processes $\vx \p$ playing roles $\vx \q$ and arguments $\vx a$ (which may take the form of values $\At{v}{\p}$ or variables $\p.x$) substituted for parameters $\vx {\q.y}$ in $C_{i}$. In addition to these basic instructions, a choreography may contain blocks $\{\,C\,\};\,C'$ which limit the scope of variables defined in $C$ so they do not extend to $C'$.

In addition, choreographies can contain \emph{runtime instructions} that represent an instruction in progress; these terms are an artifact of the semantics, not written explicitly by the programmer. A communication-in-progress $\BRecv{\p}{\q}{x}$ indicates that $\p$ sent a message to $\q$, which $\q$ has not yet received. Similarly, a selection-in-progress $\BRecvSelect{\p}{\q}{L}$ indicates that $\p$ sent a selection. A procedure-call-in-progress $\BCalling{\vx \p}{X}{\vx \q, \vx a}{C}$ indicates that some processes have invoked $X$, and others have not---we leave the details to \Cref{sec:chor-semantics}.

Expressions $e$ are composed of \emph{atoms} $a$ (i.e.~variables $\p.x$ and values $\At{v}{p}$) and function applications $f(\vx e)$. Although the variables $\p.x$ are immutable, we assume that a function $f$ evaluated by $\p$ can mutate $\p$'s state as a side-effect. \changed{}{Technically, having side-effects in our theory is not necessary. However, most choreographic programming theories and implementations equip processes with mutable state~\cite{montesi2023}; this includes Choral, the language we use to implement the Ozone API in \Cref{sec:choral}.}

\newcommand{\BuyItem}{\val{BuyItem}}
\newcommand{\seller}{\proc{seller}}
\newcommand{\buyer}{\proc{buyer}}
\newcommand{\itemID}{\val{itemID}}
\newcommand{\itemm}{\val{item?}}
\newcommand{\sellItem}{\val{sell}}

\begin{figure}
\begin{subfigure}[t]{0.5\textwidth}
\begin{align*}
&\BuyItem(\proc s,\ \proc b,\ \proc b.\itemID) =\\
&\qquad \CCom{\proc b}{\itemID}{\proc s}{\itemID}{1, \Tok};\\
&\qquad \CPure{\itemm}{\proc s}{\sellItem(\proc s.\itemID)}{2, \Tok};\\
&\qquad \CCom{\proc s}{\itemm}{\proc b}{\itemm}{3, \Tok}\\
&\CCall{\BuyItem}{\seller, \buyer_1, \At{123}{\buyer_1}}{4,\tau_0};\\
&\CCall{\BuyItem}{\seller, \buyer_2, \At{543}{\buyer_2}}{5,\tau_0}
\end{align*}
\caption{}\label{fig:buy-item}
\end{subfigure}\hfill
\begin{subfigure}[t]{0.45\textwidth}
\begin{align*}
&\val{StreamIt}(\proc p,\ \proc c) =\\
&\qquad \CCom{\p}{produce()}{\proc c}{x}{1, \Tok};\\
&\qquad \CPure{z}{\proc c}{consume(\proc c.x)}{2, \Tok};\\
&\qquad \CIfM{(itemsLeft() > 0)}{\p}{3, \Tok} \CThen\\
&\qquad\qquad \CSelect{\proc p}{\proc c}{\small\textsc{More}}{4,\Tok}\\
&\qquad\qquad \CCall{StreamIt}{\proc p, \proc c}{5,\Tok}\\
&\qquad \CElse\\
&\qquad\qquad \CSelect{\proc p}{\proc c}{\small\textsc{Done}}{6,\Tok}\\
&\CCall{StreamIt}{\proc p_1, \proc c}{7,\tau_0};\\
&\CCall{StreamIt}{\proc p_2, \proc c}{8,\tau_0}
\end{align*}
\caption{}\label{fig:stream-it}
\end{subfigure}
    \caption{Two example choreographies. On the left, processes $\buyer_1$ and $\buyer_2$ concurrently attempt to buy products from $\seller$. On the right, producers $\p_1$ and $\p_2$ concurrently send streams of data to a shared consumer $\proc c$.}
    \label{fig:chor-examples}
\end{figure}

\subsection{Semantics}\label{sec:chor-semantics}

We now give a fully out-of-order semantics for choreographies in $O_3$. The semantics is a labelled transition system on \emph{configurations} $\CCfg{C}{\Sigma}{K}$, where $C$ is a choreography, $\Sigma$ is a mapping from process names $\p$ to process states $\sigma$, and $K$ is a mapping from process names $\p$ to multisets of messages $M$ yet to be delivered to $\p$. We also assume there exists a set of unchanging procedure declarations $\mathscr C$, not shown explicitly in the configuration.

An \emph{initial configuration} is a configuration $\CCfg{C}{\Sigma}{K}$ where $\Sigma$ maps each $\p$ to an arbitrary state,
$K$ maps each $\p$ to the empty set, and all instructions in $C$ use the same token $\tau_0$, called the \emph{initial token}. We assume initial configurations to be well-formed, cf.~\Cref{sec:chor-properties}. The transition relation $(\CStep{\p})$ is on configurations, where $\p$ identifies which process took a step.

Messages in our semantics are represented as triples $(l,\tau,v)$. Here $l$ is the line number of the communication that sent the message, $\tau$ is the token associated with the procedure invocation that sent the message, and $v$ is a value called the \emph{payload}. Together, the pair $(l,\tau)$ is called the \emph{integrity key} of the message; the line number prevents intraprocedural CIVs (\Cref{sec:intraprocedural-civs}) while the token prevents interprocedural CIVs (\Cref{sec:procedural-choreos}).

\subsubsection{Transition rules}

\Cref{fig:chor-semantics} defines the semantics for $O_3$, which extends textbook models for procedural and asynchronous choreographies to allow full out-of-order execution~\cite{montesi2023}. That is, in a choreography of the form $I_1;\,I_2;\,C$, the statement $I_2$ can always be executed before $I_1$ \emph{unless}:
\begin{enumerate}
    \item \emph{(Data dependency)} $I_1$ binds a variable $\p.x$ that is used in $I_2$; or 
    \item \emph{(Control dependency)} $I_1$ is a selection of the form $\BSelect{\p}{\q}{L}$ or $\BRecvSelect{\p}{\q}{L}$, and $I_2$ is an action performed by $\q$.
\end{enumerate}


The semantics for communication is defined by rules \textsc{C-Send} and \textsc{C-Recv}. In \textsc{C-Send} for the communication term $\CCom{\p}{e}{\q}{x}{l,\tau}$, the expression $e$ is evaluated in the context of $\p$'s state using the notation $\CEval{\Sigma(\p)}{e}{(v,\sigma)}$. Evaluating $e$ produces a value $v$ and a new state $\sigma$ for $\p$; we assume that $(\vdash)$ is defined for any $e$ that contains no free variables and for any state $\Sigma(\p)$. The \textsc{C-Send} rule transforms the communication term into a communication-in-progress term $\CRecv{\p}{\q}{x}{l,\tau}$ and adds the message $(l,\tau,v)$ to $\q$'s set of undelivered messages. The message can subsequently be received by $\q$ using the \textsc{C-Recv} rule. This rule removes the communication-in-progress term and substitutes the message payload $v$ into the continuation $C$. Notice that the integrity key $l,\tau$ of the message is matched against the integrity key of the communication-in-progress, $\CRecv{\p}{\q}{x}{l,\tau}$. Notice also that the semantics for communication is not defined if the token $t$ is merely a placeholder $\Tok$---it must be a token \emph{value} $\tau$. Indeed, in \Cref{sec:chor-properties} we show that placeholders only appear in $\mathscr C$, never in $C$.

Rules \textsc{C-Select} and \textsc{C-OnSelect} closely mirror the semantics of \textsc{C-Send} and \textsc{C-Recv}---the key difference is that a label $L$ is communicated instead of a value. Rules \textsc{C-Compute} and \textsc{C-If} are standard, except for changes made to use lexical scope instead of global scope: \textsc{C-Compute} substitutes the value $v$ into the continuation $C$ (instead of storing it in the local state $\Sigma$) and \textsc{C-If} places the continuation $C_i$ in a block to prevent variable capture. To garbage collect empty blocks, \textsc{C-If} uses a concatenation operator $(\fatsemi)$ defined as: 
$$\{I;C\}\fatsemi C' = \{I;C\};\,C' \qquad\qquad \{0\}\fatsemi C' = C'$$

The \textsc{C-Delay} rule is used in choreography models to enable a limited form of out-of-order execution, where unrelated processes execute concurrently: given a choreography $I;\,C$, \textsc{C-Delay} would ordinarily prevent any $\q$ from executing in $C$ if $\q$ is somehow involved in $I$. Our formulation of the rule is weakened: $\q$ is only prevented from executing in $C$ if $I$ is a selection at $\q$, i.e.~a control dependency. The rule still respects data dependencies, however, by design of the other rules---for instance, $\CCom{\p}{x}{\q}{y}{l,\tau}$ cannot be evaluated until $x$ is bound to a value. Thus our version of \textsc{C-Delay} enables \emph{full} out-of-order execution.

The rules \textsc{C-First}, \textsc{C-Enter}, \textsc{C-Last}, and \textsc{C-Delay-Proc} model procedure calls, with extra machinery to model how processes can execute their roles in a choreographic procedure in parallel until they need to interact. Given a procedure call $\CCall{X}{\vx\p,\vx a}{l,\tau}$, \textsc{C-First} models how $\p \in \vx \p$ has entered the procedure before any of the other processes. The rule replaces the procedure call with a procedure-call-in-progress $\CCalling{\vx\p\setminus \p}{X}{\vx \p,\vx a}{C_1'}{l,\tau}$ to reflect this fact; the choreography $C_1'$ is the body of the procedure, which $\p$ may begin executing via the \textsc{C-Delay-Proc} rule. The remaining processes can enter the procedure via the \textsc{C-Enter} rule, and the last process to enter the procedure uses the \textsc{C-Last} rule. As we explain below, these rules also compute new integrity keys for the callee procedure to prevent CIVs.

The key novelty of our semantics for procedures is the use of $\NextToken$. In \textsc{C-First}, the body $C_{1}'$ is obtained by computing the token $\tau' = \NextToken(l,\tau)$ and substituting $\tau'$ for all occurrences of the token placeholder $\Tok$. Notice that the semantics makes it appear as if the processes have synchronized to compute the next token; in \Cref{sec:processes}, we give a semantics where each process computes the next token independently and in \Cref{thm:epp} we prove that the two models correspond. Hence the apparent synchronization has no runtime cost.

As discussed in \Cref{sec:interprocedural-civs}, $\NextToken: \mathbb N \times \TokenType \to \TokenType$ is a pure injective function for computing new tokens (of type $\TokenType$) using integrity keys (of type $\mathbb N \times \TokenType$). To ensure the integrity keys from two concurrent procedures never collide, $\NextToken$ must produce unique, non-repeating keys upon iterated application. One way this can be realized is by representing $\TokenType = \mathbb N^*$ as lists of numbers, the initial token $\tau_0$ as an empty list $[]$, and implementing $\NextToken(l, \tau) = l :: \tau$, i.e.\ prepending the line number $l$ to the list. Intuitively, this means the token associated with a procedure invocation is a simplified \emph{call stack} of line numbers from which the procedure was called.

\begin{figure}
    \centering\small
    
\begin{prooftree}
    \AxiomC{$\CEval{\Sigma(\p)}{e}{(v,\sigma)}$}
    \AxiomC{$M = K(\q) \uplus \{(l,\tau,v)\}$}
    \RightLabel{\textsc{C-Send}}
    \BinaryInfC{ $\CCfg{ \CCom{\p}{e}{\q}{x}{\Line,\tau};\,C }{\Sigma}{K} \CStep{\p} \CCfg{ \CRecv{\p}{\q}{x}{\Line,\tau};\,C }{\Sigma[\p \mapsto \sigma]}{K[\q \mapsto M]}$ }
\end{prooftree}

\begin{prooftree}
    \AxiomC{$(\Line,\tau, v) \in K(\q)$}
    \AxiomC{$M = K(\q) \setminus \{(\Line,\tau, v)\}$}
    \RightLabel{\textsc{C-Recv}}
    \BinaryInfC{ $\CCfg{ \CRecv{\p}{\q}{x}{\Line,\tau};\,C }{\Sigma}{K} \CStep{\q} \CCfg{ C[\q.x \mapsto \At{v}{q}] }{\Sigma}{K[\q \mapsto M]}$ }
\end{prooftree}

\begin{prooftree}
    \AxiomC{$M = K(\q) \cup \{(l,\tau,L)\}$}
    \RightLabel{\textsc{C-Select}}
    \UnaryInfC{ $\CCfg{ \CSelect{\p}{\q}{L}{\Line,\tau};\,C }{\Sigma}{K} \CStep{\p} \CCfg{ \CRecvSelect{\p}{\q}{L}{\Line,\tau};\,C }{\Sigma}{K[\q \mapsto M]}$ }
\end{prooftree}

\begin{prooftree}
    \AxiomC{$K(\q) = \{(\Line,\tau, L)\} \cup M$}
    \RightLabel{\textsc{C-OnSelect}}
    \UnaryInfC{ $\CCfg{ \CRecvSelect{\p}{\q}{L}{\Line,\tau};\,C }{\Sigma}{K} \CStep{\q} \CCfg{ C }{\Sigma}{K[\q \mapsto M]}$ }
\end{prooftree}

\begin{prooftree}
    \AxiomC{$\CEval{\Sigma(\p)}{e}{(v,\sigma)}$}
    \RightLabel{\textsc{C-Compute}}
    \UnaryInfC{ $\CCfg{ \CPure{x}{\p}{e}{\Line,\tau};\,C }{\Sigma}{K} \CStep{\p} \CCfg{ C[\p.x \mapsto \At{v}{\p}] }{\Sigma[\p \mapsto \sigma]}{K}$ }
\end{prooftree}

\begin{prooftree}
    \AxiomC{$\CEval{\Sigma(\p)}{e}{v}$}
    \AxiomC{if $v = \True$ then $i=1$ else $i=2$}
    \RightLabel{\textsc{C-If}}
    \BinaryInfC{ $\CCfg{ \CIf{e}{\p}{C_1}{C_2}{\Line,\tau};\,C }{\Sigma}{K} \CStep{\p} \CCfg{ \Block{C_i}\fatsemi\,C }{\Sigma}{K}$ }
\end{prooftree}

\begin{prooftree}
    \AxiomC{ $\CCfg{ C_1 }{\Sigma}{K} \CStep{\p} \CCfg{ C_1' }{\Sigma'}{K'}$ }
    \RightLabel{\textsc{C-Block}}
    \UnaryInfC{ $\CCfg{ \Block{C_1};\,C_2 }{\Sigma}{K} \CStep{\p} \CCfg{ \Block{C_1'}\fatsemi\,C_2 }{\Sigma'}{K'}$ }
\end{prooftree}

\begin{prooftree}
    \AxiomC{ $\CCfg{ C }{\Sigma}{K} \CStep{\q} \CCfg{ C' }{\Sigma'}{K'}$ }
    \AxiomC{ $I$ is not a selection at $\q$ }
    \RightLabel{\textsc{C-Delay}}
    \BinaryInfC{ $\CCfg{ I;\,C }{\Sigma}{K} \CStep{\q} \CCfg{ I;\,C' }{\Sigma'}{K'}$ }
\end{prooftree}

\begin{prooftree}
    \noLine
    \AxiomC{ $(X(\vx \q,\vx{\q.y}) = C_1) \in \mathscr C$ \qquad $C_1' = C_1[\vx \q, \vx{\q.y}, \Tok \mapsto \vx \p, \vx a, \tau']$  }
    \UnaryInfC{$\p \in \vx \p$ \qquad\qquad\qquad\qquad $\tau' = \NextToken(\Line,\tau)$}
    \RightLabel{\textsc{C-First}}
    \UnaryInfC{ $\CCfg{ \CCall{X}{\vx \p, \vx a}{\Line,\tau};\,C_2 }{\Sigma}{K} \CStep{\p} \CCfg{ \CCalling{\vx \p \setminus \p}{X}{\vx \p, \vx a}{C_1'}{\Line,\tau};\,C_2 }{\Sigma}{K}$ }
\end{prooftree}

\begin{prooftree}
    \AxiomC{ $\p \in \vx \p$ }
    \RightLabel{\textsc{C-Enter}}
    \UnaryInfC{ $\CCfg{ \CCalling{\vx \p}{X}{\vx \q, \vx a}{C_1}{\Line,\tau};\,C_2 }{\Sigma}{K} \CStep{\p} \CCfg{ \CCalling{\vx \p \setminus \p}{X}{\vx \q, \vx a}{C_1}{\Line,\tau};\,C_2 }{\Sigma}{K}$ }
\end{prooftree}

\begin{prooftree}
    \AxiomC{  }
    \RightLabel{\textsc{C-Last}}
    \UnaryInfC{ $\CCfg{ \CCalling{\p}{X}{\vx \q, \vx a}{C_1}{\Line,\tau};\,C_2 }{\Sigma}{K} \CStep{\p} \CCfg{ \Block{C_1}\fatsemi\,C_2 }{\Sigma}{K}$ }
\end{prooftree}

\begin{prooftree}
    \AxiomC{ $\CCfg{ C_1 }{\Sigma}{K} \CStep{\p} \CCfg{ C_1' }{\Sigma'}{K'}$ }
    \AxiomC{ $\p \notin \vx \p$}
    \RightLabel{\textsc{C-Delay-Proc}}
    \BinaryInfC{ $\CCfg{ \CCalling{\vx\p}{X}{\vx \q, \vx a}{C_1}{\Line,\tau};\,C_2 }{\Sigma}{K} \CStep{\p} \CCfg{ \CCalling{\vx\p}{X}{\vx \q, \vx a}{C_1'}{\Line,\tau};\,C_2 }{\Sigma'}{K'}$ }
\end{prooftree}
    \caption{Semantics for fully out-of-order choreographies}%
    \label{fig:chor-semantics}
\end{figure}

\subsubsection{Discussion}\label{sec:chor-discussion}

\Cref{fig:buy-item} expresses a choreography in which two $\buyer$ processes concurrently buy items from a $\seller$ process. In the initial configuration, $\buyer_1$ can enter the procedure on line 4, $\buyer_2$ can enter the procedure on line 5, and $\seller$ can enter either procedure. If $\buyer_2$ enters first (using \textsc{C-Delay} and \textsc{C-Enter}), it can proceed to send $\At{543}{\buyer_2}$ to $\seller$ (using \textsc{C-Com}). Then $\seller$ can enter the procedure on line 5 (using \textsc{C-Delay} and \textsc{C-Last}) and proceed to receive the message from $\buyer_2$ (using \textsc{C-Recv}). This execution would be impossible in a standard choreography model because $\seller$ would need to complete the procedure invocation on line 4 before it could enter the procedure on line 5. The added concurrency ensures that slowness in $\buyer_1$ does not prevent $\buyer_2$ from making progress.

Notice the out-of-order semantics of \Cref{fig:buy-item} also adds nondeterminism. Suppose $\buyer_1$ and $\buyer_2$ attempt to buy the same item and the $\seller$ only has one copy. One of the buyers will receive the item, and the other will receive a null value. In a standard choreography model, the item would always go to $\buyer_1$. In $O_3$, the item will be sold nondeterministically according to the order that messages arrive to the seller. This nondeterminism can be problematic---it makes reasoning about choreographies harder---but also increases expressivity: nondeterminism is essential in distributed algorithms like consensus and leader election. Reasoning about nondeterminism in choreographies is an important topic for future work.

\Cref{fig:stream-it} shows we can also express recursive choreographies. In each iteration of the procedure \emph{StreamIt}, a producer $\proc p$ sends a value to a consumer $\proc c$ (line 1) and decides whether to start another iteration (line 3).  Then the producer asynchronously informs the consumer about its decision (lines 4 and 6) and can proceed with the next iteration (line 5) without waiting for the consumer. Because messages in $O_3$ are unordered, the consumer can consume items (line 2) from different iterations in any order; this prevents head-of-line blocking~\cite{scharf2006}.

In the initial choreography of \Cref{fig:stream-it}, producers $\p_1,\p_2$ and a consumer $\proc c$ invoke two instances of \emph{StreamIt}. As in \Cref{fig:buy-item}, the two procedures evolve concurrently; a slowdown in $\p_1$ will not prevent $\proc c$ from consuming items produced by $\p_2$.

\subsection{Properties}\label{sec:chor-properties}

In this section we prove that $O_3$ choreographies are deadlock-free and we formalize the communication integrity property.  Combined with the EPP Theorem presented in \Cref{sec:processes}, these results imply that projected code inherits the same properties.

\begin{figure}[ht]
  \centering\small

\begin{minipage}{0.5\textwidth}
\begin{prooftree}
    \AxiomC{$\forall v,\,(l,\tau,v) \notin K(\q)$}
    \RightLabel{\textsc{C-WF-Send}}
    \UnaryInfC{ $\langle \CCom{\p}{e}{\q}{x}{\Line,\tau},\, {K} \rangle \wf$ }
\end{prooftree}
\end{minipage}%
\begin{minipage}{0.5\textwidth}
\begin{prooftree}
    \AxiomC{$(l,\tau,L) \notin K(\q)$}
    \RightLabel{\textsc{C-WF-Select}}
    \UnaryInfC{ $\langle \CSelect{\p}{\q}{L}{\Line,\tau},\, {K} \rangle \wf$ }
\end{prooftree}
\end{minipage}%

\begin{prooftree}
    \noLine
    \AxiomC{$\vx\p$ distinct \qquad $\vx{\p.x}$ distinct \qquad $\pn(C) \subseteq \vx\p$}
    \UnaryInfC{$\forall \p.x \in \vx{\p.x},\,\p \in \vx\p$ \qquad $\langle I, K\rangle \wf$ for each $I \in \stats(C)$}
    \noLine
    \UnaryInfC{$C$ contains no runtime terms \qquad $\keys(C)$ distinct \qquad $\forall (l,t) \in \keys(C),\,t = \Tok$}
    \RightLabel{\textsc{C-WF-Def}}
    \UnaryInfC{ $X(\vx\p, \vx{\p.x}) = C \wf$ }
\end{prooftree}

\begin{prooftree}
    \noLine
    \AxiomC{ $\langle \CCall{X}{\vx \q, \vx a}{\Line,\tau},\, {K} \rangle \wf$ \qquad $(X(\q_1,\dots,\q_n, \q^1.x_1, \dots, q^m.x_m) = C') \in \mathscr C$ }
    \UnaryInfC{ $\{\r_1,\dots,\r_k\} \subseteq \{\p_1,\dots,\p_n\}$ \qquad $\forall i \le k, j \le n\,$ if $\r_i = \p_j$ then $\EPP{C}{\r_i} = \EPP{C'}{\q_j}$}
    \RightLabel{\textsc{C-WF-Calling}}
    \UnaryInfC{ $\langle \CCalling{\r_1,\dots,\r_k}{X}{\p_1,\dots,\p_n, a_1,\dots,a_m}{C}{\Line,\tau},\, {K} \rangle \wf$ }
  \end{prooftree}
\begin{minipage}{0.6\textwidth}
    \begin{align*}
        &\stats(0) = \epsilon\\
        &\stats(I;\,C) = \stats(I),\, \stats(C)\\
        &\stats(\{\,C\,\}) = \stats(C)\\
        &\stats(\CIf{e}{\q}{C_1}{C_2}{l,t}) =\\
            &\quad(\CIf{e}{\q}{C_1}{C_2}{l,t}),\, \stats(C_1),\, \stats(C_2)\\
        &\stats(\CCalling{\vx\q}{X}{\vx\p, \vx a}{C}{l,t}) =\\
      &\quad(\CCalling{\vx\q}{X}{\vx\p, \vx a}{C}{l,t}), \stats(C)\\
        &\stats(l,t : \eta) = (l,t : \eta) \ \hbox{otherwise}\\
        &\stats(C) = [\stats(C)\ |\ \p \in \pn(C)]\\
        &\keys(C) = [ (l,t)\ |\ (l,t:\eta) \in \stats(C)]
    \end{align*}
  \end{minipage}%
  \begin{minipage}{0.40\textwidth}
    \begin{align*}
    &\pn(0) = \emptyset\\
    &\pn(I;\,C) = \pn(I) \cup \pn(C)\\
    &\pn(\{\,C\,\}) = \pn(C)\\
    &\pn(\CCom{\p}{e}{\q}{x}{l,t}) = \{\p,\q\}\\
    &\pn(\CRecv{\p}{\q}{x}{l,t}) = \{\q\}\\
    &\pn(\CSelect{\p}{\q}{L}{l,t}) = \{\p,\q\}\\
    &\pn(\CRecvSelect{\p}{\q}{L}{l,t}) = \{\q\}\\
    &\pn(\CPure{x}{\p}{e}{l,t}) = \{\p\}\\
    &\pn(\CIf{e}{\p}{C_1}{C_2}{l,t}) =\\
    &\quad\{\p\} \cup \pn(C_1) \cup \pn(C_2)\\
    &\pn(\CCall{X}{\vx\p,\vx a}{l,t}) = \vx\p\\
    &\pn(\CCalling{\vx\q}{X}{\vx\p, \vx a}{C}{l,t}) = \vx \p\\
    &\pn(\At{v}{\p}) = \{\p\}\\
    &\pn(\p.x) = \{\p\}
    \end{align*}
  \end{minipage}%

      \caption{Well-formedness (representative rules)}%
  \label{fig:well-formedness-main-rules}
\end{figure}

To prove these properties we need an invariant that characterizes how the rules of $O_{3}$ preserve the intuition from \Cref{sec:motivation}. For example, consider the following configurations:
\begin{gather}
\CCfg{\CRecv{\p}{\q}{x}{l,\tau_0}}{\Sigma}{\{\p \mapsto \emptyset, \q \mapsto \emptyset\}}
\label{eq:conf-1}
\\
\CCfg{\CCom{\p}{e}{\q}{x}{l,\tau_0}}{\Sigma}{\{\p \mapsto \emptyset, \q \mapsto \{(l,\tau,v)\}\}}
\label{eq:conf-2}
\\
\CCfg{\{\CCom{\p}{e}{\q}{x}{1,\tau} \};\,\{\CCom{\p}{e'}{\q}{x}{1,\tau} \}}{\Sigma}{\{\p \mapsto \emptyset, \q \mapsto \emptyset\}}
\label{eq:conf-3}
\\
\CCfg{\CCalling{\p}{X}{\p,\q}{ \CCom{\p}{e}{\q}{x}{1,\tau_0} }{3,\tau_0}}{\Sigma}{\{\p \mapsto \emptyset, \q \mapsto \emptyset\}}
\label{eq:conf-4}
\end{gather}
Configuration~\eqref{eq:conf-1} is not reachable because $\CRecv{\p}{\q}{x}{l,\tau}$ never occurs unless $\q$ has an undelivered message from $\p$. Dually, configuration~\eqref{eq:conf-2} is not reachable because $\p$ has a message in its queue that, according to the choreography, has not yet been sent. Configuration~\eqref{eq:conf-3} is unreachable because the two instructions share the same integrity key; we will show that $\NextToken$ ensures such configurations never arise. Likewise, $\NextToken$ also forbids configuration~\eqref{eq:conf-4}, since the token of the instruction $\CCom{\p}{e}{\q}{x}{1,\tau_0}$ must have been derived from the integrity key of the enclosing call $\CCalling{\p}{X}{\p,\q}{ \dots }{3,\tau_0}$. Specifically, $\tau_0 \ne \NextToken(3,\tau_0)$. To specify this last property, recall that tokens are represented as lists of integers $l_1 :: l_2 :: \dots$. We say $(l_1,t_1)$ is a \emph{prefix} of $(l_2,t_2)$---written $(l_1,t_1) \prec (l_2,t_2)$---if the list $l_1::t_1$ is a prefix of $l_2::t_2$ and that the keys are \emph{disjoint} if neither is a prefix of the other.

Following convention, we formalize the properties of reachable configurations by defining which configurations and procedures are \emph{well-formed}.
\Cref{fig:well-formedness-main-rules} highlights the most interesting rules that define well-formedness, where $\wf$ reads `well-formed' -- the rest can be found in \Cref{sec:well-formedness}.
In particular, well-formedness ensures that:
\begin{enumerate}
    \item (\textsc{C-WF-Send}) $\CRecv{\p}{\q}{x}{l,\tau}$ occurs in $C$ if and only if $(l,\tau,v) \in K(\q)$ for some $v$.
    \item (\textsc{C-WF-Select}) $\CRecvSelect{\p}{\q}{L}{l,\tau}$ occurs in $C$ if and only if $(l,\tau,L) \in K(\q)$.
    \item (\textsc{C-WF-Def}) Each $I$ in $C$ has a distinct integrity key $l,t$, where $t$ is not a placeholder.
    \item (\textsc{C-WF-Calling}) If the integrity key of $I$ is a prefix of the integrity key of $I'$ then $I$ is a communication-in-progress $\CCalling{\vx\p}{X}{\vx \p, \vx a}{C'}{l,t}$ and $I'$ is in $C'$.
\end{enumerate}
Well-formedness also guarantees other properties seen in other choreography models, e.g., that procedures contain no free variables and that processes waiting to enter a procedure have the same local behaviour in the original procedure body and the current choreography~\cite{montesi2023}. As in prior work~\cite{montesi2023,CMP23}, the latter check is made by using endpoint projection ($\EPP{C}{\p}$), which returns the local behaviour of a process in a choreography and is defined later in~\Cref{sec:epp}.

\begin{theorem}[Preservation]\label{thm:preservation}
    If $\CCfg{C}{\Sigma}{K}$ is well-formed and $\CCfg{C}{\Sigma}{K} \CStep{\p} \CCfg{C'}{\Sigma'}{K'}$, then $\CCfg{C'}{\Sigma'}{K'}$ is well-formed.
\end{theorem}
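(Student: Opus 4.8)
The plan is to prove \Cref{thm:preservation} by induction on the derivation of $\CCfg{C}{\Sigma}{K} \CStep{\p} \CCfg{C'}{\Sigma'}{K'}$, proceeding by a case analysis on the last rule applied. Well-formedness of a configuration is a conjunction of local clauses (\Cref{fig:well-formedness-main-rules} and \Cref{sec:well-formedness}): the queue/in-progress correspondences \textsc{C-WF-Send} and \textsc{C-WF-Select}; distinctness of integrity keys and absence of token placeholders in $C$; the prefix/nesting condition \textsc{C-WF-Calling}; and the usual structural conditions (no free variables, $\pn$-containment, lexical scoping, and agreement under endpoint projection between a not-yet-entered role's behaviour in a call-in-progress and in the declared body). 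For each rule I would show that each clause of the resulting configuration follows from the clauses of the starting configuration, reusing the premises of the rule together with a handful of auxiliary lemmas.

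For the communication rules the reasoning is ``paired''. \textsc{C-Send} and \textsc{C-Select} simultaneously turn the instruction into an in-progress term with key $(l,\tau)$ \emph{and} insert exactly one message with that same key into $K(\q)$; since the source was well-formed, $K(\q)$ previously held no message with key $(l,\tau)$, so the iff of \textsc{C-WF-Send}/\textsc{C-WF-Select} is restored, and key-distinctness guarantees no \emph{other} in-progress term is disturbed. Dually, \textsc{C-Recv} and \textsc{C-OnSelect} remove an in-progress term together with its unique matching message. \textsc{C-Compute}, \textsc{C-If}, and the substituting part of \textsc{C-Recv}/\textsc{C-OnSelect} merely delete one key and substitute a value for a variable; here I would appeal to a routine substitution lemma (keys are untouched, $\pn$ does not grow, no free variables are created), and observe that \textsc{C-If}'s wrapping of $C_i$ in a block is exactly what preserves lexical scoping. \textsc{C-Delay}, \textsc{C-Block}, and \textsc{C-Delay-Proc} follow from the induction hypothesis on the subderivation, together with the observation that the untouched prefix instruction $I$ (resp.\ the enclosing call-in-progress frame) keeps its key, and that the substep cannot create a key or nesting that collides with $I$ --- which in turn uses a small lemma pinning down exactly which keys a single step may introduce.

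The crux is the procedure-call machinery, and \textsc{C-First} in particular. It substitutes a fresh copy $C_1'$ of the body $C_1$ in which the placeholder $\Tok$ has been replaced by $\tau' = \NextToken(l,\tau)$, contributing to the configuration the new keys $\{(l',\tau') : l' \text{ a line of } C_1\}$. Mutual distinctness of these keys is immediate from distinctness of the static line numbers of $\mathscr C$ (part of well-formedness of the declaration). The hard part is showing $\tau'$ is \emph{fresh}, i.e.\ that no key in the current configuration already uses token $\tau'$, so that the new keys collide with nothing and the new call-in-progress (still keyed $(l,\tau)$) acquires exactly one new layer of $\prec$-children, re-establishing \textsc{C-WF-Calling}. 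I expect this to force a strengthening of the invariant: I would carry, as part of (or alongside) well-formedness, the fact that the integrity keys occurring in a reachable configuration form a forest under the prefix order $\prec$, whose roots carry the initial token $\tau_0$ and in which the children of a node $(l,\tau)$ are precisely keys of the form $(l',\NextToken(l,\tau))$ occurring inside the call-in-progress framed by $(l,\tau)$. Given this shape, together with injectivity of $\NextToken$ and the fact that $\NextToken(l,\tau) = l :: \tau$ is strictly longer than $\tau$ --- so that $\NextToken$ is well-founded and no token is $\NextToken$-below itself, which is exactly what rules out configuration~\eqref{eq:conf-4} since $\tau_0 \ne \NextToken(3,\tau_0)$ --- the token $\tau'$ names a genuinely new node of the forest, hence is fresh. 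One then checks that every non-procedure rule preserves this forest (none changes a key; they only delete keys or relocate instructions), that \textsc{C-Enter} and \textsc{C-Last} preserve it trivially, and that \textsc{C-First} and \textsc{C-Delay-Proc} extend it exactly as described. This forest invariant, and checking that all twelve rules maintain it, is the main obstacle.

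A secondary subtlety, confined to the procedure cases, is the projection-agreement clause of \textsc{C-WF-Calling}: right after \textsc{C-First} the not-yet-entered roles $\r_i$ must still satisfy $\EPP{C_1'}{\r_i} = \EPP{C_1}{\q_j}$ up to the induced renaming of roles and tokens. This needs a projection lemma stating that endpoint projection commutes with the simultaneous substitution of roles, arguments, and the token performed by \textsc{C-First}, plus the observation that \textsc{C-Enter}, \textsc{C-Last}, and \textsc{C-Delay-Proc} move $C_1'$ and the projections of the remaining roles in lock-step --- the same bookkeeping as in the standard preservation proof for procedural choreographies~\cite{montesi2023}. With the forest invariant, the substitution lemma, and the projection lemma in hand, the remaining clauses are preserved by inspection.
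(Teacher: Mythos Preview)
Your overall plan --- induction on the transition derivation, case split on the last rule, paired reasoning for the communication rules, and identifying \textsc{C-First} as the crux --- matches the paper's proof. The treatment of \textsc{C-Send}/\textsc{C-Recv} and the appeal to key-distinctness are essentially identical.

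The difference is in how you handle \textsc{C-First}. You propose to \emph{strengthen} well-formedness with a forest invariant and then argue that $\tau' = \NextToken(l_1,\tau_1)$ is a genuinely new node, hence fresh. But no strengthening is needed: the global prefix/nesting clause of \textsc{C-WF} in \Cref{sec:well-formedness} (``if $\key(I_1)\prec\key(I_2)$ then $I_1$ is a call-in-progress and $I_2$ lies in its body''), together with key-distinctness, already \emph{is} the forest invariant you describe. Do not confuse it with the per-instruction rule \textsc{C-WF-Calling}, which checks something weaker. The paper exploits this clause directly rather than going through freshness: since the call being unfolded has key $(l_1,\tau_1)$ and is a \emph{call} (not yet a call-in-progress), the clause forbids any existing key $(l_2,t_2)$ with $(l_1,\tau_1)\prec(l_2,t_2)$; hence every other key is either (a) disjoint from $(l_1,\tau_1)$ or (b) a proper prefix of it, i.e.\ belongs to an enclosing call-in-progress. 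In case~(a) disjointness propagates to every new key $(l',\tau')$, and in case~(b) the chain $(l_2,t_2)\prec(l_1,\tau_1)\prec(l',\tau')$ re-establishes the nesting clause. Freshness of $\tau'$ then falls out as a corollary (nothing existing has $(l_1,\tau_1)$ as a prefix), not as a lemma you must prove first.

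So your approach is correct but takes a detour: you would re-derive, as a side invariant, a structure that the paper already baked into well-formedness, and you would argue freshness $\Rightarrow$ distinctness $+$ nesting, whereas the paper argues distinctness $+$ nesting directly via the disjoint/prefix dichotomy. Your route is a bit longer but buys nothing extra here. The projection-agreement point you raise for \textsc{C-WF-Calling} is real and is handled as you say (commutation of EPP with the role/argument/token substitution, as in~\cite{montesi2023}); the paper's proof sketch simply elides it.
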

\begin{proof}
    By induction on the rules of $\CStep{\p}$. We focus on the rules for communication and procedure invocation.

    \textsc{C-Send} replaces $\CCom{\p}{e}{\q}{x}{l,\tau}$ with $\CRecv{\p}{\q}{x}{l,\tau}$ and adds a message $(l,\tau,v)$. By the induction hypothesis, $(l,\tau,v)$ is not already in $K$.

    \textsc{C-Recv} eliminates $\CRecv{\p}{\q}{x}{l,\tau}$ and removes a message $(l,\tau,v)$. Since each instruction has a distinct integrity key by hypothesis, no other $\CRecv{\p}{\q}{x}{l,\tau}$ term occurs in $C$.

    \textsc{C-First} introduces new terms into the choreography by invoking the call $\CCall{X}{\vx\p, \vx a}{l_1,\tau_1}$. By the induction hypothesis, for any other instruction $l_2,t_2:I$ in $C$, either (a) keys $l_1,t_1$ and $l_2,t_2$ are disjoint; or (b) $l_2,t_2: I$ is a call-in-progress containing $\CCall{X}{\vx\p, \vx a}{l_1,\tau_1}$. In case (a), disjointness implies any instruction in the body of the procedure $C'[\vx{\q}, \vx{\q.y}, \Tok \mapsto \vx{\p}, \vx{\p.x}, \tau']$ will also have a key that is disjoint from $l_2,t_2$. In case (b), notice $\forall l,\, (l_2,t_2) \prec (l_1,t_1) \prec (l, \NextToken(l_1,t_1))$; hence any interaction in the body has a key where $(l_2,t_2)$ is a prefix.
\end{proof}

\begin{theorem}[Deadlock-Freedom]\label{thm:progress}
    If $\CCfg{C}{\Sigma}{K}$ is well-formed, then either $C \equiv 0$ or $\CCfg{C}{\Sigma}{K} \CStep{\p} \CCfg{C'}{\Sigma'}{K'}$ for some $\p,C',\Sigma',K'$.
\end{theorem}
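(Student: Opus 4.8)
The plan is to argue by induction on the structure of $C$, using a size measure that also counts the bodies nested inside blocks $\Block{C_1}$ and procedure-calls-in-progress. If $C \equiv 0$ we are in the first disjunct and done, so suppose otherwise; then $C$ is a block $\Block{C_1};\,C_2$ (possibly with $C_2 = 0$) or has the form $(l,\tau : I);\,C'$ for one of the instruction shapes of \Cref{fig:chor-syntax}. In the block case, well-formedness of $\CCfg{C}{\Sigma}{K}$ passes down to $\CCfg{C_1}{\Sigma}{K}$ (the well-formedness constraints phrased via $\stats(C)$ and $\keys(C)$ specialize to $C_1$, which also inherits closedness), so the induction hypothesis gives either $C_1 \equiv 0$ --- whence $C \equiv C_2$, in keeping with $\Block{0}\fatsemi C_2 = C_2$, and we recurse on $C_2$ --- or $\CCfg{C_1}{\Sigma}{K} \CStep{\q} \CCfg{C_1'}{\Sigma'}{K'}$, which \textsc{C-Block} lifts to a step of $C$.

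The core of the proof is the case analysis on the head instruction $I$, the point being that the head is \emph{always} enabled, so that \textsc{C-Delay} is never needed for progress. For the ``ordinary'' instructions --- communication $\CCom{\p}{e}{\q}{x}{l,\tau}$, selection $\CSelect{\p}{\q}{L}{l,\tau}$, local computation $\CPure{x}{\p}{e}{l,\tau}$, and conditional $\CIf{e}{\p}{C_1}{C_2}{l,\tau}$ --- the only premise needing justification is the evaluation judgement $\CEval{\Sigma(\p)}{e}{(v,\sigma)}$ (resp.\ $\CEval{\Sigma(\p)}{e}{v}$): well-formedness entails $C$ is closed, hence so is $e$, and $(\vdash)$ is assumed total on closed expressions; hence \textsc{C-Send}, \textsc{C-Select}, \textsc{C-Compute}, or \textsc{C-If} applies. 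For a procedure call $\CCall{X}{\vx\p,\vx a}{l,\tau}$, well-formedness supplies a matching declaration in $\mathscr C$ with a nonempty role list, so choosing any $\p \in \vx\p$ enables \textsc{C-First}.

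Two families of cases carry the real content. First, the runtime receive terms $\CRecv{\p}{\q}{x}{l,\tau}$ and $\CRecvSelect{\p}{\q}{L}{l,\tau}$: here one must exclude a process blocked forever on a message that was never sent. This is exactly what well-formedness was engineered to preclude --- rules \textsc{C-WF-Send} and \textsc{C-WF-Select} are \emph{biconditionals}, so the mere presence of a receive term forces a matching message $(l,\tau,v) \in K(\q)$ (resp.\ $(l,\tau,L) \in K(\q)$), and \textsc{C-Recv} (resp.\ \textsc{C-OnSelect}) fires. Second, a procedure-call-in-progress $\CCalling{\vx\p}{X}{\vx\q,\vx a}{C_1}{l,\tau}$: one cases on the set $\vx\p$ of roles still to enter. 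In a well-formed configuration this roster is nonempty, so \textsc{C-Last} fires when $\vx\p$ is a singleton and \textsc{C-Enter} fires (for any $\p \in \vx\p$) otherwise; either way $C$ steps --- and a process that has already entered still has \textsc{C-Delay-Proc} available on the body $C_1$, which by well-formedness of the enclosing configuration is itself covered by the induction hypothesis.

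I do not expect a single deep obstacle: the invariant doing the work --- receive terms matched by queued messages, integrity keys pairwise disjoint so no instruction is shadowed, and pending procedure bodies agreeing with their declarations --- is already bundled into well-formedness and kept stable by \Cref{thm:preservation}, so the theorem largely reduces to checking that every syntactic shape of $C$ admits some rule. The care goes into bookkeeping: framing the induction so that bodies nested in blocks and in calls-in-progress are covered; handling the structural congruence that cleans up $\Block{0};\,C$ and a spent call-in-progress; and the case split on the call roster --- precisely the situation that the shape of \textsc{C-WF-Calling} is tuned to keep under control.
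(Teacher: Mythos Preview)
Your proposal is correct and takes essentially the same approach as the paper: structural induction on $C$, arguing that the head instruction is always enabled, with well-formedness supplying closedness of expressions (so \textsc{C-Send}, \textsc{C-Compute}, \textsc{C-If} fire) and matching queued messages for the runtime receive terms (so \textsc{C-Recv}, \textsc{C-OnSelect} fire). Your write-up is in fact more explicit than the paper's, which dispatches the block, procedure-call, and call-in-progress cases with ``the other cases follow similarly''; one small caveat is that your claim ``in a well-formed configuration this roster is nonempty'' is not obviously entailed by rule \textsc{C-WF-Calling} as written (it allows $k=0$), though the paper's proof does not address this edge case either.
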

\begin{proof}
    By induction on the structure of $C$, making use of the full definition of well-formedness in \Cref{sec:well-formedness}. In each case, we observe the first instruction $I$ of $C$ can always be executed. For instance, if $I \equiv \CCom{\p}{e}{\q}{x}{l,\tau}$ then the \textsc{C-Send} rule can be applied because well-formedness implies $e$ has no free variables. If $I \equiv \CRecv{\p}{\q}{x}{l,\tau}$, there must be a message $(l,\tau,v) \in K(\q)$ because the configuration is well-formed. The other cases follow similarly.
\end{proof}

We end this section with a formalization of communication integrity. Consider the buggy execution in \Cref{fig:forwarding-data}: in a model without integrity keys, the execution reaches a configuration 
$$\CCfg{\BRecv{\proc s}{\proc c}{txt};\,\BRecv{\proc s}{\proc c}{key};\,\dots}{\Sigma}{\proc c \mapsto v_{key}, v_{txt}},$$
where $v_{key}$ is the value produced by $\proc{ks}.getKey()$ and $v_{txt}$ is the value produced by $\proc{cs}.getText()$. A CIV occurs if the configuration can make a transition that consumes $\BRecv{\proc s}{\proc c}{txt}$ and $v_{key}$ together, binding $\proc{c}.txt$ to $v_{key}$. We therefore want to ensure:
\begin{itemize}
    \item There is only one way a communication-in-progress instruction can be consumed; and
    \item The instruction is consumed together with the correct message.
\end{itemize}

\begin{definition}[Send/receive transitions]
    A \emph{send transition} \(\CCfg{C}{\Sigma}{K} \CStep{\p} \CCfg{C'}{\Sigma'}{K'}\) is a transition with a derivation that ends with an application of \emph{\textsc{C-Send}}. Likewise, a \emph{receive transition} is a transition with a derivation that ends with \emph{\textsc{C-Recv}}.
\end{definition}

\begin{theorem}[Communication Integrity]\label{thm:integrity}
    Let $e = c_0 \CStep{\p_1} \cdots \CStep{\p_{k+1}} c_{k+1}$ be an execution ending with a send transition $c_{k} \CStep{\p} c_{k+1}$, which produces instruction $\CRecv{\p}{\q}{x}{l,\tau}$ and message $m$. Let $e' = c_0 \CStep{\p_1} \cdots \CStep{\p_n} c_n\ (n > k)$ be an execution extending $e$, where $\CRecv{\p}{\q}{x}{l,\tau}$ has not yet been consumed. Then there is at most one receive transition $c_n \CStep{\q} c_{n+1}$ consuming $\CRecv{\p}{\q}{x}{l,\tau}$. Namely, it is the transition that consumes $\CRecv{\p}{\q}{x}{l,\tau}$ and $m$ together.
\end{theorem}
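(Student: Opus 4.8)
The plan is to reduce the statement to a single queue-contents fact: in the configuration $c_n$, the multiset $K_n(\q)$ contains \emph{exactly one} message whose integrity key is $(l,\tau)$, and that message is $m$. Once this is in hand, the shape of rule \textsc{C-Recv} does the rest: its premise forces any receive transition consuming $\CRecv{\p}{\q}{x}{l,\tau}$ to remove a message of the form $(l,\tau,v') \in K_n(\q)$, hence to remove $m$; and the target configuration is then fully determined (substitute the payload of $m$ for $\q.x$ in the continuation, delete $m$), so all such transitions coincide. To set this up I would first note that the configurations of an execution are well-formed by Theorem~\ref{thm:preservation} (initial configurations being well-formed by assumption), so in particular $c_{k+1},\dots,c_n$ are all well-formed; by \textsc{C-WF-Def} each of them has pairwise distinct integrity keys, so at every index at most one instruction carries key $(l,\tau)$.

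The core is a stability argument along the trace from $c_{k+1}$ to $c_n$. At $c_{k+1}$: by well-formedness of $c_k$ and \textsc{C-WF-Send}, $K_k(\q)$ contained no message with key $(l,\tau)$, so the send transition $c_k \CStep{\p} c_{k+1}$ adds exactly $m=(l,\tau,v)$ (with $v$ the computed value) and rewrites the $(l,\tau)$-instruction from $\CCom{\p}{e}{\q}{x}{l,\tau}$ to $\CRecv{\p}{\q}{x}{l,\tau}$, leaving its key unchanged; thus in $c_{k+1}$ the unique $(l,\tau)$-instruction is $\CRecv{\p}{\q}{x}{l,\tau}$ and the unique $(l,\tau)$-message in $K_{k+1}(\q)$ is $m$. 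I then claim these two facts persist up to $c_n$. Inspecting the rules: only \textsc{C-Send} and \textsc{C-Select} add to a queue, and the added message inherits the key of the reduced instruction; only \textsc{C-Recv} and \textsc{C-OnSelect} remove from a queue, and they consume an instruction of the same key as the removed message. Since the only $(l,\tau)$-instruction is the communication-in-progress $\CRecv{\p}{\q}{x}{l,\tau}$, it cannot be the subject of \textsc{C-Send}, \textsc{C-Select}, or \textsc{C-OnSelect} (the last because it is a receive, not a $\CRecvSelect{\p}{\q}{L}{l,\tau}$ — which would anyway collide on keys), and it is not consumed by any \textsc{C-Recv} in $e'$ by hypothesis. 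Nor can a fresh $(l,\tau)$-instruction appear: the only key-introducing rule is \textsc{C-First}, which by the argument in the proof of Theorem~\ref{thm:preservation} produces only keys that are disjoint from, or strict extensions of, existing ones. Hence no step between $c_{k+1}$ and $c_n$ alters the set $\{v' \mid (l,\tau,v') \in K_\cdot(\q)\}$ or the identity of the unique $(l,\tau)$-instruction, and in $c_n$ we still have $\CRecv{\p}{\q}{x}{l,\tau}$ present with $m$ the only $(l,\tau)$-message in $K_n(\q)$.

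To conclude, consider any receive transition $c_n \CStep{\q} c_{n+1}$ whose derivation ends with \textsc{C-Recv} applied to $\CRecv{\p}{\q}{x}{l,\tau}$ (the congruence/delay rules \textsc{C-Delay}, \textsc{C-Block}, \textsc{C-Delay-Proc} merely locate the instruction inside $C_n$ and do not touch $K_n$). By the premise of \textsc{C-Recv} the consumed message is some $(l,\tau,v') \in K_n(\q)$, so by the stability fact $(l,\tau,v')=m$; the resulting $c_{n+1}$ is then determined, proving there is at most one such transition and that it consumes $\CRecv{\p}{\q}{x}{l,\tau}$ and $m$ together. The main obstacle is the middle paragraph — the stability lemma — and within it the one genuinely non-local point, namely that \textsc{C-First}'s token computation never recreates an already-used key; that part is exactly what the injectivity/"strictly extends" reasoning from the proof of Theorem~\ref{thm:preservation} supplies, so it can be cited rather than redone.
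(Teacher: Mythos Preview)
Your proposal is correct and follows essentially the same approach as the paper: reduce the claim to uniqueness of the $(l,\tau)$-message in $K(\q)$, then establish that uniqueness by induction along the trace from $c_{k+1}$ to $c_n$, using well-formedness (via Preservation) to rule out the appearance of a second instruction with key $(l,\tau)$ and the hypothesis to rule out removal of $m$. The only difference is cosmetic: where you explicitly case-analyze \textsc{C-First} to argue no fresh $(l,\tau)$-instruction can appear, the paper simply invokes well-formedness of $c_m$ directly (distinct keys plus the surviving receive-in-progress already exclude any $\CCom{\p'}{e}{\q}{x'}{l,\tau}$), making the induction step a two-line observation.
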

\begin{proof}
    By definition of \textsc{C-Send}, $m$ has the form $(l,\tau,v)$. By definition of \textsc{C-Recv}, if there exists a transition $c_n \to c_{n+1}$ that consumes $\CRecv{\p}{\q}{x}{l,\tau}$, then the transition also consumes a message $(l,\tau,v')$, for some $v'$. It therefore suffices to show the message $(l,\tau,v')$ is unique and that $v' = v$. This follows by induction on the length $m$ of the extension:
    \begin{itemize}
        \item \emph{Base case:} Well-formedness implies there is no message $(l,\tau,v')$ in $c_k$. Hence the message $(l,\tau,v)$ in $c_{k+1}$ is unique.
        \item \emph{Induction step:} Observe that the transition $c_m \to c_{m+1}$ cannot remove $(l,\tau,v)$; this would require consuming $\CRecv{\p}{\q}{x}{l,\tau}$, which cannot happen in $e'$ by hypothesis. Also observe that the transition cannot add a new message with integrity key $(l,\tau)$; this would require consuming an instruction $\CCom{\p'}{e}{\q}{x'}{l,\tau}$, which cannot exist in $c_m$ by well-formedness. Hence $(l,\tau,v)$ is unique in $c_{m+1}$.
    \end{itemize}
\end{proof}

\section{Process Model and Endpoint Projection}\label{sec:processes}

\subsection{Syntax}\label{sec:proc-syntax}

\Cref{fig:proc-syntax} presents the syntax for out-of-order processes. A term $\p[P]$ is a process named $\p$ with behavior $P$. Networks, ranged over by $N,M$, are parallel compositions of processes. Compared to prior work, certain process instructions need to be annotated with integrity keys (for instance, message send $\PSend{\p}{l}{t}{e}$ and procedure call $\PCall{X}{\vx \p, \vx a}{l}{t}$). In addition, when receiving a message it is no longer necessary to specify a sender---it suffices to write $\PRecv{x}{l}{t};\,P$ instead of the more traditional $\p\,\PRecv{x}{l}{t};\,P$---because integrity keys functionally determine the variable to which the message payload should be bound.

\begin{figure}
    \centering\small
\begin{subfigure}{0.90\textwidth}
\begin{align*}
\mathscr P &::= \{X_{i}(\vx \p_{i}, \vx x_{i}) = C_{i}\}_{i \in \mathcal I}&\emph{(decls)}&&\\
P, Q &::= I;\, P &\emph{(seq)}&\quad\alt \{\,P\,\} & \emph{(block)}\\
&\alt 0 & \emph{(end)}&&\\
I &::= \PSend{\p}{\Line}{t}{e} & \emph{(send)}&\quad\alt \PRecv{x}{\Line}{t} &\emph{(receive)}\\
&\alt \PSet{x}{e} &\emph{(expr)}&\quad\alt \PSelect{\p}{\Line}{t}{L} &\emph{(choice)}\\
&\alt \PBranch{l_i,\tau,L_i}{P_i}{i \in \mathcal I} &\emph{(branch)}&\quad\alt \PIf{e}{P}{Q} &\emph{(cond)}\\
&\alt \PCall{X}{\vx \p, \vx a}{\Line}{t} & \emph{(call)}&&\\
e &::= f(\vx e) &\emph{(app)}&\quad\alt a &\emph{(atom)}\\
a &::= x &\emph{(var)}&\quad\alt v&\emph{(val)}\\
N,M &::= \p[P] &\emph{(proc)} &\quad\alt (N\ |\ M) &\emph{(par)}
\end{align*}   
\end{subfigure}
    \caption{Syntax for out-of-order processes}
    \label{fig:proc-syntax}
\end{figure}

\subsection{Semantics}\label{sec:proc-semantics}

\begin{figure}
    
\begin{prooftree}
    \AxiomC{$\CEval{\Sigma(\p)}{e}{(v,\sigma)}$}
    \AxiomC{$M = K(\q) \uplus \{(l,\tau,v)\}$}
    \RightLabel{\textsc{P-Send}}
    \BinaryInfC{ $\PCfg{ \p[\PSend{\q}{\Line}{\tau}{e};\,P] }{\Sigma}{K} \PStep{\p} \PCfg{ \p[P] }{\Sigma[\p \mapsto \sigma]}{K[\q \mapsto M]}$ }
\end{prooftree}

\begin{prooftree}
    \AxiomC{$(\Line,\tau, v) \in K(\q)$}
    \AxiomC{$M = K(\q) \setminus \{(\Line,\tau, v)\}$}
    \RightLabel{\textsc{P-Recv}}
    \BinaryInfC{ $\PCfg{ \q[\PRecv{x}{\Line}{\tau};\,Q] }{\Sigma}{K} \PStep{\q} \PCfg{ \q[Q[x \mapsto v]] }{\Sigma}{K[\q \mapsto M]}$ }
\end{prooftree}

\begin{prooftree}
    \AxiomC{$M = K(\q) \cup \{(l,\tau,L)\}$}
    \RightLabel{\textsc{P-Select}}
    \UnaryInfC{ $\PCfg{ \p[\PSelect{\q}{\Line}{\tau}{L};\,P] }{\Sigma}{K} \PStep{\p} \PCfg{ \p[P] }{\Sigma}{K[\q \mapsto M]}$ }
\end{prooftree}

\begin{prooftree}
    \AxiomC{$K(\q) = \{(\Line_i,\tau,L_i)\} \cup M$}
    \AxiomC{$i \in \mathcal I$}
    \RightLabel{\textsc{P-OnSelect}}
    \BinaryInfC{ $\PCfg{ \q[\PBranch{l_j,\tau,L_j}{Q_j}{j \in \mathcal I};\,Q] }{\Sigma}{K} \PStep{\q} \PCfg{ \q[\{Q_i\};\,Q] }{\Sigma}{K[\q \mapsto M]}$ }
\end{prooftree}

\begin{prooftree}
    \AxiomC{$\CEval{\Sigma(\p)}{e}{(v,\sigma)}$}
    \RightLabel{\textsc{P-Compute}}
    \UnaryInfC{ $\PCfg{ \p[\PSet{x}{e};\,P] }{\Sigma}{K} \PStep{\p} \PCfg{ \p[P[x \mapsto v]] }{\Sigma[\p \mapsto \sigma]}{K}$ }
\end{prooftree}

\begin{prooftree}
    \AxiomC{$\CEval{\Sigma(\p)}{e}{v}$}
    \AxiomC{if $v = \True$ then $i = 1$ else $i = 2$}
    \RightLabel{\textsc{P-If}}
    \BinaryInfC{ $\PCfg{ \p[\PIf{e}{P_1}{P_2};\,P] }{\Sigma}{K} \PStep{\p} \PCfg{ \p[\{P_i\};\,P] }{\Sigma}{K}$ }
\end{prooftree}

\begin{prooftree}
    \AxiomC{ $\PCfg{ \p[P_1] }{\Sigma}{K} \PStep{\p} \PCfg{ \p[P_1'] }{\Sigma'}{K'}$ }
    \RightLabel{\textsc{P-Block}}
    \UnaryInfC{ $\PCfg{ \p[\{P_1\};\,P_2] }{\Sigma}{K} \PStep{\p} \PCfg{ \p[\{P_1'\};\,P_2] }{\Sigma'}{K'}$ }
\end{prooftree}

\begin{prooftree}
    \AxiomC{ $\PCfg{ \p[P] }{\Sigma}{K} \PStep{\p} \PCfg{ \p[P'] }{\Sigma'}{K'}$ }
    \RightLabel{\textsc{P-Delay}}
    \UnaryInfC{ $\PCfg{ \p[I;\,P] }{\Sigma}{K} \PStep{\p} \PCfg{ \p[I;\,P'] }{\Sigma'}{K'}$ }
\end{prooftree}

\begin{prooftree}
    \AxiomC{ $(X(\vx \q,\vx y) = Q) \in \mathscr P$ }
    \AxiomC{ $\NextToken(l,\tau) = \tau'$ }
    \RightLabel{\textsc{P-Call}}
    \BinaryInfC{ $\PCfg{ \p[\PCall{X}{\vx\p,\vx a}{l}{\tau};\,P] }{\Sigma}{K} \PStep{\p} \PCfg{ \p[\{Q[\vx \q, \vx y, \Tok \mapsto \vx \p, \vx a, \tau']\};\,P] }{\Sigma}{K}$ }
\end{prooftree}

\begin{prooftree}
    \AxiomC{ $\PCfg{ N }{\Sigma}{K} \PStep{\p} \PCfg{ N' }{\Sigma'}{K'}$ }
    \RightLabel{\textsc{P-Par}}
    \UnaryInfC{ $\PCfg{ N \Par M }{\Sigma}{K} \PStep{\p} \PCfg{ N' \Par M }{\Sigma'}{K'}$ }
  \end{prooftree}
  \caption{Semantics for out-of-order processes}
    \label{fig:proc-semantics}
\end{figure}

The semantics for out-of-order processes appears in \Cref{fig:proc-semantics}. It is a labelled transition system on \emph{process configurations} $\PCfg{N}{\Sigma}{K}$, where $N$ is a network and $\Sigma,K$ have the same meaning as in \Cref{sec:chor-semantics}. We also assume an implicit set of procedure declarations $\mathscr P$.

The transition rules of \Cref{fig:proc-semantics} are similar to prior work. \textsc{P-Send} adds a message $(l,\tau,v)$ to the undelivered messages of $\q$, whereas \textsc{P-Recv} removes the message and substitutes it into the body of the process. Similarly, \textsc{P-Select} adds $(l,\tau,L)$ to the message set and \textsc{P-OnSelect} selects a branch from the set of options $\PBranch{l_j,\tau_j,L_j}{P_j}{j \in \mathcal J}$.   \textsc{P-Call} invokes a procedure, locally computing the next token and substituting the body of the procedure into the process. Rules \textsc{P-Compute}, \textsc{P-If}, and \textsc{P-Par} are standard.

The key novelty of out-of-order processes is the \textsc{P-Delay} rule, which allows a process to perform instructions in \emph{any} order, up to data- and control-dependencies. The latter implies processes cannot evaluate instructions nested within an $\PIfM{}$or $\BBranch$-expression.

\begin{figure}
\begin{subfigure}[t]{0.5\textwidth}
\begin{align*}
&\BuyItem_1(\proc b) =\\
&\qquad \PRecv{\itemID}{1}{\Tok};\\
&\qquad \PSet{\itemm}{sell(\itemID)};\\
&\qquad \PSend{\proc b}{3}{\Tok}{\itemm}\\
&\BuyItem_2(\proc s,\ \itemID) =\\
&\qquad \PSend{\proc s}{1}{\Tok}{\itemID};\\
&\qquad \PRecv{\itemm}{3}{\Tok}\\
&\seller[\PCall{\BuyItem_1}{\buyer_1}{4}{\tau_0};\\
&\qquad\PCall{\BuyItem_1}{\buyer_2}{5}{\tau_0}]\ |\\
&\buyer_1[\PCall{\BuyItem_2}{\seller, 123}{4}{\tau_0}]\ |\\
&\buyer_2[\PCall{\BuyItem_2}{\seller, 543}{5}{\tau_0}]
\end{align*}
\caption{}\label{fig:buy-item-epp}
\end{subfigure}\hfill
\begin{subfigure}[t]{0.5\textwidth}
\begin{align*}
&\val{StreamIt}_1(\proc c) =\\
&\qquad \PSend{\proc c}{1}{\Tok}{produce()};\\
&\qquad \PIfM{(itemsLeft() > 0)} \PThen\\
&\qquad\qquad \PSelect{\proc c}{4}{\Tok}{\small\textsc{More}};\  \PCall{\val{StreamIt}_1}{\proc c}{5}{\Tok}\\
&\qquad \PElse\ \PSelect{\proc c}{6}{\Tok}{\small\textsc{Done}}\\
&\val{StreamIt}_2(\proc p) =\\
&\qquad \PRecv{x}{1}{\Tok};\,\PSet{z}{consume(x)};\\
&\qquad \BBranch\,\{ (4,\Tok,{\small\textsc{More}}) \Rightarrow \PCall{\val{StreamIt}_2}{\p}{5}{\Tok},\\
&\qquad\ \ \ \ \,(6,\Tok,{\small\textsc{Done}}) \Rightarrow 0 \}\\
&\p_1[\PCall{\val{StreamIt}_1}{\proc c}{7}{\tau_0}]\ |\\
&\p_2[\PCall{\val{StreamIt}_1}{\proc c}{8}{\tau_0}]\ |\\
&\proc c[\PCall{\val{StreamIt}_2}{\proc p}{7}{\tau_0};\,\PCall{\val{StreamIt}_2}{\proc p}{8}{\tau_0}]
\end{align*}
\caption{}\label{fig:stream-it-epp}
\end{subfigure}
    \caption{Processes projected from \Cref{fig:chor-examples}.}
    \label{fig:proc-examples}
  \end{figure}

\subsection{Endpoint Projection}\label{sec:epp}

\Cref{fig:epp} defines the \emph{endpoint projection (EPP)} $\EPP{C}{}$ of a choreography $C$, translating it into a network. The rules follow from simple modifications to the textbook definition of EPP~\cite{montesi2023}. Projecting a conditional on a process that does not evaluate the guard uses the auxiliary partial operator $\Merge$, which produces a term that can react to the different branches by receiving different selections (this is standard).

\begin{figure}
    \centering\small
\[ \EPP{\mathscr C}{} = \bigcup_{i \in \mathcal I} \EPP{X_i(\vx \p, \vx {\p.x}) = C_i}{} \]
\[ \EPP{X_i(\vx \p, \vx{\p.x}) = C_i}{} = \{ X_{i,j}(\vx \p \setminus \p_j, \EPP{\vx{\p.x}}{p_j}) = \EPP{C_i}{\p_j}\ |\ \vx \p = \p_1,\dots,\p_n,\ j \le n \} \]
\[
\EPP{\CCom{\p}{e}{\q}{x}{\Line,t};\,C}{\r} =
\begin{cases}
    \PSend{\q}{\Line}{t}{e};\,\EPP{C}{\r} & \hbox{if $\r = \p$}\\
    \PRecv{x}{\Line}{t};\,\EPP{C}{\r} & \hbox{if $\r = \q$}\\
    \EPP{C}{\r} & \hbox{otherwise}
\end{cases}
\]
\[
\EPP{\CRecv{\p}{\q}{x}{\Line,t};\,C}{\r} =
\begin{cases}
    \PRecv{x}{\Line}{t};\,\EPP{C}{\r} & \hbox{if $\r = \q$}\\
    \EPP{C}{\r} & \hbox{otherwise}
\end{cases}
\]
\[
\EPP{\CPure{x}{\p}{e}{\Line,t};\,C}{\r} =
\begin{cases}
    \PSet{x}{\EPP{e}{\r}};\,\EPP{C}{\r} & \hbox{if $\r = \p$}\\
    \EPP{C}{\r} & \hbox{otherwise}
\end{cases}
\]
\[
\EPP{\CSelect{\p}{\q}{L}{\Line,t};\,C}{\r} =
\begin{cases}
    \PSelect{\q}{\Line}{t}{\EPP{e}{\r}};\,\EPP{C}{\r} & \hbox{if $\r = \p$}\\
    \PBranch{l,t,L}{\EPP{C}{\r}}{} & \hbox{if $\r = \q$}\\
    \EPP{C}{\r} & \hbox{otherwise}
\end{cases}
\]
\[
\EPP{\CRecvSelect{\p}{\q}{L}{\Line,t};\,C}{\r} =
\begin{cases}
    \PBranch{l,t,L}{\EPP{C}{\r}}{} & \hbox{if $\r = \q$}\\
    \EPP{C}{\r} & \hbox{otherwise}
\end{cases}
\]
\[
\EPP{\CIf{e}{\p}{C_1}{C_2}{\Line,t};\,C}{\r} =
\begin{cases}
    \PIf{\EPP{e}{\r}}{\EPP{C_1}{\r}}{\EPP{C_2}{\r}};\,\EPP{C}{\r} & \hbox{if $\r = \p$}\\
    \EPP{C_1}{\r} \Merge \EPP{C_2}{\r};\,\EPP{C}{\r} & \hbox{if $\r \in \pn(C_1,C_2) \setminus \p$}\\
    \EPP{C}{\r} & \hbox{otherwise}
\end{cases}
\]
\[
\EPP{\CCall{X_i}{\vx \p,\vx a}{\Line,t};\,C}{\r} =
\begin{cases}
    \PCall{X_{i,j}}{\vx \p \setminus \p_j,\EPP{\vx a}{\p_j}}{\Line}{t};\,\EPP{C}{\p_j} & \hbox{if $\r = \p_j$ where $\vx \p = \p_1,\dots,\p_n$}\\
    \EPP{C}{\r} & \hbox{otherwise}\\
\end{cases}
\]
\[
\EPP{\CCalling{\vx \q}{X_i}{\vx\p,\vx a}{C_1}{l,t};\,C_2}{\r} =
\begin{cases}
    \PCall{X_{i,j}}{\vx \p \setminus \p_j,\EPP{\vx a}{\p_j}}{\Line}{t};\,\EPP{C_2}{\p_j} & \hbox{if $\r \in \vx\q$ and $\r = \p_j$}\\
    \EPP{C_1;\,C_2}{r} & \hbox{if $\r \in \vx\p \setminus \vx\q$}\\
    \EPP{C_2}{\r} & \hbox{otherwise}\\
\end{cases}
\]
\begin{minipage}{0.4\textwidth}  
\[
\EPP{\Block{C_1};\,C_2}{\r} = \{\EPP{C_1}{\r}\}; \EPP{C_2}{\r}
\]
\[ \EPP{a_1,\dots,a_n}{\r} = \EPP{a_1}{\r},\dots, \EPP{a_n}{\r} \]
\[ \EPP{f(e_1,\dots,e_n)}{\r} = f(\EPP{e_1}{\r},\dots, \EPP{e_n}{\r}) \]
\end{minipage}%
\begin{minipage}{0.4\textwidth}  
    \[ \EPP{\At{v}{\p}}{\r} = 
    \begin{cases}
        v& \hbox{if $\r = \p$}\\
        \bot& \hbox{otherwise}
    \end{cases}\]
\[ \EPP{\p.x}{\r} = 
\begin{cases}
    x& \hbox{if $\r = \p$}\\
    \bot& \hbox{otherwise}
\end{cases}\]
\end{minipage}
\begin{multline*}
    \left(\PBranch{l_i,\tau_i,L_i}{P_i}{i \in \mathcal I}\right)\, \Merge\,
        \left(\PBranch{l_j,\tau_j,L_j}{P_j}{j \in \mathcal J} \right)
    = \PBranch{l_k,\tau_k,L_k}{P_k}{k \in \mathcal I \cup \mathcal J}
    \\ \hbox{if $\Disjoint{\{L_i : i \in \mathcal I\}}{\{L_j : j \in \mathcal J\}}$}
\end{multline*}
    \caption{Endpoint projection}
    \label{fig:epp}
\end{figure}

\Cref{fig:proc-examples} shows networks projected from the choreographies of \Cref{fig:chor-examples}. Notice the choreographic procedures \emph{BuyItem} and \emph{StreamIt} are each split into two process procedures---one for each role. Communications in the choreography are, as usual, projected into send and receive instructions. Conditionals in the choreography are projected into an $\mathsf{if}$-instruction at one process and a $\mathsf{branch}$-instruction at the other processes awaiting its decision.

\begin{figure}[t]
  \centering
\begin{minipage}{0.98\textwidth}
\begin{align*}
	&0 \Extends 0&&\\
	&(P_1;\,P_2) \Extends (Q_1;\,Q_2) &&\hbox{if $P_i \Extends Q_i$ for $i=1,2$}\\
	&(\PIf{e}{P_1}{P_2}) \Extends (\PIf{e}{Q_1}{Q_2}) &&\hbox{if $P_i \Extends Q_i$ for $i=1,2$}\\
	&I_1 \Extends I_2 &&\hbox{if $I_1 = I_2$ or $I_1 = I_1 \Merge I_2$}
\end{align*}
\end{minipage}
      \begin{minipage}{0.38\textwidth}
        \begin{align*}
            &\keys(0) = \epsilon\\
            &\keys(I;\,P) = \keys(I), \keys(P)\\
            &\keys(\PSend{\p}{l}{t}{e}) = (l,t)\\
            &\keys(\PRecv{x}{l}{t}) = (l,t)\\
            &\keys(\PSet{x}{e}) = \epsilon\\
            &\keys(\PSelect{\p}{\Line}{t}{L}) = (l,t)
        \end{align*}
    \end{minipage}%
    \begin{minipage}{0.6\textwidth}
        \begin{align*}
            &\keys(\PBranch{l_i,\tau_i,L_i}{P_i}{i \in \mathcal I}{}{}) =\\
            &\qquad[(l_i,\tau_i)\ |\ i \in \mathcal I], [\keys(P_i)\ |\ i \in \mathcal I]\\
            &\keys(\PIf{e}{P_1}{P_2}) = \keys(P_1), \keys(P_2)\\
            &\keys(\PCall{X}{\vx\p,\vx a}{l}{t}) = (l,t)\\
            &\keys(\{P_1\};\, P_2) = \keys(P_1), \keys(P_2)
        \end{align*}
    \end{minipage}
\caption{Auxiliary definitions for the EPP Theorem ($\Extends$ and $\keys$)}
\label{fig:proc-keys}
\end{figure}

Below we formulate the hallmark \emph{EPP Theorem}, which states that a choreography $C$ and its projection $\EPP{C}{}$ evolve in lock-step, up to the usual $(\Extends)$ relation from Montesi~\cite{montesi2023} (given in \cref{fig:proc-keys}). Importantly, we update the theorem to restrict our attention to \emph{well-formed} networks and choreographies. We say that a network $N$ is well-formed if the keys in each process are distinct, i.e., $\keys(P)$ is distinct for each $\p[P]$ in $N$ ($\keys(P)$ is given in \Cref{fig:proc-keys}). The restriction allows us to ignore processes such as
$\p[\PBranch{1,\tau,L}{P_1}{};\,\PBranch{1,\tau,L}{P_2}{}]$,
which could only be projected from a choreography where two distinct instructions have the same integrity key $(1,\tau)$. \changed{}{This leads to the following lemma:}

\begin{lemma}\label{lem:branching-keys}
  Let $C,Q$ be well-formed. If $Q \Extends \EPP{C}{\q}$ then $\keys(Q) \supseteq \keys_\q(C)$, where
$$
\keys_\q(C) = [ (l,t)\ |\ (\CRecv{\p}{\q}{x}{l,t}) \in \stats(C) ],
    [ (l,t)\ |\ (\CCom{\p}{e}{\q}{x}{l,t}) \in \stats(C)].
$$
\end{lemma}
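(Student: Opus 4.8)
The plan is to prove the lemma in two steps. \textbf{Step 1}: establish the unconditional inclusion $\keys(\EPP{C}{\q}) \supseteq \keys_\q(C)$ for every well-formed $C$, by induction on the size of $C$. \textbf{Step 2}: observe that $\keys$ is monotone along $\Extends$, i.e.\ $Q \Extends P$ implies $\keys(Q) \supseteq \keys(P)$ for well-formed $Q$, since $\Extends$ and $\Merge$ only ever \emph{add} branches to a branching term — they never delete one, and key-distinctness (a consequence of well-formedness, via \textsc{C-WF-Def}) means they never collapse two distinct ones. Applying Step 2 with $P = \EPP{C}{\q}$, which is itself well-formed because projection preserves key-distinctness, and composing with Step 1 gives $\keys(Q) \supseteq \keys(\EPP{C}{\q}) \supseteq \keys_\q(C)$. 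Throughout I would use the standard facts~\cite{montesi2023} that $\Extends$ is reflexive and transitive and that $P_1 \Merge P_2 \Extends P_i$ whenever the merge is defined, as well as the fact that the keys of a well-formed $C$ are pairwise distinct, so that the multisets $\keys_\q(C_1)$ and $\keys_\q(C_2)$ contributed by two sibling subchoreographies are disjoint.

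For the induction in Step 1, I would case on the head of $C$. The cases $C \equiv 0$ and $C \equiv \{C_1\};\,C_2$ are immediate ($\stats(\{C_1\}) = \stats(C_1)$). For $C \equiv I;\,C'$ the governing observation is: $\keys_\q(I;\,C')$ is $\keys_\q(C')$ extended by the single key $(l,t)$ exactly when $I$ is a communication $\CCom{\p}{e}{\q}{x}{l,t}$ or a communication-in-progress $\CRecv{\p}{\q}{x}{l,t}$ \emph{with receiver $\q$} — and in precisely those cases $\EPP{I;\,C'}{\q}$ begins with $\PRecv{x}{l}{t}$, whose key is $(l,t)$; while if $I$ is a selection or selection-in-progress at $\q$ then $\EPP{I;\,C'}{\q} = \PBranch{l,t,L}{\EPP{C'}{\q}}{}$, whose keyset is $\{(l,t)\}\cup\keys(\EPP{C'}{\q})$ and a selection contributes nothing to $\keys_\q$; in all remaining subcases the projected prefix (a send, a selection-send, a procedure call, a local assignment, or nothing) either adds a key not required by $\keys_\q$ or adds none, so the claim follows from the induction hypothesis on $C'$. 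The conditional $I \equiv \CIf{e}{\p}{C_1}{C_2}{l,t}$ splits three ways: if $\q = \p$, then $\EPP{I;\,C'}{\q} = \PIf{\EPP{e}{\q}}{\EPP{C_1}{\q}}{\EPP{C_2}{\q}};\,\EPP{C'}{\q}$ and the induction hypotheses on $C_1,C_2,C'$ combine directly ($\stats$ descending into both branches); if $\q$ follows the decision, then $\EPP{I;\,C'}{\q} = (\EPP{C_1}{\q}\Merge\EPP{C_2}{\q});\,\EPP{C'}{\q}$, and since $\EPP{C_1}{\q}\Merge\EPP{C_2}{\q}\Extends\EPP{C_i}{\q}$, Step-2 monotonicity plus the induction hypothesis give $\keys(\EPP{C_1}{\q}\Merge\EPP{C_2}{\q})\supseteq\keys_\q(C_i)$ for each $i$, and disjointness of $\keys_\q(C_1),\keys_\q(C_2)$ lets me combine them; if $\q$ is uninvolved, $\keys_\q(C_1)=\keys_\q(C_2)=\epsilon$ and we reduce to the hypothesis on $C'$. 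The plain call $\CCall{X}{\vx\p,\vx a}{l,t}$ behaves like the ``sender''/``uninvolved'' communication subcases.

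The case I expect to take the most care is the procedure-call-in-progress $I \equiv \CCalling{\vx\r}{X_i}{\vx\p,\vx a}{C_1}{l,t}$, where $\vx\r \subseteq \vx\p$ are the roles that have not yet entered. If $\q$ has already entered ($\q \in \vx\p\setminus\vx\r$), then $\EPP{I;\,C'}{\q} = \EPP{C_1;\,C'}{\q}$, and since the call-in-progress statement itself contributes nothing to $\keys_\q$ (while $\stats$ descends into $C_1$) we have $\keys_\q(I;\,C') = \keys_\q(C_1;\,C')$, so the induction hypothesis on $C_1;\,C'$ — strictly smaller than $C$ — closes the subcase. The delicate subcase is when $\q \in \vx\r$ has \emph{not} entered, playing role $\q_j$: here $\EPP{I;\,C'}{\q} = \PCall{X_{i,j}}{\cdots}{l}{t};\,\EPP{C'}{\q}$, whose keyset adds only $(l,t)$ on top of $\keys(\EPP{C'}{\q})$. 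One then has to pin down the contribution of $C_1$ to $\keys_\q(I;\,C')$: the keys that $\q$ would use inside $C_1$ live not in the process term $Q$ but inside the procedure definition $X_{i,j}$ of $\EPP{\mathscr C}{}$, and it is \textsc{C-WF-Calling} — which forces $\q$'s local behaviour in $C_1$ to coincide with its behaviour in the original body of $X_i$ as role $\q_j$ — that makes these match up. Getting this right means being precise about whether $\stats$ (hence $\keys_\q$) descends beneath a not-yet-entered call and, if so, how the token substitution of \textsc{C-First}/\textsc{C-Enter} relates $\stats(C_1)$ to $\stats$ of the original body. This is where the real content of the lemma sits; every other case is routine bookkeeping over the two standard $\Merge$/$\Extends$ facts above.
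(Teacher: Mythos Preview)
The paper does not give a proof of this lemma; it is stated and then invoked in the soundness sketch of the EPP Theorem (the reference there to ``\Cref{lem:branching-keys} (5)'' even suggests the lemma once had further clauses). So there is no paper proof to compare your approach against, and your two-step decomposition---first $\keys(\EPP{C}{\q}) \supseteq \keys_\q(C)$ by structural induction on $C$, then monotonicity of $\keys$ along $\Extends$---is the natural route. Step~2 is unproblematic: $\Merge$ only enlarges the branch set and the remaining $\Extends$ clauses recurse structurally, so $\keys$ is indeed monotone.

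The gap is precisely where you locate it, and it is not merely a matter of care. In the subcase $I \equiv \CCalling{\vx\r}{X_i}{\vx\p,\vx a}{C_1}{l,t}$ with $\q \in \vx\r$ not yet entered, the paper's $\stats$ \emph{does} descend into $C_1$ (see \Cref{fig:well-formedness-main-rules}), so every $\CCom{\p}{e}{\q}{x}{l',\tau'}$ in $C_1$ contributes $(l',\tau')$ to $\keys_\q(C)$. Yet $\EPP{I;\,C'}{\q} = \PCall{X_{i,j}}{\cdots}{l}{t};\,\EPP{C'}{\q}$, whose keys are only $(l,t)$ together with $\keys(\EPP{C'}{\q})$; the body keys $(l',\tau')$ are absent. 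Your appeal to \textsc{C-WF-Calling} does not close this: that rule equates $\EPP{C_1}{\q}$ with the projection of the \emph{original} procedure body at role $\q_j$, but those keys carry the placeholder $\Tok$ rather than the runtime token $\tau' = \NextToken(l,t)$, and in any case they live in $\mathscr P$, not in $Q$. So Step~1 as written fails here, and the inclusion $\keys(\EPP{C}{\q}) \supseteq \keys_\q(C)$ does not hold on the nose. A repair requires either restricting $\keys_\q$ so that it does not descend beneath calls $\q$ has not entered, or strengthening the statement to account for keys latent in $\mathscr P$ under the pending token substitution---neither of which you carry out. You correctly diagnosed the location of the difficulty; what is missing is its resolution.
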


\changed{}{The key difficulty of proving the EPP Theorem was finding the right definition of well-formedness (\Cref{thm:preservation,thm:integrity}). With the definition established, the entire proof follows directly from textbook induction principles (\emph{c.f.}~\cite{montesi2023}). We sketch the proof in \Cref{sec:epp-theorem-proof}.}

\begin{restatable}[EPP Theorem]{theorem}{epp}\label{thm:epp} Let $\CCfg{C}{\Sigma}{K}$ be a well-formed configuration.
\begin{enumerate}
    \item (Completeness) If $\CCfg{C}{\Sigma}{K} \CStep{\p} \CCfg{C'}{\Sigma'}{K'}$ then $\PCfg{\EPP{C}{}}{\Sigma}{K} \PStep{\p} \PCfg{N'}{\Sigma'}{K'}$ for some well-formed $N'$ where $N' \Extends \EPP{C'}{}$.
    \item (Soundness) If $\PCfg{N}{\Sigma}{K} \PStep{\r} \PCfg{N'}{\Sigma'}{K'}$ for some well-formed $N$ where $N \Extends \EPP{C}{}$, then $\CCfg{C}{\Sigma}{K} \CStep{\p} \CCfg{C'}{\Sigma'}{K'}$ for some $C'$ where $N' \Extends \EPP{C'}{}$.
\end{enumerate}
\end{restatable}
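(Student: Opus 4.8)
The plan is to prove the completeness and soundness directions separately, in each case by induction on the transition derivation, keeping the well-formedness invariant of $\CCfg{C}{\Sigma}{K}$ available after every step through Preservation (\Cref{thm:preservation}). Two routine auxiliary facts are used throughout. First, endpoint projection commutes with the substitution performed by the procedure rules: $\EPP{C_1[\vx\q,\vx{\q.y},\Tok\mapsto\vx\p,\vx a,\tau']}{\p_j}$ equals $(\EPP{C_1}{\q_j})[\vx\q,\vx y,\Tok\mapsto\vx\p,\vx a,\tau']$, by an easy induction on $C_1$. Second, $\Extends$ is a preorder closed under the process constructors (blocks included) and compatible with the garbage-collection rewrites of $\fatsemi$ and block elimination; together with \Cref{lem:branching-keys} this guarantees that a process $Q\Extends\EPP{C}{\q}$, even after branch merging, still exposes every send and receive key that $\q$ has in $C$, so merging can never let a process outrun the choreography.

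For \textbf{completeness} I induct on the derivation of $\CCfg{C}{\Sigma}{K}\CStep{\p}\CCfg{C'}{\Sigma'}{K'}$. The non-delay, non-procedure rules match directly: by definition of EPP the head of $\EPP{C}{\p}$ (or of $\EPP{C}{\q}$ for the receiver $\q$) is precisely the process instruction named in the corresponding \textsc{P}-rule, which fires and yields $N'$; then $N'\Extends\EPP{C'}{}$ follows, using compatibility of $\Extends$ with block insertion for \textsc{C-If} (matched by \textsc{P-If}). For \textsc{C-Delay} with acting process $\q$ and skipped instruction $I$: if $\q\notin\pn(I)$ then $\EPP{I;C}{\q}=\EPP{C}{\q}$ and the induction hypothesis applies unchanged; otherwise $\EPP{I;C}{\q}=\EPP{I}{\q};\EPP{C}{\q}$, and — because $I$ is not a selection at $\q$, so the continuation is not swallowed by a $\BBranch$-term — \textsc{P-Delay} lifts the inductively obtained step past $\EPP{I}{\q}$; \textsc{C-Block} and \textsc{C-Delay-Proc} are handled the same way. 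The procedure rules are delicate. \textsc{C-First} turns a call into a calling-term; the entering role $\p$ realises this by \textsc{P-Call}, which unfolds the projected procedure body — this matches the projection of the calling-term for $\p$ once the first auxiliary fact pushes the token/argument substitution through EPP, modulo the block that \textsc{P-Call} introduces — while for every other role the projections of the call and of the calling-term coincide verbatim. \textsc{C-Enter} again moves the entering role by \textsc{P-Call}; \textsc{C-Last}, which resolves the calling-term to $\Block{C_1}\fatsemi C_2$, is matched by the \textsc{P-Call} of the final role. Well-formedness of $N'$ holds since $C'$ is well-formed by Preservation and projecting a choreography with pairwise distinct integrity keys yields a network with pairwise distinct keys, a property $\Merge$ preserves.

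For \textbf{soundness}, let $\PCfg{N}{\Sigma}{K}\PStep{\r}\PCfg{N'}{\Sigma'}{K'}$ with $N\Extends\EPP{C}{}$ well-formed (and $C$ well-formed), and let $\r[P]$ be the acting component, so $P\Extends\EPP{C}{\r}$. I induct on the step derivation; the structural rules \textsc{P-Par}, \textsc{P-Block}, \textsc{P-Delay} recurse, and for a base rule I proceed as follows. Its premises pinpoint the process instruction $\r$ executes, and distinctness of keys — using \Cref{lem:branching-keys} to know that a consumed key genuinely comes from $\r$'s communications in $C$ — identifies the unique choreography instruction $I$ it is projected from. Since $P$ reached that instruction through \textsc{P-Delay} steps, every instruction before $I$ in $C$ projects to $\r$ as nothing or as a prefix not swallowing $\r$'s continuation; by the well-formedness invariant none can be a selection at $\r$ with a still-pending decision (such an instruction would project to a $\BBranch$-term blocking $\r$, contradicting that $\r$ moved), so a sequence of \textsc{C-Delay} steps exposes $I$ and the matching \textsc{C}-rule fires, producing $C'$. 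The recurring obligations are: for \textsc{P-Recv} and \textsc{P-OnSelect}, the consumed message $(l,\tau,v)\in K(\r)$ together with \textsc{C-WF-Send}/\textsc{C-WF-Select} forces $I$ to already be the in-progress form $\CRecv{\p}{\r}{x}{l,\tau}$ (resp.\ $\CRecvSelect{\p}{\r}{L}{l,\tau}$), and the variable recovered by projection is the one $I$ binds; for \textsc{P-Call}, the choreography answers with \textsc{C-First}, \textsc{C-Enter}, or \textsc{C-Last} according to how many roles of that call have already moved, with \textsc{C-WF-Calling} keeping the calling-term body and $\r$'s unfolded procedure consistent; and finally $N'\Extends\EPP{C'}{}$, where $\Extends$ may re-merge — e.g.\ after \textsc{P-OnSelect} on a merged branch the residual $\{Q_i\}$ must extend $\EPP{\Block{C_i}}{\r}$ for the branch $i$ the choreography selects.

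The genuine obstacle is this last soundness argument — showing that a possibly ``ahead'' network $N$ can only move in ways the choreography matches. This is exactly where the well-formedness invariant of \Cref{thm:preservation,thm:integrity} is indispensable: it forbids $\r$ from receiving a message with no send-in-progress in $C$, it keeps \textsc{P-Call} synchronised with \textsc{C-Enter}/\textsc{C-Last}, and \Cref{lem:branching-keys} prevents branch merging from hiding a receive that $C$ still demands. As the excerpt stresses, once the invariant is fixed these obligations discharge by textbook induction; what remains is the volume of cases and the $\Merge$/block bookkeeping.
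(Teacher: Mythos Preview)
Your completeness argument matches the paper's: both proceed by induction on the choreography transition derivation, and your handling of \textsc{C-Delay} (split on whether $\q\in\pn(I)$, then lift via \textsc{P-Delay}) is exactly what the paper sketches using \Cref{lem:proj-seq}.

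For soundness, the decomposition is different. The paper inducts on the \emph{structure of $C$}: it case-splits on the head instruction $I$ of $C$, reads off the shape of $N$ from $N\Extends\EPP{C}{}$, and then sub-cases on which process moved and by which process rule. When that rule is \textsc{P-Delay}, the inductive hypothesis is invoked on the tail $C''$ together with a ``stripped'' network (remove the head of each involved process' projection), which by construction extends $\EPP{C''}{}$; \textsc{C-Delay} then reattaches $I$. Crucially, several sub-cases are \emph{impossibility} arguments: e.g., with head $\CCom{\p}{e}{\q}{x}{l,\tau}$ the paper shows $\q$ cannot fire \textsc{P-Recv} because well-formedness forbids $(l,\tau,v)\in K(\q)$; with head an $\mathsf{if}$ the paper shows a non-guard $\q_i$ cannot fire \textsc{P-OnSelect}, since that would require a key already in $K(\q_i)$ and hence a duplicate key in $\q_i$'s behaviour, contradicting \Cref{lem:branching-keys}. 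This is the same information you extract, but approached from the choreography side.

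Your framing ``induct on the step derivation; \textsc{P-Par}, \textsc{P-Block}, \textsc{P-Delay} recurse'' has a subtlety worth flagging. The sub-derivation under \textsc{P-Delay} is a step of the single process $\r[P]$, but the theorem's hypothesis is about a full network extending $\EPP{C}{}$; to apply the IH you would need a choreography $\hat C$ with $\r[P]$ (and the rest of $N$) extending $\EPP{\hat C}{}$---and producing $\hat C$ already requires peeling the head of $C$, which is precisely the paper's induction. Your direct argument at the base case (pin down the executed instruction by its integrity key, then stack \textsc{C-Delay} applications to reach it in one choreography step) is correct, but it is not really an application of the IH to the sub-derivation; it is the paper's choreography-structure argument unrolled. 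Either rephrase the soundness induction as being on $C$, or make explicit that the ``recursion'' is bookkeeping to locate the base instruction rather than a genuine IH call. A minor point: ``a sequence of \textsc{C-Delay} steps'' should be ``a single choreography step whose derivation stacks \textsc{C-Delay} applications.''
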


\section{A Non-Blocking Communication API for Choral}\label{sec:choral}

In this section, we show how the ideas in $O_3$ can be applied in practice. To this end, we consider Choral~\cite{giallorenzo2023}: a state-of-the-art choreographic programming language based on Java\@. Choral is designed to support real-world programming and interoperate with Java, so it is much more sophisticated than our minimalistic theory. Data locations in Choral are lifted to the type level and communication is expressed by invoking methods of \emph{channel} objects.

Choral's intended programming model consists of sequential processes that block to receive messages. However, to improve performance programmers can use Java's \texttt{CompletableFuture} API, thereby introducing intraprocess concurrency and out-of-order execution. This breaks the programming model and introduces CIVs (cf.~\Cref{sec:motivation}) that could cause crashes or silent memory corruption. Motivated by our formal model, we developed \emph{Ozone}: an API for Choral programmers to safely mix choreographies with futures.
In the remainder of this section, we introduce Choral and Ozone and we show how programmers can mix choreographies with futures achieve significant speedups in practical applications.

\begin{figure}[t]
    \begin{minted}[linenos,fontsize=\footnotesize]{'choral.py:ChoralLexer -x'}
public class ConcurrentSend@(KS, CS, S, C) {
    public void start(
        String@KS key, String@CS txt, Client@C client,
        Token@(KS, CS, S, C) tok,
        AsyncChannel@(KS, S) ch1, AsyncChannel@(CS, S) ch2, AsyncChannel@(S, C) ch3
    ) {
        // Services send data to the server.
        CompletableFuture@S keyS = ch1.fcom(key, 1@(KS,S), tok);
        CompletableFuture@S txtS = ch2.fcom(txt, 2@(CS,S), tok);
    
        // Server forwards data to the client.
        ch3.fcom(keyS, 3@(S,C), tok)
            .thenAccept(client::decrypt);
        ch3.fcom(txtS, 4@(S,C), tok)
            .thenAccept(client::display);
    }
}
    \end{minted}
    \caption{An implementation of the choreography in \Cref{fig:forwarding-data} using Choral and the Ozone API.}\label{fig:choral-concurrent-send}
\end{figure}

\subsection{Concurrent Messages}\label{sec:choral-concurrent-send}

We introduce the Ozone API with an implementation of the choreographic
procedure from \Cref{fig:forwarding-data}. The implementation is shown in \Cref{fig:choral-concurrent-send}, which
defines a class called \texttt{ConcurrentSend} parameterized by four
roles (i.e.~process parameters): \texttt{KS}, \texttt{CS}, \texttt{S}, and \texttt{C}. In this
class, the \texttt{start} method implements the procedure itself. As in
our formal model, the procedure is parameterized by distributed data: On
line 3, parameter \texttt{key} is a \texttt{String} located at
\texttt{KS}; \texttt{txt} is a \texttt{String} located at \texttt{CS};
and \texttt{client} is a \texttt{Client} object at \texttt{C},
representing the client's user interface. The \texttt{start} procedure
is also parameterized by session tokens, which we introduced in \Cref{fig:choral-concurrent-send}, on line 4. The parameter \texttt{Token@(KS, CS, S, C)\ tok} is
syntactic sugar for the parameter \emph{list}
\texttt{Token@KS tok\_KS, ..., Token@C tok\_C}.\footnote{This syntactic sugar is provided for readability and is not currently supported by the Choral compiler. We will also use syntactic sugar for lambda expressions and omit obvious type annotations later in this section. Our actual implementation uses desugared versions of the syntax.}
The last three parameters on line 5 are \emph{channels}. In Choral,
channels are used to communicate data from one role to another. If
\texttt{ch} is a channel of type
\texttt{Channel@(A,B)\textless{}T\textgreater{}} and \texttt{e} is an
expression of type \texttt{T@A}, then the expression \texttt{ch.com(e)}
is a communication that produces a value of type \texttt{T@B}.

Our main contribution in the Ozone API is a custom channel
\texttt{AsyncChannel@(A,B)\textless{}T\textgreater{}} with a method
\texttt{fcom} for safely communicating data with non-blocking semantics.
The \texttt{fcom} method is similar to \texttt{com}, but with the following
differences:
\begin{itemize}
\item
  Whereas \texttt{com} takes one argument, \texttt{fcom} takes three: a
  payload, a line number, and a session token. The latter two arguments
  form an integrity key, of which both the sender and
  receiver have a copy.
\item
  When the receiver \texttt{B} executes a \texttt{com} instruction, its
  thread becomes blocked until the value (of type \texttt{T@B}) has been
  delivered. In contrast, \texttt{fcom} creates a Java \emph{future} (of
  type \texttt{CompletableFuture@B\textless{}T\textgreater{}}) which is
  a placeholder at \texttt{B} that will hold a value of type \texttt{T}
  once the message is delivered. Instead of blocking, \texttt{fcom}
  immediately returns that future to the calling thread. The thread can
  then assign a callback to handle the message and proceed with other
  useful work.
\end{itemize}
Lines 8 and 9 of \Cref{fig:choral-concurrent-send} show \texttt{fcom} being used to transport
\texttt{key} and \texttt{txt} to the server \texttt{S}. The expression
\texttt{1@(KS,\ S)} is sugar for the list \texttt{1@KS,\ 1@S} and we
assume the replicated value \texttt{tok} is expanded into the list
\texttt{tok\_KS,\ tok\_S}. Thus both sender and receiver pass integrity
keys as arguments to \texttt{fcom}.

Lines 12-15 of \Cref{fig:choral-concurrent-send} show how the server \texttt{S} and client
\texttt{C} use the future values. On line 12, the server uses an
overloaded version of \texttt{fcom} that takes
\texttt{CompletableFuture@S} instead of \texttt{T@S}. The method assigns
to the future a callback, which forwards the result to the client once
the future has been completed. The result of \texttt{fcom} on line 12 is
a \texttt{CompletableFuture@C}, to which the client binds a callback on
line 13: when the key from \texttt{S} finally arrives at \texttt{C}, the
client will proceed to invoke the method \texttt{client.decrypt} with
the key as an argument. Lines 14 and 15 do the same, but with the value
of \texttt{txt}. As we will see below, the values of \texttt{key} and
\texttt{txt} can arrive at the client in any order, so the callbacks on
lines 13 and 15 can execute in any order---even in parallel.

\begin{figure}
    \small
\begin{minipage}{0.5\textwidth}
\begin{minted}[linenos,fontsize=\footnotesize]{'choral.py:ChoralLexer -x'}
public class ConcurrentSend_KS {
  public void start(
    String key, Token tok_KS, 
    AsyncChannel ch1
  ) {
    ch1.fcom(key, 1, tok_KS);
  }
}
public class ConcurrentSend_S {
  public void start(
    Token tok_S, AsyncChannel ch1, 
    AsyncChannel ch2, AsyncChannel ch3
  ) {
    CompletableFuture keyS = 
      ch1.fcom(1, tok_S);
    CompletableFuture txtS = 
      ch2.fcom(2, tok_S);

    ch3.com(keyS, 3, tok_S);
    ch3.com(txtS, 4, tok_S);
  }
}
\end{minted}
\end{minipage}\hfill%
\begin{minipage}{0.45\textwidth}
\begin{minted}[linenos,fontsize=\footnotesize,firstnumber=last]{'choral.py:ChoralLexer -x'}
public class ConcurrentSend_CS {
  public void start(
    String txt, Token tok_CS, 
    AsyncChannel ch2
  ) {
    ch2.fcom(txt, 2, tok_CS);
  }
}

public class ConcurrentSend_C {
  public void start(
    Client client, Token tok_C, 
    AsyncChannel ch3
  ) {
    ch3.fcom(3, tok_C)
       .thenAccept(client::decrypt);
    ch3.fcom(4, tok_C)
       .thenAccept(client::display);
  }
}
\end{minted}
\end{minipage}
\caption{Endpoint projection of \Cref{fig:choral-concurrent-send}.}\label{fig:choral-concurrent-send-epp}
\end{figure}

\subsubsection{Endpoint projection}\label{sec:choral-epp}

By running the Choral compiler, \texttt{ConcurrentSend@(KS,CS,S,C)} is
\emph{projected} to generate four Java classes, shown in \Cref{fig:choral-concurrent-send-epp}. Each
class implements the behavior of its corresponding role. For example,
\texttt{ConcurrentSend\_KS} implements the behavior of \texttt{KS}. Its
\texttt{start} method is parameterized by: \texttt{key}, which
corresponds to the \texttt{key} in \Cref{fig:choral-concurrent-send}; \texttt{tok\_KS}, the copy of
the token \texttt{tok} belonging to \texttt{KS}; and \texttt{ch1}, a
channel endpoint that connects \texttt{KS} to \texttt{S}. Following the
reasoning in \Cref{fig:epp}, these behaviors will not exhibit deadlocks or
communication integrity errors when composed (assuming the implementations of Choral and Ozone are correct).

Let us see how integrity keys prevent CIVs in \Cref{fig:choral-concurrent-send-epp}. Notice that the
Choral instruction
\texttt{CompletableFuture@S\ keyS\ =\ ch1.fcom(key,\ 1@(KS,S),\ tok);}
on line 8 of \Cref{fig:choral-concurrent-send} is projected into two instructions:
\begin{enumerate}
\item
  \texttt{ch1.fcom(key,\ 1,\ tok\_KS)} at the sender \texttt{KS}; and
\item
  \texttt{CompletableFuture\ keyS\ =\ ch1.fcom(1,\ tok\_S)} at the
  receiver \texttt{S}.
\end{enumerate}
The former instruction is parameterized by a payload and an integrity
key and produces nothing. The latter instruction is parameterized only
by an integrity key (with no payload) and produces a future. When
\texttt{KS} sends \texttt{key} to \texttt{S}, it combines the payload
with integrity key \texttt{(1,\ tok\_KS)}. Dually, \texttt{S} creates a
future that will only be completed when a message with the integrity key
\texttt{(1,\ tok\_S)} is received. Since \texttt{tok\_KS} and
\texttt{tok\_S} have the same value, the send- and receive-operations
are guaranteed to match.

On lines 14-17 of \Cref{fig:choral-concurrent-send-epp}, the server \texttt{S} sets listeners for
\texttt{key} and \texttt{txt}. On lines 19-20, \texttt{S} schedules the
values to be forwarded to \texttt{C}; notice that even with FIFO
channels, \texttt{key} and \texttt{txt} may arrive in any order.
Consequently, \texttt{S} may forward their values to \texttt{C} in any
order. On lines 37-40 of \Cref{fig:choral-concurrent-send-epp}, the client creates futures to hold the
values of \texttt{key} and \texttt{txt} and sets callbacks to be invoked
when the values arrive. Here we see the importance of integrity keys:
the client uses \texttt{(3,\ tok\_C)} and \texttt{(4,\ tok\_C)} to
disambiguate the \texttt{key} message from the \texttt{txt} message.
Without integrity keys, mixing Choral choreographies with Java Futures
would be unsafe. As shown in \Cref{sec:epp}, our solution is correct even
when the underlying transport protocol can deliver messages
out of order.

\begin{figure}
    \begin{minted}[linenos,fontsize=\footnotesize]{'choral.py:ChoralLexer -x'}
public class ConcurrentClients@(KS, CS, S, C1, C2) {
    public void start(
        AsyncChannel@(KS, S) ch1, AsyncChannel@(CS, S) ch2,
        AsyncChannel@(S, C1) ch3, AsyncChannel@(S, C2) ch4,
        KeyService@KS keyService, ContentService@CS contentService,
        Client@C1 client1, String@(KS, CS) clientID1,
        Client@C2 client2, String@(KS, CS) clientID2,
        Token@(KS, CS, S, C1, C2) tok
    ) {
        (new ConcurrentSend2()).start(ch1, ch2, ch3,
            keyService.getKey(clientID1), contentService.getContent(clientID1),
            client1, tok.nextToken( 0@(KS,CS,S,C1) ));

        (new ConcurrentSend2()).start(ch1, ch2, ch4,
            keyService.getKey(clientID2), contentService.getContent(clientID2),
            client2, tok.nextToken( 1@(KS,CS,S,C2) ));
    }
}
    \end{minted}
    \caption{A Choral choreography invoking \texttt{ConcurrentSend2}.}\label{fig:choral-procedures}
\end{figure}

\begin{figure}
\begin{minipage}{0.55\textwidth}
\begin{minted}[linenos,fontsize=\footnotesize]{'choral.py:ChoralLexer -x'}
public class ConcurrentClients_KS {
  public void start( ... ) {
    (new ConcurrentSend2_KS()).start(ch1,
        keyService.getKey(clientID1),
        tok.next(0));

    (new ConcurrentSend2_KS()).start(ch1,
        keyService.getKey(clientID2),
        tok.next(1));
  }
}
\end{minted}
\end{minipage}
\begin{minipage}{0.45\textwidth}
\begin{minted}[linenos,fontsize=\footnotesize,firstnumber=last]{java}
public class ConcurrentClients_S {
  public void start( ... ) {
    (new ConcurrentSend2_S()).start(
        ch1, ch2, ch3, tok.next(0));

    (new ConcurrentSend2_S()).start(
        ch1, ch2, ch3, tok.next(1));
  }
}
\end{minted}
\end{minipage}
\caption{Endpoint projection of \Cref{fig:choral-procedures} (representative examples).}
\end{figure}

\subsection{Procedure calls}\label{sec:choral-procedures}

\Cref{sec:choral-concurrent-send} showed how the line numbers in an integrity key could prevent
CIVs. We now briefly show how the \emph{tokens} in an integrity key
prevent \emph{interprocedural} CIVs. \Cref{fig:choral-procedures} depicts a choreography that
invokes two instances of \texttt{ConcurrentSend2}: the first instance
with client \texttt{C1}, and the second instance with client
\texttt{C2}. On lines 12 and 16, the roles all compute fresh tokens for
each procedure they're involved in, like in our formal model; the syntax
\texttt{tok.nextToken(\ 0@(KS,CS,S,C1)\ )} is sugar for
\texttt{tok\_KS.nextToken(0@KS),\ ...,\ tok\_C1.nextToken(0@C1)}, and
the method \texttt{t.nextToken(l)} implements the function
$\NextToken(l,t)$. These fresh tokens ensure that, even if messages
from \texttt{KS} to \texttt{S} are delivered out of order
there is no chance that messages
from the first procedure invocation will be confused for messages from
the second invocation.

\begin{figure}[t]
  \centering
    \begin{subfigure}{0.5\textwidth}
        \includegraphics[width=\textwidth]{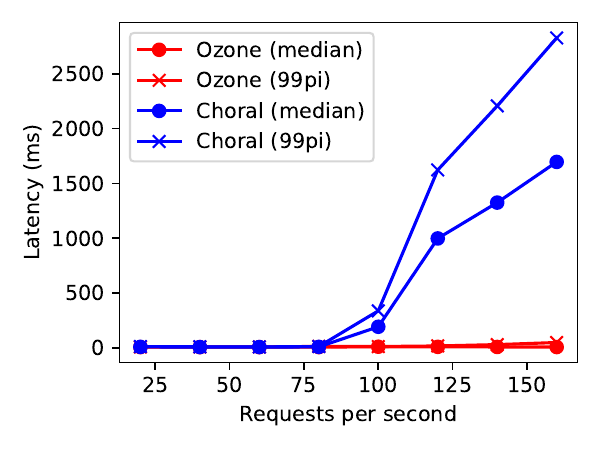}
        \subcaption{Concurrent producers latency (lower is better)}\label{fig:data-producers}
    \end{subfigure}\hfill
    \begin{subfigure}{0.5\textwidth}
      \centering
        \includegraphics[width=0.8\textwidth]{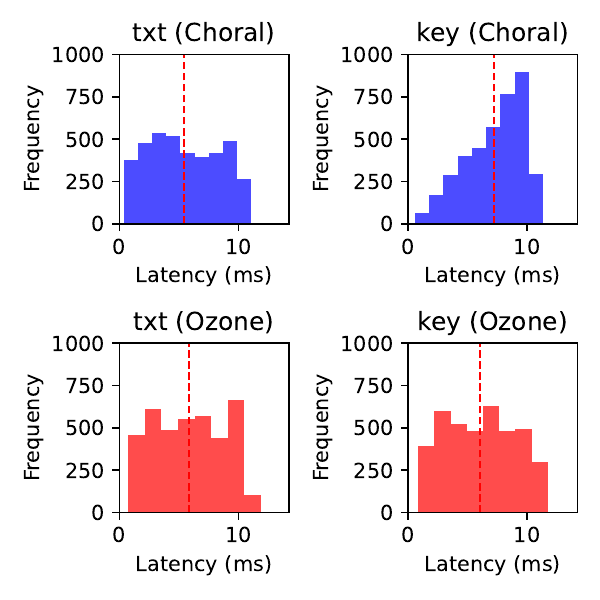}
        \subcaption{Concurrent senders latency (further left is better)}\label{fig:data-senders}
    \end{subfigure}
    \caption{Microbenchmark}%
    \label{fig:microbenchmarks}
  \end{figure}

\begin{figure}[t]
  \centering
  \begin{subfigure}{0.45\textwidth}
    \includegraphics[width=\textwidth]{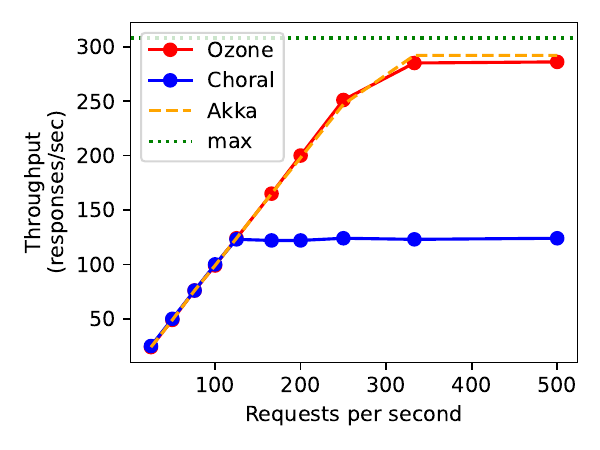}
    \subcaption{Throughput (higher is better)}%
    \label{fig:modelserving-throughput}
  \end{subfigure}\hfill%
  \begin{subfigure}{0.45\textwidth}
    \includegraphics[width=\textwidth]{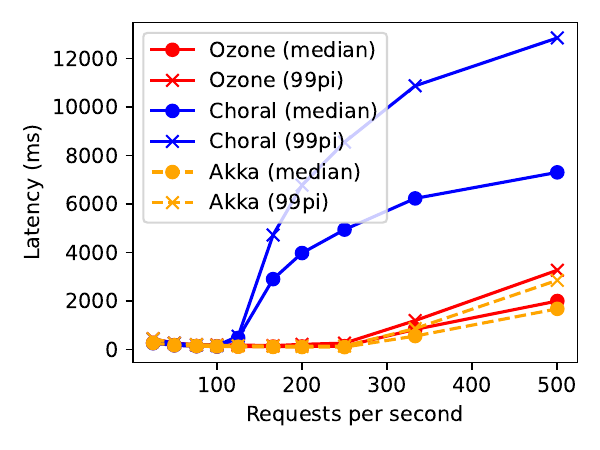}
    \subcaption{Latency (lower is better)}%
    \label{fig:modelserving-latency}
  \end{subfigure}%
  \caption{Model serving}%
  \label{fig:modelserving}
\end{figure}

\begin{figure}
  \centering
  \begin{minipage}{0.48\textwidth}
  \includegraphics[width=\textwidth]{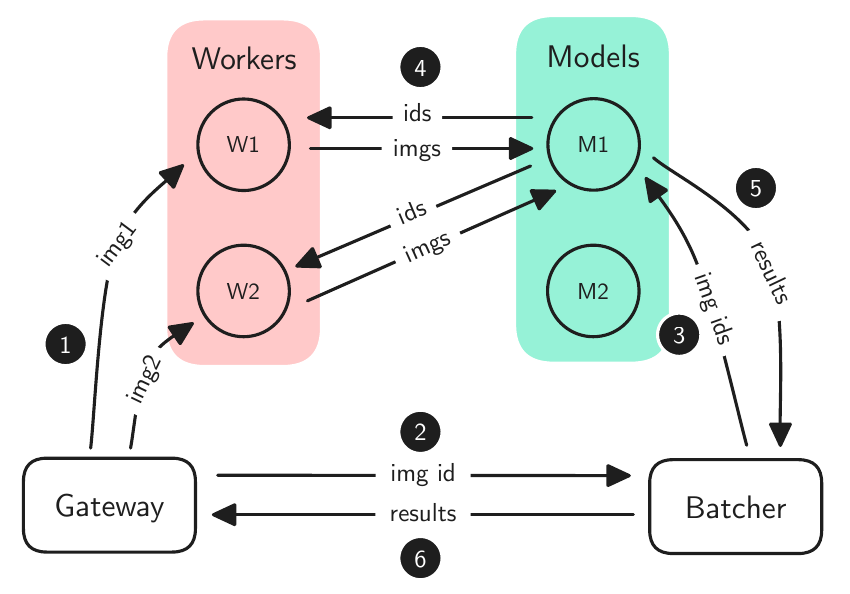}
  \captionof{figure}{Architecture for the image classification pipeline.}%
  \label{fig:modelserving-architecture}
\end{minipage}\hfill
\begin{minipage}{0.48\textwidth}
 \includegraphics[width=\textwidth]{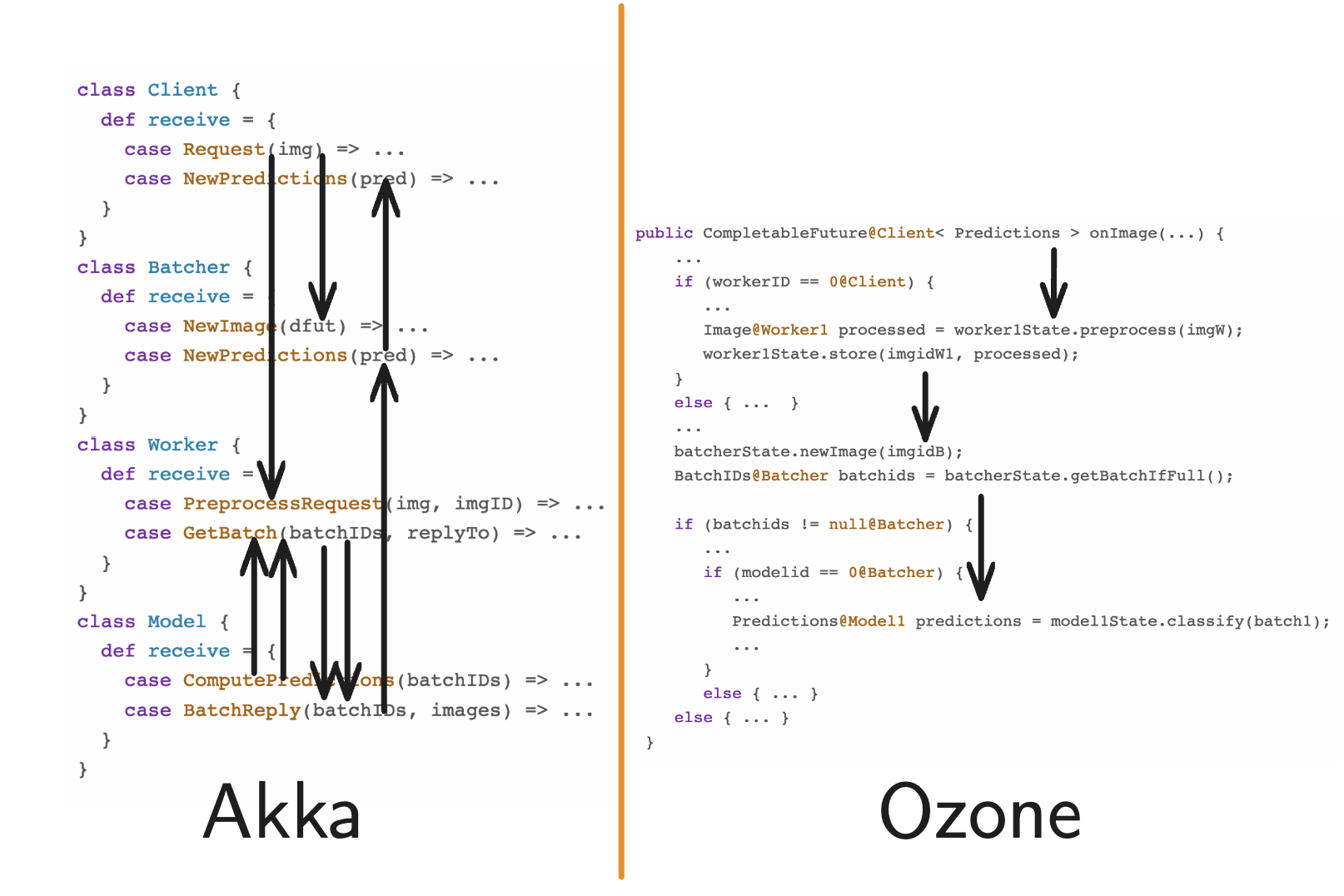}
  \captionof{figure}{Control flow comparison between hand-written Akka processes and Ozone.}%
  \label{fig:bar}
\end{minipage}%
\end{figure}

\subsection{Evaluation}\label{sec:evaluation}

We evaluated Ozone with microbenchmarks based on
\Cref{fig:concurrent-producers,fig:forwarding-data} and with a
model serving benchmark from Wang et al~\cite{wang21}. The experiments were
carried out on \changed{}{a six-node Linux cluster, with two Intel Xeon Gold 6130 CPUs and 384 GB of memory per node and an average bandwidth of 15 Gbps.}

The first microbenchmark is a version of \Cref{fig:concurrent-producers} from the introduction. Each producer iteratively invokes the choreography at a fixed rate and, in response to each request, the server simulates computation by sleeping for 0--5 milliseconds. \Cref{fig:data-producers} shows the median and 99th percentile latency for server responses to worker requests. In the Choral implementation, the server quickly becomes a bottleneck because of its fixed processing order: a request from \(\p_1\) must be handled before a request from \(\p_2\), and both requests must be handled for the \(i\)-th iteration before they can be handled for the \((i+1)\)-th iteration. In the Ozone implementation, requests from different producers can be handled out of order and producers can start a new instance of the choreography without waiting for the second one to complete. Consequently the server spends less time waiting for requests, so it can handle much higher request rates.

The second microbenchmark is a version of \Cref{fig:forwarding-data} from \Cref{sec:motivation}, in which the server sends messages to the microservices \emph{ks} and \emph{cs} and forwards their responses to the client. Each microservice takes 0--5 milliseconds to compute its response. The latency histogram for the Choral implementation (top) shows how the time for the client to receive \texttt{txt} depends on the time to compute \texttt{txt}, but the time to receive \texttt{key} depends on \emph{both} the time to compute \texttt{txt} and the time to compute \texttt{key}. In contrast, the Ozone implementation (bottom) allows the server to forward \texttt{key} to the client without waiting for \texttt{txt}---thereby reducing the average latency for \texttt{key} by more than 30\%.

To measure the impact of Ozone on a realistic application, we ported the image classification pipeline of Wang et al~\cite{wang21} to Choral (\Cref{fig:modelserving-architecture}). In this pipeline, images are received by a Gateway that performs load balancing and forwards the images to a pair of Worker services for preprocessing. A Batcher service collects requests and sends them as a batch to a Model service, which fetches the processed images and performs image classification. This architecture allows applications to harness \emph{intra}-GPU parallelism by increasing the batch size (at the cost of latency) and \emph{inter}-GPU parallelism by increasing the number of Model services.
\Cref{fig:modelserving} shows the performance for Choral and Ozone implementations of the pipeline, using sleeps to simulate computation. The plots also show the performance of an implementation in the \emph{Akka} actor framework~\cite{akka} \changed{}{and the theoretical maximum throughput of the Model services.}

The Choral implementation has a bottleneck: after the Batcher sends work to a Model, it waits for the Model's response and becomes blocked. In the Ozone implementation, the Batcher binds the Model service's response to a \texttt{CompletableFuture} and continues receiving requests from the Gateway. Consequently, the throughput and latency for the Ozone implementation can scale with the number of requests until both Models become saturated with work. \changed{}{\Cref{fig:modelserving} shows our library scales similarly to hand-written reactive processes in Akka, though the latter perform slightly better under high load because the Akka framework is heavily optimized to handle network congestion; these same optimizations can be applied to Choral, but they are orthogonal to our present work.} We conclude that our methodology can achieve good performance while providing the benefits of choreographic programming: (i) absence of bugs like deadlocks and mismatched communications (e.g., sending a message with the wrong type or at the wrong time)~\cite{giallorenzo2023,montesi2023,lugovic2023}; (ii) and improved readability, since control flow is easier to follow in choreographies than in processes (see \Cref{fig:bar}), as discussed in~\cite{shen2023a,KSZK23,lugovic2023,giallorenzo2023}.

\section{Related Work}\label{sec:related-work}


In early choreographic languages, the sequencing operator \(I;\ C\) had strict sequential semantics; concurrency could only be introduced via an explicit parallel operator \(C\, ||\, C'\)~\cite{qiu2007a,lanese2008,carbone2012a}.
Explicit parallelism was later replaced by a relaxed sequencing operator \(I;\,C\) that would allow
instructions in \(C\) to be evaluated before \(I\) under certain
conditions~\cite{carbone2013}. This presents the benefits of offering a simple syntax for choreographies and, at the same time, automatically inferring what can be safely executed concurrently. For these reasons, relaxed sequencing has been adopted in most recent works on choreographic programming (e.g.,~\cite{hirsch2022,giallorenzo2023,GHM24,CMP23,JB22,KSZK23}) and its textbook presentation~\cite{montesi2023}.
Our present work makes the sequencing operator
even more relaxed, allowing all instructions to be executed out of
order, up to data- and control-dependency.
\changed{To the best of our knowledge, our choreography model is the first to allow non-FIFO communication between processes.}{To the best of our knowledge, our model is the first to support non-FIFO communication in the setting of choreographic programming.}

Our work is closely related to choreographic \emph{multicoms}:
sets of communications that can be executed out of order, up to data
dependency~\cite{cruz-filipe2018}. However, multicoms do not
allow \emph{computation} to be performed out of order, as in \Cref{fig:concurrent-producers-ideal}. Multicoms therefore do not need to address the communication integrity problem, which we focus on in
this work. Relatedly, previous work investigated modeling asynchronous communication by making send actions non-blocking~\cite{carbone2013,honda2016,cruz-filipe2017a,pohjola2022,hirsch2022,montesi2023}, but none of considered non-blocking receive. Thus, they are not expressive enough to capture the behaviors that we are interested in here.


In terms of expressivity, there is some overlap between our model and \emph{nondeterministic choreographies}~\cite{montesi2023}, which use an explicit \emph{choreographic choice} operator \(C +_{\mathsf p} C'\). Nondeterministic choreographies can implement the execution in \Cref{fig:concurrent-producers-ideal} with:
$$
\begin{pmatrix}
    \BCom{\buyer_1}{id}{\seller}{id_1};\\
    \BCom{\buyer_2}{id}{\seller}{id_2};\\
    \dots
\end{pmatrix}
+_\seller
\begin{pmatrix}
    \BCom{\buyer_2}{id}{\seller}{id_2};\\
    \BCom{\buyer_1}{id}{\seller}{id_1};\\
    \dots
\end{pmatrix}
$$
\Cref{fig:forwarding-data} can also be expressed with nondeterministic choreographies:
$$
\begin{pmatrix*}[l]
    1: \BCom{\proc{cs}}{getText()}{\proc s}{txt};\\
    2: \BSelect{\proc s}{\proc c}{\small\textsc{TxtFirst}};\\
    3: \BCom{\proc s}{txt}{\proc c}{txt};\\
    4: \proc c.display(\proc c.txt);\\
    5: \BCom{\proc{ks}}{getKey()}{\proc s}{key};\\
    6: \BCom{\proc s}{key}{\proc c}{key};\\
    7: \proc c.decrypt(\proc c.key)
\end{pmatrix*}
+_\proc{s}
\begin{pmatrix*}[l]
    \ 8: \BCom{\proc{ks}}{getKey()}{\proc s}{key};\\
    \ 9: \BSelect{\proc s}{\proc c}{\small\textsc{KeyFirst}};\\
    10: \BCom{\proc s}{key}{\proc c}{key};\\
    11: \proc c.decrypt(\proc c.key);\\
    12: \BCom{\proc{cs}}{getText()}{\proc s}{txt};\\
    13: \BCom{\proc s}{txt}{\proc c}{txt};\\
    13: \proc c.display(\proc c.txt)
\end{pmatrix*}
$$
\changed{}{Compared to $O_3$, these implementations are much larger because they require programmers to statically encode all desired schedules. One can easily forget to include some schedules or encode them incorrectly: for instance, if one moved line 12 up to line 10 above, it would eliminate the extra concurrency that was gained by receiving the messages out of order. Thus, our approach is more robust and simpler for the programmer.}

On the other hand, nondeterministic choreographies can express some programs that our model cannot. For example, choreographic choice can assign different variable names to messages, according to their arrival order.
Other choreographic languages include nondeterministic operators~\cite{lanese2008,bravetti2008}, but they do not support computation (a requirement for choreographic programming) or recursion.




Choral is arguably the most powerful implementation of choreographic programming to date, but there are also others that target, e.g., Haskell, Java, Jolie, and Rust~\cite{carbone2013,CMMP22,DGGLM16,shen2023a,KSZK23}.
We believe that implementing the out-of-order semantics of $O_3$ in these languages is possible, too. However, it would likely require more work that touches also the implementation of EPP because, differently from Choral, these languages do not support user-defined communication primitives.
Since Choral is more expressive than all other current choreographic programming languages, we have targeted the most general case.

In~\cite{lugovic2023}, the authors introduce a Choral library for handling protocols that might deliver messages out of order. Unlike Ozone, this library requires explicitly writing which parts of a choreography are independent. Dependencies between actions also need to be managed at a low level via side-effects. In our approach, out-of-order communications can be elegantly combined by using futures. Furthermore, the work in~\cite{lugovic2023} does not deal with CIVs (which might arise if programmers are not careful) and presents no formal model.

A more loosely related line of research is that on multiparty session types (MPSTs)~\cite{honda2016}, where abstract choreographies without data or computation are used as protocol specifications \changed{}{that are compiled to ``local session types''}. Similarly to most work on choreographic programming, some works on MPSTs allow for non-blocking send, but not non-blocking receive as in $O_3$.
\changed{}{Previous work considered reordering actions in local session types~\cite{cutner22}, but these reorderings are necessarily limited because asynchronous multiparty session subtyping is undecidable in general~\cite{bravetti21}.} Interprocedural MPSTs have been presented~\cite{DH12}, but unlike $O_3$, the procedure calls require a central coordinator. Similar comments hold for recent investigations that add nondeterminism because of crashes~\cite{APN17,VHEZ21}. More generally, it is unclear whether ``concurrency up to data dependency'' could be expressed with MPSTs in their current form, since the types do not encode data dependencies.

\section{Conclusion}\label{sec:conclusion}\label{sec:discussion}

We investigated a model for choreographic programming in which processes can execute out of order and messages can be reordered by the network. These features improve the performance of choreographies, without requiring programmers to rewrite their code, by allowing processes to better overlap communication with computation. However, compilers that use these features must have mechanisms in place to prevent communication integrity violations (CIVs). We presented a scheme to prevent CIVs by attaching dynamically-computed integrity keys to each message.
Our results enlarge the class of behaviors that can be captured with choreographic programming without renouncing its correctness guarantees.

An important subject for future work is confluence. Statements can read and write to the local state of a process, so executing statements out of order can cause nondeterminism. Sometimes this nondeterminism is desirable (for instance, to implement consensus algorithms) but sometimes the nondeterminism is unexpected and causes bugs. In our formal model, nondeterminism could be controlled manually by allowing programmers to insert synthetic data dependencies. For example, below we use a hypothetical keyword $\mathsf{barrier}_\p$ to prevent a file from being closed before it has been written-to:
\[\BPure{file}{\p}{open(\hbox{``foo.txt''})};\ 
\p.write(\p.file, \hbox{``hello''});\ 
\mathsf{barrier}_\p;\ 
\p.close(\p.file)\]
More generally, future work could develop a static analysis that identifies when two statements are not safe to execute out of order.

Another opportunity for static analysis to improve on our work concerns the size of session tokens. We chose to represent session tokens as lists of integers, which allowed processes to compute new session tokens without coordinating with one another. However, this encoding means the size of a token is proportional to the depth of the call stack---a problem for tail-recursive programs such as \emph{StreamIt} in \Cref{fig:stream-it}. Fortunately, it is easy to see that communication integrity in \emph{StreamIt} could be achieved in constant space by representing the token as a single integer, incremented upon each recursive call, assuming that processes do not participate in multiple instances of the choreography concurrently. With static analysis, a compiler could identify such programs and use a more efficient session token representation.

\bibliography{main}

\begin{thebibliography}{10}

\bibitem{APN17}
Manuel Adameit, Kirstin Peters, and Uwe Nestmann.
\newblock Session types for link failures.
\newblock In Ahmed Bouajjani and Alexandra Silva, editors, {\em Formal
  Techniques for Distributed Objects, Components, and Systems - 37th {IFIP}
  {WG} 6.1 International Conference, {FORTE} 2017, Held as Part of the 12th
  International Federated Conference on Distributed Computing Techniques,
  DisCoTec 2017, Neuch{\^{a}}tel, Switzerland, June 19-22, 2017, Proceedings},
  volume 10321 of {\em Lecture Notes in Computer Science}, pages 1--16.
  Springer, 2017.
\newblock \href {https://doi.org/10.1007/978-3-319-60225-7\_1}
  {\path{doi:10.1007/978-3-319-60225-7\_1}}.

\bibitem{agha1990a}
Gul Agha.
\newblock {\em {{ACTORS}} - a Model of Concurrent Computation in Distributed
  Systems}.
\newblock {{MIT Press}} Series in Artificial Intelligence. {MIT Press},
  {Cambridge, MA}, 1990.

\bibitem{akka}
Akka.
\newblock {\url{https://akka.io/}}, 2024.

\bibitem{baker1977}
Henry~C. Baker and Carl Hewitt.
\newblock The incremental garbage collection of processes.
\newblock In James Low, editor, {\em Proceedings of the 1977 Symposium on
  Artificial Intelligence and Programming Languages, USA, August 15-17, 1977},
  pages 55--59. {ACM}, 1977.
\newblock \href {https://doi.org/10.1145/800228.806932}
  {\path{doi:10.1145/800228.806932}}.

\bibitem{bravetti21}
Mario Bravetti, Marco Carbone, Julien Lange, Nobuko Yoshida, and Gianluigi
  Zavattaro.
\newblock A sound algorithm for asynchronous session subtyping and its
  implementation.
\newblock {\em Log. Methods Comput. Sci.}, 17(1), 2021.
\newblock URL: \url{https://lmcs.episciences.org/7238}.

\bibitem{bravetti2008}
Mario Bravetti, Ivan Lanese, and Gianluigi Zavattaro.
\newblock Contract-driven implementation of choreographies.
\newblock In Christos Kaklamanis and Flemming Nielson, editors, {\em
  Trustworthy Global Computing, 4th International Symposium, {TGC} 2008,
  Barcelona, Spain, November 3-4, 2008, Revised Selected Papers}, volume 5474
  of {\em Lecture Notes in Computer Science}, pages 1--18. Springer, 2008.
\newblock \href {https://doi.org/10.1007/978-3-642-00945-7\_1}
  {\path{doi:10.1007/978-3-642-00945-7\_1}}.

\bibitem{carbone2012a}
Marco Carbone, Kohei Honda, and Nobuko Yoshida.
\newblock Structured {{Communication-Centered Programming}} for {{Web
  Services}}.
\newblock {\em ACM Transactions on Programming Languages and Systems},
  34(2):1--78, June 2012.
\newblock \href {https://doi.org/10.1145/2220365.2220367}
  {\path{doi:10.1145/2220365.2220367}}.

\bibitem{carbone2013}
Marco Carbone and Fabrizio Montesi.
\newblock Deadlock-freedom-by-design: Multiparty asynchronous global
  programming.
\newblock In Roberto Giacobazzi and Radhia Cousot, editors, {\em The 40th
  Annual {{ACM SIGPLAN-SIGACT}} Symposium on Principles of Programming
  Languages, {{POPL}} '13, Rome, Italy - January 23 - 25, 2013}, pages
  263--274. {ACM}, 2013.
\newblock \href {https://doi.org/10.1145/2429069.2429101}
  {\path{doi:10.1145/2429069.2429101}}.

\bibitem{CMMP22}
Lu{\'{\i}}s Cruz{-}Filipe, Anne Madsen, Fabrizio Montesi, and Marco Peressotti.
\newblock Modular choreographies: Bridging alice and bob notation to java.
\newblock In Gokila Dorai, Maurizio Gabbrielli, Giulio Manzonetto, Aomar
  Osmani, Marco Prandini, Gianluigi Zavattaro, and Olaf Zimmermann, editors,
  {\em Joint Post-proceedings of the Third and Fourth International Conference
  on Microservices, Microservices 2020/2022, May 10-12, 2022, Paris, France},
  volume 111 of {\em OASIcs}, pages 3:1--3:18. Schloss Dagstuhl -
  Leibniz-Zentrum f{\"{u}}r Informatik, 2022.
\newblock URL: \url{https://doi.org/10.4230/OASIcs.Microservices.2020-2022.3},
  \href {https://doi.org/10.4230/OASICS.MICROSERVICES.2020-2022.3}
  {\path{doi:10.4230/OASICS.MICROSERVICES.2020-2022.3}}.

\bibitem{cruz-filipe2017a}
Lu{\'i}s {Cruz-Filipe} and Fabrizio Montesi.
\newblock On {{Asynchrony}} and {{Choreographies}}.
\newblock {\em Electronic Proceedings in Theoretical Computer Science},
  261:76--90, November 2017.
\newblock \href {https://doi.org/10.4204/EPTCS.261.8}
  {\path{doi:10.4204/EPTCS.261.8}}.

\bibitem{cruz-filipe2017}
Lu{\'{\i}}s Cruz{-}Filipe and Fabrizio Montesi.
\newblock Procedural choreographic programming.
\newblock In Ahmed Bouajjani and Alexandra Silva, editors, {\em Formal
  Techniques for Distributed Objects, Components, and Systems - 37th {IFIP}
  {WG} 6.1 International Conference, {FORTE} 2017, Held as Part of the 12th
  International Federated Conference on Distributed Computing Techniques,
  DisCoTec 2017, Neuch{\^{a}}tel, Switzerland, June 19-22, 2017, Proceedings},
  volume 10321 of {\em Lecture Notes in Computer Science}, pages 92--107.
  Springer, 2017.
\newblock \href {https://doi.org/10.1007/978-3-319-60225-7\_7}
  {\path{doi:10.1007/978-3-319-60225-7\_7}}.

\bibitem{cruz-filipe2018}
Lu{\'i}s {Cruz-Filipe}, Fabrizio Montesi, and Marco Peressotti.
\newblock Communications in choreographies, revisited.
\newblock In Hisham~M. Haddad, Roger~L. Wainwright, and Richard Chbeir,
  editors, {\em Proceedings of the 33rd {{Annual ACM Symposium}} on {{Applied
  Computing}}, {{SAC}} 2018, {{Pau}}, {{France}}, {{April}} 09-13, 2018}, pages
  1248--1255. {ACM}, 2018.
\newblock \href {https://doi.org/10.1145/3167132.3167267}
  {\path{doi:10.1145/3167132.3167267}}.

\bibitem{CMP23}
Lu{\'{\i}}s Cruz{-}Filipe, Fabrizio Montesi, and Marco Peressotti.
\newblock A formal theory of choreographic programming.
\newblock {\em J. Autom. Reason.}, 67(2):21, 2023.
\newblock URL: \url{https://doi.org/10.1007/s10817-023-09665-3}, \href
  {https://doi.org/10.1007/S10817-023-09665-3}
  {\path{doi:10.1007/S10817-023-09665-3}}.

\bibitem{cutner22}
Zak Cutner, Nobuko Yoshida, and Martin Vassor.
\newblock Deadlock-free asynchronous message reordering in rust with multiparty
  session types.
\newblock In Jaejin Lee, Kunal Agrawal, and Michael~F. Spear, editors, {\em
  PPoPP '22: 27th {ACM} {SIGPLAN} Symposium on Principles and Practice of
  Parallel Programming, Seoul, Republic of Korea, April 2 - 6, 2022}, pages
  246--261. {ACM}, 2022.
\newblock \href {https://doi.org/10.1145/3503221.3508404}
  {\path{doi:10.1145/3503221.3508404}}.

\bibitem{DH12}
Romain Demangeon and Kohei Honda.
\newblock Nested protocols in session types.
\newblock In Maciej Koutny and Irek Ulidowski, editors, {\em {CONCUR} 2012 -
  Concurrency Theory - 23rd International Conference, {CONCUR} 2012, Newcastle
  upon Tyne, UK, September 4-7, 2012. Proceedings}, volume 7454 of {\em Lecture
  Notes in Computer Science}, pages 272--286. Springer, 2012.
\newblock \href {https://doi.org/10.1007/978-3-642-32940-1\_20}
  {\path{doi:10.1007/978-3-642-32940-1\_20}}.

\bibitem{giallorenzo2023}
Saverio Giallorenzo, Fabrizio Montesi, and Marco Peressotti.
\newblock Choral: Object-oriented choreographic programming.
\newblock {\em {ACM} Trans. Program. Lang. Syst.}, 46(1):1:1--1:59, 2024.
\newblock \href {https://doi.org/10.1145/3632398} {\path{doi:10.1145/3632398}}.

\bibitem{GHM24}
Eva Graversen, Andrew~K. Hirsch, and Fabrizio Montesi.
\newblock Alice or bob?: Process polymorphism in choreographies.
\newblock {\em J. Funct. Program.}, 34, 2024.
\newblock URL: \url{https://doi.org/10.1017/s0956796823000114}, \href
  {https://doi.org/10.1017/S0956796823000114}
  {\path{doi:10.1017/S0956796823000114}}.

\bibitem{hirsch2022}
Andrew~K. Hirsch and Deepak Garg.
\newblock Pirouette: Higher-order typed functional choreographies.
\newblock {\em Proc. ACM Program. Lang.}, 6(POPL):1--27, 2022.
\newblock \href {https://doi.org/10.1145/3498684} {\path{doi:10.1145/3498684}}.

\bibitem{honda2016}
Kohei Honda, Nobuko Yoshida, and Marco Carbone.
\newblock Multiparty asynchronous session types.
\newblock {\em J. {ACM}}, 63(1):9:1--9:67, 2016.
\newblock \href {https://doi.org/10.1145/2827695} {\path{doi:10.1145/2827695}}.

\bibitem{JB22}
Sung{-}Shik Jongmans and Petra van~den Bos.
\newblock A predicate transformer for choreographies - computing preconditions
  in choreographic programming.
\newblock In Ilya Sergey, editor, {\em Programming Languages and Systems - 31st
  European Symposium on Programming, {ESOP} 2022, Held as Part of the European
  Joint Conferences on Theory and Practice of Software, {ETAPS} 2022, Munich,
  Germany, April 2-7, 2022, Proceedings}, volume 13240 of {\em Lecture Notes in
  Computer Science}, pages 520--547. Springer, 2022.
\newblock \href {https://doi.org/10.1007/978-3-030-99336-8\_19}
  {\path{doi:10.1007/978-3-030-99336-8\_19}}.

\bibitem{KSZK23}
Shun Kashiwa, Gan Shen, Soroush Zare, and Lindsey Kuper.
\newblock Portable, efficient, and practical library-level choreographic
  programming.
\newblock {\em CoRR}, abs/2311.11472, 2023.
\newblock URL: \url{https://doi.org/10.48550/arXiv.2311.11472}, \href
  {https://arxiv.org/abs/2311.11472} {\path{arXiv:2311.11472}}, \href
  {https://doi.org/10.48550/ARXIV.2311.11472}
  {\path{doi:10.48550/ARXIV.2311.11472}}.

\bibitem{lanese2008}
Ivan Lanese, Claudio Guidi, Fabrizio Montesi, and Gianluigi Zavattaro.
\newblock Bridging the gap between interaction- and process-oriented
  choreographies.
\newblock In Antonio Cerone and Stefan Gruner, editors, {\em Sixth {{IEEE}}
  International Conference on Software Engineering and Formal Methods, {{SEFM}}
  2008, Cape Town, South Africa, 10-14 November 2008}, pages 323--332. {IEEE
  Computer Society}, 2008.
\newblock \href {https://doi.org/10.1109/SEFM.2008.11}
  {\path{doi:10.1109/SEFM.2008.11}}.

\bibitem{lugovic2023}
Lovro Lugović and Fabrizio Montesi.
\newblock Real-world choreographic programming: Full-duplex asynchrony and
  interoperability.
\newblock {\em The Art, Science, and Engineering of Programming}, 8(2), October
  2023.
\newblock URL:
  \url{http://dx.doi.org/10.22152/programming-journal.org/2024/8/8}, \href
  {https://doi.org/10.22152/programming-journal.org/2024/8/8}
  {\path{doi:10.22152/programming-journal.org/2024/8/8}}.

\bibitem{merro2004}
Massimo Merro and Davide Sangiorgi.
\newblock On asynchrony in name-passing calculi.
\newblock {\em Mathematical Structures in Computer Science}, 14(5):715--767,
  October 2004.
\newblock \href {https://doi.org/10.1017/S0960129504004323}
  {\path{doi:10.1017/S0960129504004323}}.

\bibitem{montesi2013phd}
Fabrizio Montesi.
\newblock {\em Choreographic {P}rogramming}.
\newblock Ph.{D}. thesis, IT University of Copenhagen, 2013.
\newblock
  \url{https://www.fabriziomontesi.com/files/choreographic-programming.pdf}.

\bibitem{montesi2023}
Fabrizio Montesi.
\newblock {\em Introduction to {{Choreographies}}}.
\newblock {Cambridge University Press}, {Cambridge}, 2023.

\bibitem{pohjola2022}
Johannes~Aman Pohjola, Alejandro {G{\'o}mez-Londo{\~n}o}, James Shaker, and
  Michael Norrish.
\newblock Kalas: {{A Verified}}, {{End-To-End Compiler}} for a {{Choreographic
  Language}}.
\newblock In June Andronick and Leonardo de~Moura, editors, {\em 13th
  {{International Conference}} on {{Interactive Theorem Proving}}, {{ITP}}
  2022, {{August}} 7-10, 2022, {{Haifa}}, {{Israel}}}, volume 237 of {\em
  {{LIPIcs}}}, pages 27:1--27:18. {Schloss Dagstuhl - Leibniz-Zentrum f\"ur
  Informatik}, 2022.
\newblock \href {https://doi.org/10.4230/LIPIcs.ITP.2022.27}
  {\path{doi:10.4230/LIPIcs.ITP.2022.27}}.

\bibitem{DGGLM16}
Mila~Dalla Preda, Maurizio Gabbrielli, Saverio Giallorenzo, Ivan Lanese, and
  Jacopo Mauro.
\newblock Dynamic choreographies: Theory and implementation.
\newblock {\em Log. Methods Comput. Sci.}, 13(2), 2017.
\newblock \href {https://doi.org/10.23638/LMCS-13(2:1)2017}
  {\path{doi:10.23638/LMCS-13(2:1)2017}}.

\bibitem{qiu2007a}
Zongyan Qiu, Xiangpeng Zhao, Chao Cai, and Hongli Yang.
\newblock Towards the theoretical foundation of choreography.
\newblock In {\em Proceedings of the 16th International Conference on {{World
  Wide Web}}}, pages 973--982, {Banff Alberta Canada}, May 2007. {ACM}.
\newblock \href {https://doi.org/10.1145/1242572.1242704}
  {\path{doi:10.1145/1242572.1242704}}.

\bibitem{scharf2006}
Michael Scharf and Sebastian Kiesel.
\newblock Head-of-line {{Blocking}} in {{TCP}} and {{SCTP}}: {{Analysis}} and
  {{Measurements}}.
\newblock In {\em Proceedings of the {{Global Telecommunications Conference}},
  2006. {{GLOBECOM}} '06, {{San Francisco}}, {{CA}}, {{USA}}, 27 {{November}} -
  1 {{December}} 2006}. {IEEE}, 2006.
\newblock \href {https://doi.org/10.1109/GLOCOM.2006.333}
  {\path{doi:10.1109/GLOCOM.2006.333}}.

\bibitem{shen2023a}
Gan Shen, Shun Kashiwa, and Lindsey Kuper.
\newblock Haschor: Functional choreographic programming for all (functional
  pearl).
\newblock {\em Proc. {ACM} Program. Lang.}, 7({ICFP}):541--565, 2023.
\newblock \href {https://doi.org/10.1145/3607849} {\path{doi:10.1145/3607849}}.

\bibitem{VHEZ21}
Malte Viering, Raymond Hu, Patrick Eugster, and Lukasz Ziarek.
\newblock A multiparty session typing discipline for fault-tolerant
  event-driven distributed programming.
\newblock {\em Proc. {ACM} Program. Lang.}, 5({OOPSLA}):1--30, 2021.
\newblock \href {https://doi.org/10.1145/3485501} {\path{doi:10.1145/3485501}}.

\bibitem{wang21}
Stephanie Wang, Eric Liang, Edward Oakes, Benjamin Hindman, Frank~Sifei Luan,
  Audrey Cheng, and Ion Stoica.
\newblock Ownership: {A} distributed futures system for fine-grained tasks.
\newblock In James Mickens and Renata Teixeira, editors, {\em 18th {USENIX}
  Symposium on Networked Systems Design and Implementation, {NSDI} 2021, April
  12-14, 2021}, pages 671--686. {USENIX} Association, 2021.
\newblock URL:
  \url{https://www.usenix.org/conference/nsdi21/presentation/cheng}.

\end{thebibliography}

\clearpage

\appendix

\section{Well-formedness}\label{sec:well-formedness}
\begin{center}

    \begin{align*}
        &\fv(0) = \emptyset\\
        &\fv(\{\,C\,\};\,C') = \fv(C) \cup \fv(C')\\
        &\fv(\CCom{\p}{e}{\q}{x}{l,t};\,C') = \fv(e) \cup \fv(C') \setminus \{\q.x\}\\
        &\fv(\CPure{x}{\p}{e}{l,t};\,C') = \fv(e) \cup \fv(C') \setminus \{\p.x\}\\
        &\fv(\CIf{e}{\p}{C_1}{C_2}{l,t};\,C') =\\ 
        &\quad \fv(e) \cup \fv(C_1) \cup \fv(C_2) \cup \fv(C')\\
        &\fv(\CCall{X}{\vx\p,\vx a}{l,t};\,C') = \{ \p.x\ |\ \p.x \in \vx a \} \cup \fv(C')\\
        &\fv(\CCalling{\vx\p}{X}{\vx\q, \vx a}{C}{l,t}) = \{ \p.x\ |\ \p.x \in \vx a \} \cup \fv(C')\\
        &\fv(I;\,C) = \fv(C) \ \hbox{otherwise.}\\
        \end{align*}

\begin{prooftree}
    \noLine
    \AxiomC{$\pn(C) \subseteq \dom(\Sigma)$ \qquad $\pn(C) \subseteq \dom(K)$ \qquad $\fv(C) = \emptyset$}
    \UnaryInfC{ $\keys(C)$ distinct \qquad $\forall (l,t) \in \keys(C),\, t \ne \Tok$}
    \noLine
    \UnaryInfC{$\forall I_1,I_2 \in \stats(C)$, if $\key(I_1) \prec \key(I_2)$ then $I_1 = l_1,t_1:\vx\q.X(\vx\p,\vx a)\{C'\}$ and $I_2 \in \stats(C')$}
    \noLine
    \UnaryInfC{$(X(\vx\p, \vx{\p.x}) = C) \wf$ for each $(X(\vx\p, \vx{\p.x}) = C) \in \mathscr C$ \qquad $\langle I, K\rangle \wf$ for each $I \in \stats(C)$}
    \RightLabel{\textsc{C-WF}}
    \UnaryInfC{ $\langle\mathscr C, C, \Sigma, K\rangle \wf$ }
\end{prooftree}

\begin{prooftree}
    \AxiomC{$\exists! v,\,(l,\tau,v) \in K(\q)$}
    \RightLabel{\textsc{C-WF-Recv}}
    \UnaryInfC{ $\langle \CRecv{\p}{\q}{x}{\Line,\tau},\, {K} \rangle \wf$ }
\end{prooftree}

\begin{prooftree}
    \AxiomC{$(l,\tau,L) \in K(\q)$}
    \RightLabel{\textsc{C-WF-OnSelect}}
    \UnaryInfC{ $\langle \CRecvSelect{\p}{\q}{L}{\Line,\tau},\, {K} \rangle \wf$ }
\end{prooftree}

\begin{prooftree}
    \AxiomC{}
    \RightLabel{\textsc{C-WF-Compute}}
    \UnaryInfC{ $\langle \CPure{x}{\p}{e}{\Line,\tau},\, {K} \rangle \wf$ }
\end{prooftree}

\begin{prooftree}
    \AxiomC{$C_1,C_2$ contain no runtime terms}
    \RightLabel{\textsc{C-WF-If}}
    \UnaryInfC{ $\langle \CIf{e}{\p}{C_1}{C_2}{\Line,\tau},\, {K} \rangle \wf$ }
\end{prooftree}

\begin{prooftree}
    \AxiomC{ }
    \RightLabel{\textsc{C-WF-Block}}
    \UnaryInfC{ $\langle \Block{C_1},\, {K} \rangle \wf$ }
\end{prooftree}

\begin{prooftree}
    \noLine
    \AxiomC{ $(X(\q_1,\dots,\q_n, \q^1.x_1, \dots, q^m.x_m) = C) \in \mathscr C$}
    \UnaryInfC{ $\p_1,\dots,\p_n$ distinct \qquad $\forall i \le n, j \le m,$ if $\pn(a_j) = \p_i$ then $\q^j = \q_i$ }
    \RightLabel{\textsc{C-WF-Call}}
    \UnaryInfC{ $\langle \CCall{X}{\p_1,\dots,\p_n, a_1,\dots,a_m}{\Line,\tau},\, {K} \rangle \wf$ }
\end{prooftree}

\end{center}

\section{EPP Theorem}\label{sec:epp-theorem-proof}

\changed{}{Proving the EPP Theorem requires first establishing several basic properties about choreographies. \Cref{lem:subst-local,lem:subst-branching} are properties of substitution: \Cref{lem:subst-local} states that local variables at a process $\p$ are projected to local variables at that same location in the network and \Cref{lem:subst-branching} states that substitution preserves the $(\Extends)$ relation. \Cref{lem:proj-seq} states that projection preserves sequential composition---a modified version of Lemma 8.17 from Montesi~\cite{montesi2023}.}

\begin{lemma}\label{lem:subst-local}
  $$\EPP{C[\p.x \mapsto v]}{\r} =
  \begin{cases}
    \EPP{C}{\p}[x \mapsto v] & \hbox{if $\r = \p$}\\
    \EPP{C[\p.x \mapsto v]}{\q} = \EPP{C}{\q} & \hbox{otherwise.}
  \end{cases}
  $$
\end{lemma}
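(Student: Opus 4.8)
The plan is to prove the statement by structural induction on the choreography $C$, after first disposing of an auxiliary expression-level claim. Concretely, I would first show that $\EPP{e[\p.x \mapsto v]}{\r}$ equals $\EPP{e}{\r}[x \mapsto v]$ when $\r = \p$ and equals $\EPP{e}{\r}$ otherwise, by a routine induction on $e$: for an atom the only interesting case is the variable $\p.x$ itself, which substitution rewrites to $\At{v}{\p}$, and projecting at $\p$ turns $\p.x$ into $x$ and $\At{v}{\p}$ into $v$ (so the two sides agree), while projecting at any other role turns both into $\bot$ and leaves every other atom untouched; for a function application $f(\vx e)$ both projection and substitution act homomorphically, so the inductive hypotheses compose. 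The very same reasoning applies verbatim to the argument lists $\vx a$ of procedure calls, since these are lists of atoms.

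For the main induction I would case on the top-level shape of $C$. The base case $C = 0$ is immediate, as neither operation changes it. For a block, I would use $\EPP{\{C_1\};\,C_2}{\r} = \{\EPP{C_1}{\r}\};\,\EPP{C_2}{\r}$ together with the fact that substitution distributes over sequential and block composition, and then invoke the inductive hypotheses on $C_1$ and $C_2$. Every instruction case $I;\,C'$ follows the same template: unfold the definition of $\EPP{-}{\r}$ for that instruction, push the substitution inward, appeal to the expression claim above for the payload/guard/arguments, and appeal to the inductive hypothesis for the continuation $C'$ (and for the branches $C_1,C_2$ of a conditional, or the body of a call-in-progress $\CCalling{\vx\q}{X}{\vx\p,\vx a}{C_1}{l,t}$). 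The conditional case needs one extra ingredient, namely that substitution commutes with the merge operator: $(P_1 \Merge P_2)[x \mapsto v] = (P_1[x \mapsto v]) \Merge (P_2[x \mapsto v])$, with the disjointness side condition preserved because selection labels contain no variables; this is immediate from the definition of $\Merge$ by induction on its derivation.

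The one genuinely delicate point, and the place I expect to spend the most care, is the interaction of $[\p.x \mapsto v]$ with binders: receives, local computations, procedure parameters, and explicit blocks all introduce fresh bound names, and both the semantic substitution on choreographies and the substitution on the right-hand side must be capture-avoiding with respect to them, while projection introduces the matching binders on the process side. I would discharge this by adopting the standard Barendregt convention that bound names are chosen distinct from $x$ and from the names occurring in $v$, so that substitution descends uniformly into every continuation and the shadowing subcase (a receiver at $\p$ binding the very name $x$) never arises; alternatively, one observes that the lemma is only ever invoked --- in rules \textsc{C-Recv}, \textsc{C-Compute}, \textsc{C-First}, and the like --- on terms in which $\p.x$ is genuinely free. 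With that convention fixed, each remaining case reduces to mechanical unfolding plus the inductive hypothesis, and no further obstacle remains.
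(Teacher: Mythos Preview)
The paper states this lemma without proof, treating it as a routine property of substitution needed for the EPP Theorem. Your structural-induction plan is the natural way to establish it and looks correct: the expression-level auxiliary claim, the case analysis on the top-level shape of $C$, the compatibility of substitution with $\Merge$ (with the disjointness side condition preserved because selection labels carry no variables), and the appeal to the Barendregt convention to avoid capture at binders are exactly the expected ingredients. There is nothing in the paper to compare your argument against, and no gap is apparent in your outline.
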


\begin{lemma}\label{lem:subst-branching}
  $P \Extends Q$ implies $P[x \mapsto v] \Extends Q[x \mapsto v]$.
\end{lemma}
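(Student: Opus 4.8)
The plan is to prove the statement by induction on the derivation of $P \Extends Q$, following the four rules for $\Extends$ in \Cref{fig:proc-keys}; in each case I check that (capture-avoiding) substitution commutes with the relevant term constructor and then invoke the induction hypothesis on the sub-derivations. For the base case $0 \Extends 0$ there is nothing to do, since $0[x \mapsto v] = 0 \Extends 0 = 0[x \mapsto v]$.

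For the sequencing case, suppose $(P_1;\,P_2) \Extends (Q_1;\,Q_2)$ was derived from $P_i \Extends Q_i$ for $i = 1,2$. Substitution distributes over sequencing, so $(P_1;\,P_2)[x\mapsto v] = P_1[x\mapsto v];\,P_2[x\mapsto v]$, and likewise for $Q$ (any binder in $P_1$ clashing with $x$ or $v$ is $\alpha$-renamed first; by the definition of $\Extends$, $P_1$ and $Q_1$ have the same binding structure, so the same renaming applies on the $Q$ side). The induction hypothesis gives $P_i[x\mapsto v] \Extends Q_i[x\mapsto v]$, and re-applying the sequencing rule yields the claim; the block rule, if treated separately, is handled identically. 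The conditional case $(\PIf{e}{P_1}{P_2}) \Extends (\PIf{e}{Q_1}{Q_2})$ is also analogous: substitution turns the common guard $e$ into the common guard $e[x\mapsto v]$ on both sides, the branches are handled by the induction hypothesis, and the conditional rule closes the case.

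The one case that is not purely structural — and which I expect to be the main, though still routine, obstacle, since it requires unfolding $\Merge$ rather than appealing to the induction hypothesis — is $I_1 \Extends I_2$. If $I_1 = I_2$, then $I_1[x\mapsto v] = I_2[x\mapsto v]$ and we are done. Otherwise $I_1 = I_1 \Merge I_2$, where $I_1$ and $I_2$ are branching terms $\PBranch{l_i,\tau_i,L_i}{P_i}{i \in \mathcal I}$ and $\PBranch{l_j,\tau_j,L_j}{P_j}{j \in \mathcal J}$ with $\Disjoint{\{L_i : i \in \mathcal I\}}{\{L_j : j \in \mathcal J\}}$. The key observation is that $\Merge$ commutes with substitution: substitution on a branching term acts pointwise on the continuations $P_k$ and leaves the line numbers $l_k$, tokens $\tau_k$, and labels $L_k$ untouched, so the disjointness side condition of $\Merge$ is preserved and $(I_1 \Merge I_2)[x\mapsto v] = I_1[x\mapsto v] \Merge I_2[x\mapsto v]$. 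Combining this with $I_1 = I_1 \Merge I_2$ gives $I_1[x\mapsto v] = (I_1 \Merge I_2)[x\mapsto v] = I_1[x\mapsto v] \Merge I_2[x\mapsto v]$, hence $I_1[x\mapsto v] \Extends I_2[x\mapsto v]$, completing the case and the induction.
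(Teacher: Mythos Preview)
The paper does not actually give a proof of this lemma; it is listed among the ``basic properties'' of substitution preceding the EPP Theorem and left without argument. Your induction on the derivation of $P \Extends Q$ is the natural proof, and the one non-trivial step---that $\Merge$ commutes with substitution because substitution leaves line numbers, tokens, and selection labels untouched, hence preserves the disjointness side condition---is exactly the point that needs to be made. There is nothing to compare against, and no gap in what you wrote.
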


\begin{lemma}\label{lem:proj-seq}
  $\EPP{I;\,C}{\q} = \EPP{I;\,0}{\q}\fatsemi\,\EPP{C}{\q}$ if $I$ does not have the form $\CSelect{\p}{\q}{L}{l,t}$ or $\CRecvSelect{\p}{\q}{L}{l,t}$.
\end{lemma}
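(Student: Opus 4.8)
The plan is to prove the statement by structural induction, in the slightly strengthened form: for every choreography $C_0$ whose top-level instruction sequence contains neither $\CSelect{\p}{\q}{L}{l,t}$ nor $\CRecvSelect{\p}{\q}{L}{l,t}$, one has $\EPP{C_0;\,C}{\q} = \EPP{C_0}{\q} \fatsemi \EPP{C}{\q}$; \Cref{lem:proj-seq} is the instance $C_0 = I;\,0$ (recall that $I;\,0$ has a single top-level instruction and that $C;\,0 = C$ for any $C$). First I would set the induction up on the structure of $C_0$, so that the procedure-call-in-progress case can appeal to the induction hypothesis on the procedure body. The proof then becomes a case analysis on the head instruction of $C_0$, unfolding the clauses of \Cref{fig:epp} and the defining equations of $\fatsemi$ on process terms ($0 \fatsemi P = P$, $(\eta;\,P)\fatsemi Q = \eta;\,(P \fatsemi Q)$, and $\{P\}\fatsemi Q = \{P\};\,Q$).

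For the communication, communication-in-progress, local-computation, conditional, procedure-call, and block instructions --- and also for selections and selections-in-progress in which $\q$ is not the recipient --- the argument is immediate and uniform: reading off \Cref{fig:epp}, $\EPP{I;\,0}{\q}$ is either a single projected process instruction $\eta$ prefixed to $\EPP{0}{\q} = 0$ (when $\q$ occurs in $I$) or simply $0$ (when it does not). In the first subcase $(\eta;\,0) \fatsemi \EPP{C}{\q} = \eta;\,(0 \fatsemi \EPP{C}{\q}) = \eta;\,\EPP{C}{\q}$, which is exactly $\EPP{I;\,C}{\q}$; in the second subcase $0 \fatsemi \EPP{C}{\q} = \EPP{C}{\q} = \EPP{I;\,C}{\q}$. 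For the conditional one additionally uses that the merge $\EPP{C_1}{\q} \Merge \EPP{C_2}{\q}$ is consumed as a single prefixed instruction, and for the block one uses $\EPP{\Block{C_1};\,C_2}{\r} = \{\EPP{C_1}{\r}\};\,\EPP{C_2}{\r}$ together with the $\{P\}\fatsemi Q$ clause.

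The two delicate points are precisely where the side condition and the induction come in. The selection cases must be excluded for a concrete reason: when $\q$ is the recipient, $\EPP{\CSelect{\p}{\q}{L}{l,t};\,C}{\q}$ places the continuation \emph{inside} the branch, as $\PBranch{l,t,L}{\EPP{C}{\q}}{}$, whereas $\EPP{\CSelect{\p}{\q}{L}{l,t};\,0}{\q}\fatsemi\EPP{C}{\q} = \PBranch{l,t,L}{0}{}\fatsemi\EPP{C}{\q}$ appends $\EPP{C}{\q}$ \emph{after} the branch, since $\fatsemi$ does not descend into branch bodies; the two terms differ, and symmetrically for $\CRecvSelect$. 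The genuinely recursive case is the call-in-progress $\CCalling{\vx\q}{X_i}{\vx\p,\vx a}{C_1}{l,t}$ for a role $\r$ that has already entered the procedure: here the left-hand side unfolds to $\EPP{C_1;\,C}{\r}$ and the right-hand side to $\EPP{C_1}{\r}\fatsemi\EPP{C}{\r}$, so the goal reduces to the strengthened statement for the body $C_1$. I expect this to be the main obstacle: to apply the induction hypothesis I need to know that $\EPP{C_1}{\r}$ does not begin with a branch term, i.e.\ that $C_1$ exhibits no pending selection toward $\r$ at its top level. This is exactly where well-formedness of the ambient configuration (\Cref{sec:well-formedness}) has to be invoked. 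The remaining subcases of $\CCalling$ (a not-yet-entered role, an uninvolved role) fall back into the immediate pattern above, and the overall shape of the argument mirrors Lemma~8.17 of Montesi~\cite{montesi2023}.
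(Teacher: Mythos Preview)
The paper states this lemma without proof (citing Montesi's Lemma~8.17), so there is no paper-side argument to compare against. Your case split on $I$ and the uniform treatment of the non-recursive cases are correct.

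You are right that the call-in-progress case --- projecting $\CCalling{\vx{\p'}}{X}{\vx\p,\vx a}{C_1}{l,t}$ at some $\q \in \vx\p \setminus \vx{\p'}$ --- is the crux, but your proposed resolution via well-formedness does not close the gap. Nothing in the rules of \Cref{sec:well-formedness} forbids the body $C_1$ of a well-formed call-in-progress from having a selection targeted at $\q$ as its first top-level instruction: if the declared procedure body in $\mathscr C$ opens with such a selection, then so does $C_1$ immediately after $\q$ enters via \textsc{C-First} or \textsc{C-Enter}. In that situation $\EPP{C_1;\,C}{\q}$ nests $\EPP{C}{\q}$ \emph{inside} the resulting branch body, whereas $\EPP{C_1}{\q}\fatsemi\EPP{C}{\q}$, under the $\fatsemi$ clauses you wrote down, appends it \emph{after} the branch --- and the two are not equal. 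So the lemma as written does not follow from your strengthening plus well-formedness. Either the process-level $\fatsemi$ (which the paper never explicitly defines) must be taken to push sequels into branch bodies --- making the side condition on $I$ vestigial --- or the lemma needs an extra hypothesis ruling out this shape of $C_1$, or the \textsc{C-Delay} case in the proof of \Cref{thm:epp} handles call-in-progress by a separate argument that bypasses \Cref{lem:proj-seq}.
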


\changed{}{We can now prove the EPP Theorem, which follows from standard proof techniques. The only novelty is (1) our use of substitution instead of local state to bind variables to values; (2) our use of well-formedness; and (3) a large number of extra cases in the soundness direction, generated by the \textsc{P-Delay} rule. Fortunately, all these extra cases are dispatched easily. We include a proof sketch below to show how this is done.}

  \epp*
  \begin{proof}[Completeness, sketch] The proof proceeds by induction on the derivation $\mathcal D$ that produces $\CCfg{C}{\Sigma}{K} \CStep{\p} \CCfg{C'}{\Sigma'}{K'}$. In most cases it suffices to let $N' = \EPP{C'}{}$. In the case of \textsc{C-If}, we find a network $N'$ such that $N' \ne \EPP{C'}{}$ but $N' \Extends \EPP{C'}{}$.

Because $O_3$ uses scoped variables, the proof requires \Cref{lem:subst-local}. The \textsc{C-Delay} rule is also slightly novel: if $I$ is not a selection at $\q$ then completeness requires \Cref{lem:proj-seq} and an application of \textsc{P-Delay}.
\end{proof}
\begin{proof}[Soundness, sketch]

  The proof proceeds by induction on the structure of the choreography $C$. The base case, $C \equiv 0$, is trivial. Otherwise, $C \equiv (I;\,C')$ and there is a distinct case for each instruction $I$. The key novelty in this proof is handling \textsc{P-Delay} and the frequent use of well-formedness to guarantee that $K$ does not contain certain messages. We consider two representative cases.

    \vspace{0.4cm}

    \noindent\textbf{Case 1.} Let $C = \CCom{\p}{e}{\q}{x}{l,\tau};\, C''$. Then, by the definition of EPP, $N$ has the form
    $$N = \p[\PSend{\q}{l}{\tau}{e};\, P]\ |\ \q[\PRecv{x}{l}{\tau};\,Q]\ |\ (N \setminus \p,\q).$$
    There are three sub-cases.

    \vspace{0.4cm}

    \noindent\emph{Case 1.1.} Assume $\r = \p$. In standard choreography models, this case can only proceed by \textsc{P-Send}. In our model, it could also proceed by \textsc{P-Delay}. Hence there are two sub-cases:

    \noindent\emph{Case 1.1.1.} (\textsc{P-Send}) Satisfied by letting $C' = \CRecv{\p}{\q}{x}{l,\tau};\,C''$.

    \noindent\emph{Case 1.1.2.} (\textsc{P-Delay})
    By the induction hypothesis, there exists $C'''$ such that $N''' \Extends \EPP{C'''}{}$.
    The case is then satisfied by letting $C' = \CCom{\p}{e}{\q}{x}{l,\tau};\, C'''$.

    \vspace{0.4cm}

    \noindent\emph{Case 1.2.} Assume $\r = \q$. Again, we consider the two rules by which the case could proceed:

    \noindent\emph{Case 1.2.1.} (\textsc{P-Recv}) We must show that it is impossible for $\q$ to receive a message in $\CCfg{N}{\Sigma}{K}$. Since $C$ is well-formed, it cannot contain a communication-in-progress term $\CRecv{\p}{\q}{x}{l,\tau}$ with the integrity key $(l,\tau)$. Hence $M$ does not contain any messages of the form $(l,\tau,v)$.

    \noindent\emph{Case 1.2.2.} (\textsc{P-Delay}) This case proceeds similarly to the previous \textsc{P-Delay} case.

    \vspace{0.4cm}

    \noindent\emph{Case 1.3.} Assume $\r \notin \{\p,\q\}$. Follows from the induction hypothesis, as in Case 1.1.2.




    \vspace{0.4cm}

    \noindent\textbf{Case 2.} Let $C = \CIf{e}{\p}{C_1}{C_2}{l,\tau};\, C_3$. Then $N$ has the form
    $$N = \p[\PIf{e}{P_1}{P_2};\,P_3]\ |\ \left(\prod_{q_i \in \vx\q} \q_i[\PBranch{l_j,\tau_j,L_j}{Q_{i,j}}{j \in \mathcal I};\,Q_i] \right)\ |\ (N \setminus \p, \vx\q).$$
    Consider the case where $\r = \q_i \in \vx\q$ and $\CCfg{N}{\Sigma}{K} \PStep{\q} \CCfg{N'}{\Sigma'}{K'}$ proceeds by \textsc{P-OnSelect}. That is,
    $$\q_i[\PBranch{l_j,\tau_j,L_j}{Q_{i,j}}{j \in \mathcal I};\, Q_i] \PStep{\q_i} \q_i[Q_{i,k};\, Q_i]$$
    for some $k \in \mathcal J$. We must show that this case is impossible. Notice that the step can only occur if $(l,\tau_k,L_k) \in K(\q_i)$. Such a message can only occur if $\CRecvSelect{\p}{\q}{L_k}{l,\tau_k}$ occurs in $C_1,C_2,$ or $C_3$. Because $C$ is well-formed, $C_1$ and $C_2$ do not contain runtime terms; hence the term could only occur in $C_3$. But then $Q_i$ would contain a branch $(l,\tau_k,L_k) \Rightarrow Q'$ and $\keys(N|_{\q_i})$ would not be distinct; a contradiction of \Cref{lem:branching-keys} (5).
\end{proof}

\end{document}